\newif\iffocs
\newcommand{\lName}{1}
\newcommand{\donothing}[1]{#1}
\newcommand{\JACM}{\if\lName1\donothing{Journal of the {ACM}}\else{JACM}\fi}
\newcommand{\SICOMP}{\if\lName1\donothing{{SIAM} Journal on Computing}\else{SICOMP}\fi}
\newcommand{\ToC}{\if\lName1\donothing{Theory of Computing}\else{ToC}\fi}
\newcommand{\ToCGS}{\if\lName1\donothing{Theory of Computing Graduate Surveys}\else{ToC}\fi}
\newcommand{\TOCT}{\if\lName1\donothing{{ACM} Transactions on Computation Theory}\else{TOCT}\fi}
\newcommand{\CCjournal}{\if\lName1\donothing{Computational Complexity}\else{CC}\fi}
\newcommand{\CJTCS}{\if\lName1\donothing{Chicago Journal of Theoretical Computer Science}\else{CJTCS}\fi}
\newcommand{\TCS}{\if\lName1\donothing{Theoretical Computer Science}\else{TCS}\fi}
\newcommand{\IPL}{\if\lName1\donothing{Information Processing Letters}\else{IPL}\fi}
\newcommand{\JCSS}{\if\lName1\donothing{Journal of Computer and System Sciences}\else{JCSS}\fi}
\newcommand{\RSA}{\if\lName1\donothing{Random Structures and Algorithms}\else{RSA}\fi}
\newcommand{\JCTA}{\if\lName1\donothing{Journal of Combinatorial Theory, Series A}\else{JCTA}\fi}
\newcommand{\JCTB}{\if\lName1\donothing{Journal of Combinatorial Theory, Series B}\else{JCTB}\fi}
\newcommand{\PJM}{\if\lName1\donothing{Pacific Journal of Mathematics}\else{PJM}\fi}
\newcommand{\QIC}{\if\lName1\donothing{Quantum Information and Computation}\else{QIC}\fi}
\newcommand{\IJQI}{\if\lName1\donothing{International Journal of Quantum Information}\else{IJQI}\fi}
\newtheorem{theorem}{Theorem}
\newtheorem{lemma}[theorem]{Lemma}
\newtheorem{corollary}[theorem]{Corollary}
\newtheorem{definition}[theorem]{Definition}
\newtheorem{conjecture}[theorem]{Conjecture}
\newtheorem{claim}[theorem]{Claim}
\newtheorem*{question}{Main Question}
\theoremstyle{definition}
\newcommand{\be}{\begin{equation}}
\newcommand{\ee}{\end{equation}}
\newcommand{\eq}[1]{\hyperref[eq:#1]{(\ref*{eq:#1})}}
\renewcommand{\sec}[1]{\hyperref[sec:#1]{Section~\ref*{sec:#1}}}
\newcommand{\thm}[1]{\hyperref[thm:#1]{Theorem~\ref*{thm:#1}}}
\newcommand{\lem}[1]{\hyperref[lem:#1]{Lemma~\ref*{lem:#1}}}
\newcommand{\defn}[1]{\hyperref[def:#1]{Definition~\ref*{def:#1}}}
\newcommand{\prop}[1]{\hyperref[prop:#1]{Proposition~\ref*{prop:#1}}}
\newcommand{\cor}[1]{\hyperref[cor:#1]{Corollary~\ref*{cor:#1}}}
\newcommand{\fig}[1]{\hyperref[fig:#1]{Figure~\ref*{fig:#1}}}
\newcommand{\tab}[1]{\hyperref[tab:#1]{Table~\ref*{tab:#1}}}
\newcommand{\alg}[1]{\hyperref[alg:#1]{Algorithm~\ref*{alg:#1}}}
\newcommand{\app}[1]{\hyperref[app:#1]{Appendix~\ref*{app:#1}}}
\newcommand{\conj}[1]{\hyperref[conj:#1]{Conjecture~\ref*{conj:#1}}}
\newcommand{\chap}[1]{\hyperref[chap:#1]{Chapter~\ref*{chap:#1}}}
\newcommand{\clm}[1]{\hyperref[clm:#1]{Claim~\ref*{clm:#1}}}
\newcommand{\fct}[1]{\hyperref[fct:#1]{Fact~\ref*{fct:#1}}}
\newcommand{\qstn}[1]{\hyperref[qstn:#1]{Question~\ref*{qstn:#1}}}
\newcommand*\rel@kern[1]{\kern#1\dimexpr\macc@kerna}
\newcommand*\widebar[1]{%
  \begingroup
  \def\mathaccent##1##2{%
    \rel@kern{0.8}%
    \overline{\rel@kern{-0.8}\macc@nucleus\rel@kern{0.2}}%
    \rel@kern{-0.2}%
  }%
  \macc@depth\@ne
  \let\math@bgroup\@empty \let\math@egroup\macc@set@skewchar
  \mathsurround\z@ \frozen@everymath{\mathgroup\macc@group\relax}%
  \macc@set@skewchar\relax
  \let\mathaccentV\macc@nested@a
  \macc@nested@a\relax111{#1}%
  \endgroup
}
\renewcommand{\bar}{\widebar}
\newcommand{\eprint}[1]{{\small \upshape \tt \href{http://arxiv.org/abs/#1}{#1}}}
\newcommand{\B}{\{0,1\}}
\newcommand{\AND}{\textsc{AND}}
\newcommand{\Id}{\textsc{Id}}
\DeclareMathOperator{\D}{D}
\DeclareMathOperator{\R}{R}
\DeclareMathOperator{\Q}{Q}
\DeclareMathOperator{\RS}{RS}
\DeclareMathOperator{\bs}{bs}
\DeclareMathOperator{\fbs}{fbs}
\DeclareMathOperator{\s}{s}
\DeclareMathOperator{\supp}{supp}
\DeclareMathOperator{\Dom}{Dom}
\DeclareMathOperator*{\E}{\mathbb{E}}
\DeclareMathOperator{\compR}{compR}
\DeclareMathOperator{\noisyR}{noisyR}
\DeclareMathOperator{\cost}{cost}
\DeclareMathOperator{\height}{height}
\DeclareMathOperator{\tran}{tran}
\DeclareMathOperator{\sfR}{sfR}
\DeclareMathOperator{\Bernoulli}{Bernoulli}
\newcommand{\gapmaj}{\textsc{GapMaj}}
\newcommand{\triv}{\textsc{Triv}}
\DeclareMathOperator{\JS}{JS}
\DeclareMathOperator{\h}{h}
\DeclareMathOperator{\Ess}{S}
\DeclareMathOperator{\argmin}{argmin}
\newcommand{\tTheta}{\widetilde{\Theta}}
\newcommand{\bN}{\mathbb{N}}
\newcommand{\bE}{\mathbb{E}}
\newcommand{\na}{\textsc{na}}
\begin{document}

\title{A Tight Composition Theorem for the Randomized  
\iffocs
  \\ 
  Query Complexity of Partial Functions \\
  \large (Extended Abstract$^\dagger$)
\else
  Query Complexity of Partial Functions
\fi
}

\iffocs
    \author{
    \IEEEauthorblockN{Shalev Ben{-}David} 
    \IEEEauthorblockN{Eric Blais}
    \IEEEauthorblockA{David R. Cheriton School of Computer Science\\
    University of Waterloo\\
    Waterloo, Canada\\
    {\tt (shalev.b\,|\,eric.blais)@uwaterloo.ca}}
    }
    \date{}
\else
    \author{
    Shalev Ben{-}David\\
    \small University of Waterloo\\
    \small \texttt{shalev.b@uwaterloo.ca}
    \and
    Eric Blais\\
    \small University of Waterloo\\
    \small \texttt{eric.blais@uwaterloo.ca}
    }
\fi

\maketitle

\begin{abstract}
We prove two new results about the randomized
query complexity of composed functions.
First, we show that the randomized composition
conjecture is false: there are families of partial
Boolean functions $f$ and $g$ such that
$\R(f\circ g)\ll\R(f)\R(g)$. In fact,
we show that the left hand side can be polynomially
smaller than the right hand side
(though in our construction,
both sides are polylogarithmic
in the input size of $f$).

Second, we show that for all $f$ and $g$,
$\R(f\circ g)=\Omega(\noisyR(f)\R(g))$,
where $\noisyR(f)$ is a measure describing the
cost of computing $f$ on noisy oracle inputs.
We show that this composition theorem is the
strongest possible of its type:
for any measure $M(\cdot)$ satisfying
$\R(f\circ g)=\Omega(M(f)\R(g))$ for all $f$ and $g$,
it must hold that $\noisyR(f)=\Omega(M(f))$ for all $f$.
We also give a clean characterization
of the measure $\noisyR(f)$: it satisfies
$\noisyR(f)=\Theta(\R(f\circ\gapmaj_n)/\R(\gapmaj_n))$,
where $n$ is the input size of $f$
and $\gapmaj_n$ is the $\sqrt{n}$-gap majority
function on $n$ bits.
\end{abstract}

\iffocs
  \begin{IEEEkeywords}
  Query complexity; Randomized computation; Function composition; Noisy oracle; Gap Majority function
  \end{IEEEkeywords}
  
  {\let\thefootnote\relax\footnotetext{$^\dagger$\,The full version of the article is available on arXiv as report number \eprint{2002.10809}.}}
\else
    \clearpage
    {\small\tableofcontents}
    \clearpage
\fi

\section{Introduction}

In any computational model, one may ask the following
basic question: is computing a function $g$ on $n$ independent inputs
roughly $n$ times as hard as computing $g$ on a single input?
If so, a natural followup question arises: how hard is computing
some function $f\colon\B^n\to\B$ of the value of $g$ on $n$ inputs?
Can this be characterized in terms of the complexity
of the function $f$?

Query complexity is one of the simplest settings 
in which one can study these joint computation
questions. In query complexity, a natural conjecture is that for
any such functions $f$ and $g$, the cost of computing
$f$ on the value of $g$ on $n$ inputs is roughly the cost of computing $f$
times the cost of computing $g$. Indeed, using
$f\circ g$ to denote the composition of $f$ with $n$
copies of $g$, it is known that the deterministic query complexity (also known as the 
decision tree complexity) of composed functions satisfies $\D(f\circ g)=\D(f)\D(g)$
\cite{Tal13,Mon14}.
It is also known that the quantum query complexity (in the bounded-error setting) of 
composed functions satisfies $\Q(f\circ g)=\Theta(\Q(f)\Q(g))$
\cite{Rei11,LMR+11,Kim13}.

However, despite significant interest,
the situation for randomized query complexity
is not well understood, and it is currently unknown
whether $\R(f\circ g)=\tTheta(\R(f)\R(g))$ holds for all Boolean functions
$f$ and $g$. It is known that the upper bound of $\R(f\circ g)=O(\R(f)\R(g)\log\R(f))$ holds.
This follows from running an algorithm for $f$ on the outside,
and then using an algorithm for $g$ to answer each query made by the algorithm
for $f$. (The log factor in the bound is due to the need to amplify
the success probability of the algorithm for $g$ so that it has small error.)
The randomized composition conjecture in query complexity posits that there is a lower bound that 
matches this upper bound up to logarithmic factors;  this conjecture is the focus of the current work.

\begin{question}%\label{qstn:randomized_composition}
Do all Boolean functions $f$ and $g$ satisfy
$\R(f\circ g)=\Omega\big(\R(f)\R(g)\big)$?
\end{question}

Note that there are actually two different versions of this
question, depending on whether $f$ and $g$ are allowed to be
partial functions. A \emph{partial function} is a function
$f\colon S\to\B$ where $S$ is a subset of $\B^n$,
and a randomized algorithm computing it is only required
to be correct on the domain of $f$. (Effectively, the input string
is promised to be inside this domain.)
When composing partial functions $f$ and $g$,
we get a new partial function $f\circ g$,
whose domain is the set of strings
for which the computation of $f$ and of each copy of $g$
are all well-defined. Since partial functions are a generalization
of total Boolean functions, it is possible that the
composition conjecture holds for total functions but not
for partial functions. In this work, we will mainly focus
on the more general partial function setting;
when we do not mention anything about $f$ or $g$, they
should be assumed to be partial Boolean functions.

\subsection{Previous work}

\iffocs
  \subsubsection*{Direct sum and product theorems}
\else
  \paragraph{Direct sum and product theorems.}
\fi
Direct sum theorems and direct product theorems
study the complexity of $\Id\circ g$,
where $g$ is an arbitrary Boolean function
but $\Id\colon\B^n\to\B^n$ is the identity function.
These are not directly comparable to composition theorems,
but they are of a similar flavor.

Jain, Klauck, and Santha~\cite{JKS10} showed that randomized query complexity
satisfies a direct sum theorem. Drucker~\cite{Dru12} showed
that randomized query complexity also satisfies a
direct product theorem, which means that 
$\Id\circ g$ cannot be solved too quickly
even with small success probability. More recently,
Blais and Brody~\cite{BB19} proved a strong direct sum theorem, showing
that computing $n$ copies of $g$ can be even harder
for randomized query complexity than $n$ times the cost
of computing $g$ (due to the need for amplification).

\iffocs
  \subsubsection*{Composition theorems for other complexity measures}
\else
  \paragraph{Composition theorems for other complexity measures.}
\fi
Several composition theorems are known for measures
that lower bound $\R(f)$; as such, these theorems
can be used to lower bound $\R(f\circ g)$
in terms of some smaller measure of $f$ and $g$.

First, though it is not normally phrased this way, the composition
theorem for quantum query complexity~\cite{Rei11,LMR+11}
can be viewed
as a composition theorem for a measure which lower bounds
$\R(f)$, since $\Q(f)\le\R(f)$ for all $f$.
Interestingly, as a lower bound technique for $\R(f)$,
$\Q(f)$ turns out to be incomparable to the other
lower bounds on randomized query complexity for which
composition is known, meaning that this composition
theorem can sometimes be stronger than everything
we know how to do using classical techniques.

Tal~\cite{Tal13} and independently Gilmer, Saks,
and Srinivasan~\cite{GSS16} studied the composition
behavior of simple measures like sensitivity,
block sensitivity, and fractional block sensitivity.
The behavior turns out to be somewhat complicated,
but is reasonably well characterized in these works.

G\"{o}\"{o}s and Jayram \cite{GJ16} studied the composition
behavior of conical junta degree, also known as
approximate non-negative degree. This measure is
a powerful lower bound technique for randomized algorithms
and seems to be equal to $\R(f)$ for all but the most
artificial functions; however, G\"{o}\"{o}s and Jayram
were only able to prove a composition theorem for a variant
of conical junta degree, and the variant appears
to be weaker in some cases (or at least harder to use).

Ben-David and Kothari \cite{BK18} showed a composition theorem
for a measure they defined called randomized sabotage
complexity, denoted $\RS(f)$.
They showed that this measure is larger
than fractional block sensitivity, and incomparable
to quantum query complexity and conical junta degree.
It is also nearly quadratically related to $\R(f)$
for total functions.

\iffocs
  \subsubsection*{Composition theorems with a loss in \texorpdfstring{$g$}{g}}
\else
  \paragraph{Composition theorems with a loss in \texorpdfstring{$g$}{g}.}
\fi
There are also composition theorems are known that lower bound
$\R(f\circ g)$ in terms of $\R(f)$ and some smaller measure
of $g$. 

Ben{-}David and Kothari \cite{BK18}
also showed that $\R(f\circ g)=\Omega(\R(f)\RS(g))$,
for the randomized sabotage complexity measure $\RS(g)$ mentioned above.
Anshu et al.~\cite{AGJ+18} showed that
$\R(f\circ g)=\Omega(\R(f)\R_{1/2-n^{-4}}(g))$,
where $\R_{1/2-n^{-4}}(g)$ is the randomized query complexity
of $g$ to bias $n^{-4}$.
These two results can also be used to
give composition theorems of the form
$\R(f\circ h\circ g)=\Omega(\R(f)\R(h)\R(g))$,
where $f$ and $g$ are arbitrary Boolean functions
but $h$ is a fixed small gadget
designed to break up any ``collusion'' between
$f$ and $g$. \cite{BK18}
proved such a theorem when $h$ is the index
function, while \cite{AGJ+18} proved it when
$h$ is the parity function of size $O(\log n)$.

Finally, Gavinsky, Lee, Santha, and Sanyal~\cite{GLSS19}
showed that $\R(f\circ g)=\Omega(\R(f)\bar{\chi}(g))$,
where $\bar{\chi}(g)$ is a measure they define.
They showed that $\bar{\chi}(g)=\Omega(\RS(g))$
and that $\bar{\chi}(g)=\Omega(\sqrt{\R(g)})$
(even for partial functions $g$), which
means their theorem also shows
$\R(f\circ g)=\Omega(\R(f)\sqrt{\R(g)})$.

\iffocs
  \subsubsection*{Composition theorems with a loss in \texorpdfstring{$f$}{f}}
\else
  \paragraph{Composition theorems with a loss in \texorpdfstring{$f$}{f}.}
\fi
There have been very few composition theorems
of the form $\R(f\circ g)=\Omega(M(f)\R(g))$ for some
measure $M(f)$. 
G\"{o}\"{o}s, Jayram, Pitassi, and Watson~\cite{GJPW18} showed that
$\R(\AND_n\circ g)=\Omega(n\R(g))$, which can be generalized
to $\R(f\circ g)=\Omega(\s(f)\R(g))$, where $\s(f)$
denotes the sensitivity of $f$.

Extremely recently, in work concurrent with this one,
Bassilakis, Drucker, G\"{o}\"{o}s, Hu, Ma, and Tan
\cite{BDG+20} showed that $\R(f\circ g)=\Omega(\fbs(f)\R(g))$, 
where $\fbs(f)$ is the fractional block sensitivity of $f$.
(This result also follows from our independent work in this paper.)

\iffocs
  \subsubsection*{A relational counterexample to composition}
\else
  \paragraph{A relational counterexample to composition.}
\fi
Gavinsky, Lee, Santha, and Sanyal~\cite{GLSS19}
showed that the randomized composition
conjecture is false when $f$ is allowed to
be a \emph{relation}.
Relations are generalizations of partial functions,
in which $f$ has non-Boolean output alphabet and there
can be multiple allowed outputs for each input string.
The authors exhibited a family of relations $f_n$
and a family of partial functions $g_n$
such that $\R(f_n)=\Theta(\sqrt{n})$, $\R(g_n)=\Theta(n)$,
but $\R(f_n\circ g_n)=\Theta(n) \ll n^{3/2}$.

This counterexample of Gavinsky, Lee, Santha, and Sanyal
does not directly answer the randomized composition conjecture
(which usually refers to Boolean functions only),
but it does place restrictions on the types of tools
which might prove it true, since it appears that most or all
of the
composition theorems mentioned above do not use the fact
that $f$ has Boolean outputs and apply equally well
when $f$ is a relation---meaning those techniques
cannot be used to prove the composition conjecture true
without major new ideas.

\subsection{Our results}

Our first result shows that the randomized
composition conjecture is false for partial functions.

\begin{theorem}\label{thm:counterexample}
There is a family of partial Boolean functions $f_n$
and a family of partial Boolean functions $g_n$ such that
$\R(f_n)\to\infty$ and $\R(g_n)\to\infty$ as $n\to\infty$, but
\[\R(f_n\circ g_n)=O\left(\R(f_n)^{2/3}\R(g_n)^{2/3}\log^{2/3}\R(f_n)\right).\]
\end{theorem}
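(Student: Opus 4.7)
The plan is to lift the relational counterexample of Gavinsky, Lee, Santha, and Sanyal~\cite{GLSS19} to the partial Boolean function setting. They exhibit a relation $F_n$ with $\R(F_n)=\Theta(\sqrt n)$, a partial function $G_n$ with $\R(G_n)=\Theta(n)$, and an algorithm showing $\R(F_n\circ G_n)=\Theta(n)$. The crucial numerical coincidence is $(\sqrt n)^{2/3}\cdot n^{2/3}=n$, so if we can convert $F_n$ from a relation into a partial Boolean function $f_n$ with $\R(f_n)=\tTheta(\sqrt n)$ while preserving $\R(f_n\circ g_n)=\tO(n)$ for a suitable partial function $g_n$, the theorem follows.

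For the conversion, I would use a cheat-sheet style gadget: augment the input of $F_n$ with one ``cheat block'' per possible relation output, and restrict the domain of $f_n$ to inputs in which exactly one cheat block is flagged as valid and that block certifies a legal output of $F_n$ on the relation part of the input. Define $f_n$'s value to be a designated bit of the validated block. Standard cheat-sheet lower-bound arguments should give $\R(f_n)=\tTheta(\R(F_n))=\tTheta(\sqrt n)$, because any algorithm must essentially solve the underlying relation to pinpoint the valid block, and then the block itself contributes only polylogarithmically.

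Take $g_n=G_n$ (possibly padded) and consider $f_n\circ g_n$. The algorithm runs in two phases. First, execute the GLSS composition algorithm for $F_n\circ G_n$ on the relation part of the composed input to produce a candidate relation output in $O(n)$ composed queries. Second, locate the corresponding cheat block and verify its validity by evaluating its $O(\polylog n)$ bits using sufficiently amplified runs of $g_n$'s algorithm. The total cost is $\tO(n)$, which matches the bound $O(\R(f_n)^{2/3}\R(g_n)^{2/3}\log^{2/3}\R(f_n))=\tO(n)$ up to polylogarithmic factors. The input size of $f_n$ is exponential in $\R(f_n)$ because of the cheat blocks, which is consistent with the claim in the abstract that $\R(f_n)$ and $\R(g_n)$ are merely polylogarithmic in the input length.

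The main technical obstacle is arranging the gadget so that all three requirements hold simultaneously: (i) the cheat blocks must be designed (e.g.\ using random masks) so that $\R(f_n)=\tOmega(\R(F_n))$ and the relation cannot be shortcut by clever queries spread across multiple blocks; (ii) the ``exactly one valid block'' condition must be a consistent partial-function promise under composition, and efficiently detectable by an algorithm that only sees $g_n$-blurred inputs; and (iii) to hit the theorem's precise $\log^{2/3}\R(f_n)$ factor rather than a generic polylog, the amplification budget used in the verification phase must be tightly managed, probably by exploiting that only one block needs to be checked rather than all of them. Calibrating the gadget so that all three conditions co-exist with the tight polylog exponent is where I expect the proof to be most delicate.
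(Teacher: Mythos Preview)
Your high-level plan --- convert the GLSS relational counterexample into a partial Boolean function via an indexing/cheat-sheet trick, keeping $g=\gapmaj$ --- is exactly the paper's approach. But the paper's construction is considerably simpler than what you sketch, and your proposal is vague precisely at the one step where the construction needs a real idea.

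The paper's outer function $\textsc{ApproxIndex}$ takes an address $a\in\B^k$ and a ternary array $x\in\{0,1,2\}^{2^k}$; the promise is that $x_b$ equals a fixed bit for every $b$ within Hamming distance $k/2-2\sqrt{k\log k}$ of $a$, and $x_b=2$ otherwise. There is no multi-bit cheat block, no validity flag, no random mask, and no verification phase: the algorithm just needs to land on any non-$2$ array cell. The obstacle you flag in (i) --- that an algorithm might shortcut the relation by probing random array cells --- is exactly what would break the naive version with gap $\sqrt{k}$, since then a constant fraction of the hypercube is within $k/2-\sqrt{k}$ of $a$ and a single random probe succeeds. The paper's fix is \emph{not} masking but simply widening the gap to $\Theta(\sqrt{k\log k})$, which shrinks the non-$2$ region to a $1/\poly(k)$ fraction of the array and kills random probing. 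Your proposal does not contain this idea.

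This gap enlargement is also the source of the $\log^{2/3}\R(f)$ factor, which you attribute to amplification in a verification phase. With the wider gap one gets $\R(f)=\Theta(\sqrt{k\log k})$ and $\R(g)=\Theta(k)$, while the composed algorithm uses $O(\log k)$ random probes per copy of $\gapmaj_k$ to attain per-bit bias $\Theta(\sqrt{(\log k)/k})$, then one exact $\gapmaj$ evaluation, for total cost $O(k\log k)$. The arithmetic $(\sqrt{k\log k})^{2/3}\cdot k^{2/3}\cdot(\log\sqrt{k\log k})^{2/3}=\Theta(k\log k)$ is then exact, with no slack to manage.
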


In this counterexample, $\R(f\circ g)$ is
polynomially smaller than what it was conjectured to be in the randomized composition conjecture.
However, this counterexample actually
uses functions $f$ and $g$ for which $\R(f)$ and
$\R(g)$ are logarithmic in the input
size of $f$. Therefore, the following slight weakening of the original randomized composition conjecture is still
viable.

\begin{conjecture}\label{conj:composition_log}
For all partial Boolean functions $f$ and $g$,
\[\R(f\circ g)=\Omega\left(\frac{\R(f)\R(g)}{\log n}\right),\]
where $n$ is the input size of $f$.
\end{conjecture}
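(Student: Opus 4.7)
The natural approach is to exploit the paper's composition lower bound $\R(f\circ g)=\Omega(\noisyR(f)\R(g))$ and reduce Conjecture~\ref{conj:composition_log} to a statement purely about the measure $\noisyR(\cdot)$. Concretely, it suffices to prove that for every partial Boolean function $f$ on $n$ bits,
\[ \noisyR(f)\;=\;\Omega\!\left(\frac{\R(f)}{\log n}\right), \]
since plugging this into the composition lower bound immediately yields $\R(f\circ g)=\Omega(\R(f)\R(g)/\log n)$. So the plan is to bound $\R(f)$ in terms of $\noisyR(f)$, losing at most a logarithmic factor.

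The easy direction $\noisyR(f)=O(\R(f)\log n)$ is the folklore noisy-simulation argument: run any $\R(f)$-query algorithm, and each time it would query a bit $x_i$, issue $O(\log\R(f))$ independent noisy queries and take a majority vote, so that by a union bound every oracle answer used is correct with high probability. The conjecture asserts that this logarithmic overhead cannot be reduced: a supposedly ``too good'' noisy-query algorithm $A$ of cost $T\ll \R(f)/\log n$ should yield a standard randomized algorithm of cost $o(\R(f))$, contradicting the definition of $\R(f)$. Equivalently, using the paper's characterization $\noisyR(f)=\Theta(\R(f\circ\gapmaj_n)/\R(\gapmaj_n))$, the task is to show that any algorithm solving $f\circ\gapmaj_n$ in $o\bigl(\R(f)\R(\gapmaj_n)/\log n\bigr)$ queries implies an $o(\R(f))$-query algorithm for $f$ itself.

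The strategy for this reduction is a careful simulation: given a noisy-oracle algorithm $A$ of cost $T$ for $f$, construct a clean-oracle algorithm $A'$ of cost $O(T\log n)$ whose transcript is statistically close to $A$'s. Query each coordinate $i$ that $A$ inspects at most $O(\log n)$ times on the clean oracle, and from those queries synthesize i.i.d.\ noisy samples to feed $A$; a coupling argument should then show that the simulated transcript has total variation $o(1)$ from the genuine one, so $A'$ inherits the correctness of $A$. The bookkeeping must ensure that even if $A$ revisits a coordinate many times, the simulator can continue to produce fresh noisy samples without exceeding its $O(\log n)$ budget of clean queries per coordinate, presumably by tracking the posterior over $x_i$ given the clean answers seen so far.

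The main obstacle, and the place where a genuinely new idea is required, is that $A$ can adaptively query a single coordinate arbitrarily many times and exploit the noise distribution in ways the simulator cannot match with only $O(\log n)$ clean queries per coordinate. For partial $f$ this is further complicated by the promise: random perturbations of a clean input $x$ may fall outside $\Dom(f)$, so a naive ``re-inject noise on a clean sample'' reduction is not promise-preserving. Handling adaptivity and promise-preservation simultaneously, while losing only a $\log n$ factor, appears to require fundamentally new techniques; in view of \thm{counterexample}, any successful argument must be tight enough to survive in regimes where $\R(f),\R(g)=\polylog(n)$, yet robust enough to apply to every partial function~$f$.
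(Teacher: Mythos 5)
This statement is a \emph{conjecture} in the paper, not a proved theorem; the authors explicitly leave it open and present it as the surviving ``$\log$-weakened'' form of the composition conjecture after their \thm{counterexample}. So there is no ``paper's own proof'' to compare against. That said, your proposal is worth evaluating as a proof plan.

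Your reduction is exactly the one the paper performs: via \thm{composition}, \conj{composition_log} would follow from $\noisyR(f)=\Omega(\R(f)/\log n)$, and the paper records this as the equivalent \conj{noisyR_log} (with the converse direction supplied by \lem{strongest}). So that part of your plan is sound and matches the paper's framing. Your sketch of the ``easy'' direction $\noisyR(f)=O(\R(f)\log n)$ is also fine, although the paper's bound is the slightly cleaner $\noisyR(f)\le\R(f)$ (by just using $\gamma=1$), so there is actually no logarithmic loss in that direction.

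The gap is in the proposed simulation, and it is not merely a matter of bookkeeping or ``fundamentally new techniques'' of the kind you gesture at. The paper identifies an obstruction that your sketch does not engage with: the statement $\noisyR(f)=\Omega(\R(f)/\log n)$ is \emph{false} for relations. The GLSS19 family has $\noisyR(f)=O(1)$ while $\R(f)=\Omega(\sqrt n)$. A simulation of the form ``replace noisy queries by a few clean queries per coordinate plus posterior tracking'' never uses the Boolean-valuedness of $f$; it would apply verbatim to a relation, and would therefore prove a false statement. So your simulation, even if all the adaptivity and union-bound issues were resolved, must already fail somewhere before the output is examined. This is the ``relational barrier'' the paper discusses at length, and it is the reason the authors could only prove the non-adaptive analogue (\thm{nonadaptive}), whose proof depends crucially on an entropy argument (\lem{error_entropy}) that has no analogue for relations. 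Separately, your worry about ``random perturbations of $x$ falling outside $\Dom(f)$'' is misplaced: in the noisy-oracle model the true input $x$ always stays in $\Dom(f)$; only the oracle \emph{answers} are noisy, so there is no promise-preservation issue of that sort. Finally, note a latent circularity: by \thm{noisyR_gapmaj}, a noisy query with parameter $1/\sqrt n$ is the same as one clean query into a $\gapmaj_n$ gadget, so ``simulate the noisy oracle cheaply with clean queries'' is essentially the same problem as lower bounding $\R(f\circ\gapmaj_n)$, i.e.\ the composition conjecture for the single inner function $\gapmaj_n$ — you have not reduced the problem, you have restated it.
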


Hence, even for partial functions, the composition story
is far from complete. This is in contrast to the setting in which
$f$ is a relation, where in the counterexample of~\cite{GLSS19}, 
the query complexity $\R(f\circ g)$ is
smaller than $\R(f)\R(g)$ by a polynomial factor
even relative to the input size.

\bigskip
Our second contribution is a new composition theorem
for randomized algorithms with a loss only in terms of $f$.

\begin{theorem}\label{thm:composition}
For all partial functions $f$ and $g$,
\[\R(f\circ g)=\Omega(\noisyR(f)\R(g)).\]
\end{theorem}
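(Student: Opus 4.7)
The plan is to prove \thm{composition} by a simulation-based reduction: from any $T$-query randomized algorithm $A$ for $f\circ g$ with $T=\R(f\circ g)$, I would construct an algorithm $B$ for $f$ in the noisy-oracle model that uses only $O(T/\R(g))$ noisy queries. Any correct $B$ must use at least $\noisyR(f)$ queries by definition, so this would force $T=\Omega(\noisyR(f)\cdot\R(g))$ as desired. The entire plan rests on choosing the noisy-oracle model carefully, so that each query to a bit $x_i$ of $f$'s input delivers exactly as much information about $x_i$ as a block of $\Theta(\R(g))$ ordinary queries to the $i$-th copy of $g$ can reveal. The $\gapmaj_n$ characterization advertised in the abstract strongly suggests that the right model is one in which each query returns a sample from a fixed two-point distribution whose total variation distance between the cases $x_i=0$ and $x_i=1$ is $\Theta(1)$.

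To build $B$, I would first invoke Yao's minimax principle and fix hard product distributions on the $g$-inputs conditioned on each outer bit $x_i$, so that any input to $f\circ g$ can be viewed as an outer string $x$ together with inner strings $z_i$ drawn from the posterior given $g(z_i)=x_i$. The algorithm $B$ then simulates $A$ online by lazy sampling: for each $i\in[n]$ it maintains a belief about $x_i$, and whenever $A$ requests a new bit inside copy $i$, $B$ samples the response from the posterior distribution of that bit given the current belief. Whenever $A$'s query count inside copy $i$ crosses a new multiple of $\R(g)$, $B$ issues one fresh noisy oracle query to $x_i$ and updates the $i$-th belief. Summing over $i$, the total noisy-query budget used is $O(T/\R(g))$, and $B$ outputs whatever $A$ outputs.

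Correctness of the reduction reduces to two facts: (i) per-block faithfulness, namely that one $\R(g)$-query block of $A$ transmits no more information about $x_i$ than one sample from the chosen noise channel, so that $B$'s lazily sampled answers are statistically indistinguishable from $A$'s true view on a random input; and (ii) control of error accumulation, so that the sum of per-copy total-variation errors across the $n$ copies remains a small constant. Both should follow once the noise model is tight: per-block faithfulness by a data-processing / KL-divergence comparison between any $\R(g)$-query decision tree on the hard distribution for $g$ and a single sample from the calibrated noise channel, and aggregation of errors by a standard hybrid argument across copies.

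The main obstacle I expect is the calibration step. One must pin down a single noise model (and hence a single definition of $\noisyR(f)$) that is simultaneously tight enough for the block-level coupling argument to lose only an $O(1)$ factor, well-defined enough to be a bona fide complexity measure on $f$, and extremal in the sense needed to match $\R(f\circ\gapmaj_n)/\R(\gapmaj_n)$. If the model is too optimistic the reduction breaks; if it is too pessimistic the measure $\noisyR(f)$ becomes trivially small. Getting this balance right---essentially, identifying $\gapmaj_n$ as the ``hardest'' noise-inducing gadget and designing $\noisyR(f)$ to exactly capture the information its outputs reveal---is what I would expect to be the conceptual and technical crux of the proof.
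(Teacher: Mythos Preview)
Your high-level strategy---simulate an algorithm $A$ for $f\circ g$ using noisy-oracle access to the bits of $f$'s input---is the same as the paper's. But the specific mechanism you propose has two genuine gaps, and fixing them is precisely where the paper's new ideas live.

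\textbf{The block-based simulation is not faithful.} You propose that $B$ issue one fixed-bias noisy query to $x_i$ only when $A$'s query count in copy $i$ crosses a multiple of $\R(g)$, and otherwise answer $A$'s queries by ``lazy sampling from the posterior given the current belief.'' But before the first such boundary, $B$ has made no noisy query to $x_i$ at all, so its belief about $x_i$ is the prior, and the bits it feeds $A$ are sampled from $(\mu_0+\mu_1)/2$ rather than from $\mu_{x_i}$. These answers are \emph{not} correlated with the true $x_i$, so $A$'s view under $B$'s simulation is not the same as $A$'s view on a real input; your ``per-block faithfulness'' claim (i) fails already at the first query. The paper avoids this by making a noisy oracle call for \emph{every} query $A$ makes, but with a variable bias parameter $\gamma$ chosen so that the returned bit has exactly the right marginal given the transcript so far (this is the \textsc{SingleBitSim} procedure). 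The key point is that the noisy-oracle model charges $\gamma^2$ for a call of bias $\gamma$, so a query on which $\mu_0$ and $\mu_1$ nearly agree costs almost nothing. The per-query cost is $\Theta(\Ess^2(\mu_0|_j,\mu_1|_j))$, and summing these gives the squared Hellinger distance between the full transcripts on $\mu_0$ versus $\mu_1$---not total variation, which does not behave well under this kind of chaining.

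\textbf{Standard Yao minimax is not strong enough.} Even if the simulation were fixed, your step (i) asserts that any $\R(g)$-query block reveals at most $O(1)$ bits about $x_i$. The hard distribution Yao gives you only promises that $\R(g)$ queries are needed for \emph{constant} advantage; it says nothing about whether, say, $\R(g)/100$ queries might already give advantage $1/2$ rather than $1/100$. This is exactly Shaltiel's phenomenon, and it breaks any accounting that tries to bound information gained proportionally to queries spent. The paper handles this by defining the Hellinger distinguishing cost and the measure $\sfR(g)$, and importing from the companion paper \cite{BB20a} a strengthened minimax theorem guaranteeing hard distributions $\mu_0,\mu_1$ for which $\h^2$ of the transcript grows at most linearly in the number of queries, uniformly across all query budgets. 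Only with such a distribution does the cost bound $\sum_i \h^2(\tran(A^i,\mu_0),\tran(A^i,\mu_1)) \le T/\sfR(g)$ go through. Your proposal flags ``calibration'' as the crux, but the actual crux is this Shaltiel-free minimax together with the variable-cost noisy oracle model; the $\gapmaj$ characterization is a downstream consequence, not the design principle.
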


Here $\noisyR(f)$ is a measure we introduce, which is defined
as the cost of computing $f$ when given noisy oracle
access to the input bits; for a full definition,
see
\iffocs
  the full version of the paper.
\else
  \defn{noisyR}. 
\fi
As it turns out, $\noisyR(f)$
has a very natural interpretation, as the following theorem
shows.

\begin{theorem}\label{thm:noisyR_gapmaj}
For all partial functions $f$, we have
\[\noisyR(f)=\Theta\left(\frac{\R(f\circ\gapmaj_n)}{n}\right),\]
where $n$ is the input size of $f$ and $\gapmaj_n$
is the majority function on $n$ bits with the promise that the
Hamming weight of the input is either
$\lceil\frac{n}{2}+\sqrt{n}\rceil$
or $\lfloor\frac{n}{2}-\sqrt{n}\rfloor$.
Note that $\R(\gapmaj_n)=\Theta(n)$.
\end{theorem}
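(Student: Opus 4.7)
The plan is to establish both directions of the theorem, $\R(f\circ\gapmaj_n) = O(n\cdot \noisyR(f))$ and $\R(f\circ\gapmaj_n) = \Omega(n\cdot \noisyR(f))$, via mutual simulation between the noisy-oracle model for $f$ and the randomized query model for $f\circ\gapmaj_n$. The key distributional fact underlying both simulations is that a uniformly random position within a $\gapmaj_n$ block is a weakly biased sample of the block's encoded majority value: it equals the correct bit with probability $\frac{1}{2}+\Theta(1/\sqrt{n})$. By Chernoff-type concentration, a majority vote over $\Theta(n)$ independent random positions in a block recovers the encoded bit with constant correctness probability, which is exactly the accuracy guarantee associated with one noisy-oracle query. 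Thus the cost of one ``noisy query'' corresponds, up to constants, to the cost of $n$ ordinary queries inside a $\gapmaj_n$ block.

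For the upper bound $\R(f\circ\gapmaj_n)=O(n\cdot \noisyR(f))$, I would take an optimal noisy-oracle algorithm $A$ for $f$ with cost $T=\noisyR(f)$ and build a randomized decision tree $B$ for $f\circ\gapmaj_n$. Each time $A$ issues a noisy query on bit $i$ of $f$'s input, $B$ samples $\Theta(n)$ uniformly random positions from the $i$-th $\gapmaj_n$ block (with fresh randomness each time) and feeds the majority vote back to $A$. By the observation above, these simulated answers are distributed as valid noisy-oracle responses for the string of block majorities, so running $A$ on them correctly computes $f\circ\gapmaj_n$. The total query cost is $O(n)\cdot T$, as required.

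For the lower bound $\R(f\circ\gapmaj_n)=\Omega(n\cdot \noisyR(f))$, I would take an optimal randomized decision tree $B$ for $f\circ\gapmaj_n$ with $T$ queries and convert it into a noisy-oracle algorithm $A$ for $f$. Given input $x\in\Dom(f)$, the algorithm $A$ simulates $B$ on a virtual input whose block majorities equal $x$. When $B$ makes a query inside the $i$-th block, $A$ must supply a bit distributed as a uniformly random bit from a $\gapmaj_n$-encoding of $x_i$; crucially, $A$ does not know $x_i$ and must obtain this sample using noisy-oracle access to $x$. The argument is to batch $B$'s queries within each block and use a single noisy query of $x_i$ to generate $\Theta(n)$ correctly distributed block samples (by post-processing one constant-bias noisy bit into $\Theta(n)$ weakly biased bits with the correct marginal, using extra internal randomness to place them within a Hamming-weight-$\lceil n/2\pm\sqrt{n}\rceil$ layout). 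Amortizing, $T$ queries by $B$ are answered using $O(T/n)$ noisy queries in total.

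The main obstacle will be the lower-bound simulation, where the challenge is that $B$ may distribute its queries across blocks in an unpredictable, adaptive, and correlated way, so $A$ cannot precommit a per-block noisy-query budget. The resolution is to defer the sampling of a block's noisy bit until $B$ first touches that block, and then to show that a single noisy-oracle query, combined with fresh internal randomness encoding the positions of the ``corrupted'' coordinates inside a $\gapmaj_n$-valid Hamming weight pattern, perfectly couples to a uniformly random $\gapmaj_n$-encoding. Verifying that the resulting distribution exactly matches the conditional distribution of a $\gapmaj_n$ block given its majority value (including handling sampling without replacement and the parity of $\lceil n/2+\sqrt{n}\rceil$ vs.\ $\lfloor n/2-\sqrt{n}\rfloor$) is the routine but delicate part; once this coupling is in place, the $\Omega(n\cdot\noisyR(f))$ lower bound follows because any algorithm using fewer than this many queries to $f\circ\gapmaj_n$ would yield a noisy-oracle algorithm for $f$ beating $\noisyR(f)$.
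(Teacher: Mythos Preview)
Your proposal has genuine gaps in both directions, stemming from a mismatch with the paper's noisy-oracle model. In the paper, a noisy query carries a \emph{chosen} bias parameter $\gamma\in[0,1]$ and costs $\gamma^2$; $\noisyR(f)$ is the infimum of $\sum\gamma^2$ over all calls. It is \emph{not} the number of constant-bias queries, and these two models are not obviously equivalent.

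\textbf{Upper bound $\R(f\circ\gapmaj_n)=O(n\cdot\noisyR(f))$.} Your simulation replaces every noisy query by $\Theta(n)$ random block-positions and a majority vote. That is correct only when the noisy query has constant bias (cost $\Theta(1)$). An optimal $\noisyR$ algorithm may issue many queries with tiny bias $\gamma\ll 1/\sqrt{n}$, each costing $\gamma^2$; your simulation would charge $\Theta(n)$ real queries for each, blowing up the cost by an unbounded factor. The paper's fix is a nontrivial structural lemma (the random-walk argument of \thm{sqrt_bias}) showing that, without loss, the noisy algorithm uses only biases $1$ and $1/\sqrt{n}$. Then a bias-$1$ call (cost $1$) is simulated by querying the entire block ($n$ queries), and a bias-$1/\sqrt{n}$ call (cost $1/n$) by a single random position (one query). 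Your plan is missing precisely this reduction.

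\textbf{Lower bound $\R(f\circ\gapmaj_n)=\Omega(n\cdot\noisyR(f))$.} Your ``one constant-bias noisy query per block generates $\Theta(n)$ correctly distributed samples'' does not work. A single constant-bias bit $b$ equals $x_i$ only with probability bounded away from $1$, so the encoding you produce is an encoding of the \emph{wrong} bit with constant probability; running $B$ on such blocks computes $f$ of a noisy string, not $f(x)$. There is no post-processing that turns one constant-bias bit into a sample from the exact distribution ``uniform $\gapmaj_n$-encoding of $x_i$,'' since the latter determines $x_i$. Moreover, even ignoring correctness, your amortization fails when $B$ touches many blocks with only a few queries each: you would pay cost $1$ per touched block, which can be $\Theta(T)$ rather than $\Theta(T/n)$. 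The paper does not attempt a direct simulation here at all; this direction is obtained as an immediate corollary of the general composition theorem $\R(f\circ g)=\Omega(\noisyR(f)\R(g))$ (whose proof uses the \textsc{OracleSim} protocol with variable-bias noisy calls and a squared-Hellinger cost analysis), specialized to $g=\gapmaj_n$.
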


In other words, $\noisyR(f)$ characterizes the cost of
computing $f$ when the inputs to $f$ are given as $\sqrt{n}$-gap
majority instances
(divided by $n$, so that $\noisyR(f)\le\R(f)$).
This means that our composition theorem reduces the randomized
composition problem on arbitrary $f$ and $g$ to the randomized
composition problem of $f$ with $\gapmaj_n$.

\begin{corollary}
For all partial functions $f$ and $g$, we have
\[\R(f\circ g)=\Omega\left(
\frac{\R(f\circ \gapmaj_n)}{\R(\gapmaj_n)}\cdot\R(g)\right),\]
where $n$ is the input size of $f$.
\end{corollary}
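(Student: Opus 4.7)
The plan is to simply chain together the two theorems that immediately precede this corollary, together with the stated identity $\R(\gapmaj_n)=\Theta(n)$. There is essentially no new content to prove; the corollary is stated precisely because it packages the two preceding results into a form that highlights the reduction of the composition problem to the single gadget $\gapmaj_n$.

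First, I would invoke \thm{composition} applied to $f$ and $g$ to obtain
\[\R(f\circ g)=\Omega\bigl(\noisyR(f)\,\R(g)\bigr).\]
Next, I would apply \thm{noisyR_gapmaj} to the partial function $f$, which gives
\[\noisyR(f)=\Theta\!\left(\frac{\R(f\circ\gapmaj_n)}{n}\right),\]
where $n$ is the input size of $f$. Finally, I would substitute the estimate $n=\Theta(\R(\gapmaj_n))$, which is recorded in the statement of \thm{noisyR_gapmaj}, to rewrite this as
\[\noisyR(f)=\Theta\!\left(\frac{\R(f\circ\gapmaj_n)}{\R(\gapmaj_n)}\right).\]
Plugging this into the previous displayed inequality yields exactly the claimed bound.

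Since this is a one-line deduction, there is no genuine obstacle: all of the work has already been done in \thm{composition} (the lower bound in terms of $\noisyR(f)$) and in \thm{noisyR_gapmaj} (the characterization of $\noisyR(f)$ via composition with gap majority). The only thing to be mildly careful about is that \thm{noisyR_gapmaj} gives a two-sided bound $\Theta(\cdot)$, so only the lower bound direction $\noisyR(f)=\Omega(\R(f\circ\gapmaj_n)/n)$ is actually needed here; and that the input size $n$ of $f$ in \thm{noisyR_gapmaj} matches the $n$ appearing in the corollary statement, which it does by construction.
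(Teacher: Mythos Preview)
Your proposal is correct and matches the paper's intended reasoning: the corollary is stated immediately after \thm{noisyR_gapmaj} without an explicit proof, precisely because it follows by chaining \thm{composition} with \thm{noisyR_gapmaj} and the identity $\R(\gapmaj_n)=\Theta(n)$, exactly as you describe.
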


These results hold even when $f$ is a relation.
We also note that the counterexamples to composition theorems---the 
one for partial functions in \thm{counterexample} and the relational
one in~\cite{GLSS19}---use the same function
$\gapmaj$ as the inner function $g$ (or close variants of it).
Therefore, there is a strong sense in which
$g=\gapmaj$ function is the only interesting case
for studying the randomized composition behavior of
$\R(f\circ g)$.

Next, we observe that our composition theorem is the strongest
possible theorem of the form $\R(f\circ g)=\Omega(M(f)\R(g))$
for any complexity measure $M$ of $f$. Formally, we have the following.

\begin{lemma}\label{lem:strongest}
Let $M(\cdot)$ be any positive-real-valued measure
of Boolean functions. Suppose that for all (possibly partial)
Boolean functions $f$ and $g$, we have
$\R(f\circ g)=\Omega(M(f)\R(g))$.
Then for all $f$, we have $\noisyR(f)=\Omega(M(f))$.
\end{lemma}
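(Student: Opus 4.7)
The plan is to derive the lemma as a direct corollary of \thm{noisyR_gapmaj} by instantiating the hypothesized composition theorem at the specific inner function $g=\gapmaj_n$.

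First, I would fix an arbitrary partial Boolean function $f$ on $n$ input bits, and let $g = \gapmaj_n$. Since $\gapmaj_n$ is itself a partial Boolean function, the hypothesis $\R(f\circ g) = \Omega(M(f)\R(g))$ applies and gives
\[
\R(f \circ \gapmaj_n) = \Omega\bigl(M(f)\cdot \R(\gapmaj_n)\bigr).
\]
Second, I would use the fact stated in \thm{noisyR_gapmaj} that $\R(\gapmaj_n) = \Theta(n)$, turning the bound into $\R(f\circ\gapmaj_n) = \Omega(M(f)\cdot n)$.

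Third, I would invoke \thm{noisyR_gapmaj} in the other direction: since $\noisyR(f) = \Theta(\R(f\circ \gapmaj_n)/n)$, dividing both sides of the previous bound by $n$ yields $\noisyR(f) = \Omega(M(f))$, as required.

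There is essentially no obstacle: the entire content of the lemma is already packaged inside \thm{noisyR_gapmaj}, and the argument only uses that $\gapmaj_n$ is an admissible choice of partial function $g$ in the hypothesis. The one thing to double-check is that the hypothesis of the lemma is quantified over partial functions (which it is, per the statement), so that plugging in $\gapmaj_n$ is legitimate; no further calculation is needed.
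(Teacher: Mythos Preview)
Your proposal is correct and matches the paper's own proof essentially verbatim: instantiate the hypothesis with $g=\gapmaj_n$, use $\R(\gapmaj_n)=\Theta(n)$, and then apply \thm{noisyR_gapmaj} to convert $\R(f\circ\gapmaj_n)/n$ into $\noisyR(f)$.
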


\begin{proof}
By \thm{noisyR_gapmaj}, we have
\[n\cdot\noisyR(f)=\Omega\big(\R(f\circ\gapmaj_n)\big),\]
where $n$ in the input size of $f$.
Now, by our assumption on $M(\cdot)$, taking $g = \gapmaj_n$ we obtain
{
\iffocs
\small
\fi
\[\R(f\circ\gapmaj_n)=\Omega\big(M(f)\R(\gapmaj_n)\big)
=\Omega\big(M(f)\cdot n\big).\]
}
Hence $\noisyR(f)=\Omega(M(f))$, as desired.
\end{proof}

The natural next step is to study the measure
$\noisyR(f)=\R(f\circ\gapmaj_n)/n$.
We observe in 
\iffocs
  the full version of the paper
\else
  \lem{noisyR_fbs}
\fi
that $\noisyR(f)=\Omega(\fbs(f))$.
However, we believe that a much
stronger lower bound should be possible.
The following conjecture is equivalent to
\conj{composition_log}.

\begin{conjecture}[Equivalent to \conj{composition_log}]
\label{conj:noisyR_log}
For all (possibly partial) Boolean functions $f$,
\[\noisyR(f)=\Omega\left(\frac{\R(f)}{\log n}\right).\]
\end{conjecture}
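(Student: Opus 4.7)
The plan is to prove the equivalence asserted in the conjecture's header by establishing both implications separately, each of which reduces to a one-line substitution using a previously stated theorem.

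For the forward direction \conj{noisyR_log} $\Rightarrow$ \conj{composition_log}, assume $\noisyR(f) = \Omega(\R(f)/\log n)$ for every partial $f$ of input size $n$. Then for any $f$ and $g$, \thm{composition} gives $\R(f \circ g) = \Omega(\noisyR(f)\R(g))$, and substituting the hypothesis immediately yields $\R(f \circ g) = \Omega(\R(f)\R(g)/\log n)$, which is exactly \conj{composition_log}.

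For the reverse direction \conj{composition_log} $\Rightarrow$ \conj{noisyR_log}, fix $f$ with input size $n$ and apply the hypothesized composition lower bound with the specific inner function $g = \gapmaj_n$. Since $\R(\gapmaj_n) = \Theta(n)$ by \thm{noisyR_gapmaj}, this produces $\R(f \circ \gapmaj_n) = \Omega(n\R(f)/\log n)$. But \thm{noisyR_gapmaj} also says $\R(f \circ \gapmaj_n) = \Theta(n \cdot \noisyR(f))$, so dividing through by $n$ yields $\noisyR(f) = \Omega(\R(f)/\log n)$, which is \conj{noisyR_log}.

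The equivalence itself presents no real obstacle once \thm{composition} and \thm{noisyR_gapmaj} are available; the argument is a two-line substitution in each direction. The substantive mathematical work has already been done in proving those two theorems, and in particular in establishing the $\gapmaj$-based characterization of $\noisyR$. What the equivalence really highlights is where the remaining difficulty sits: proving either conjecture reduces exactly to controlling $\R(f \circ \gapmaj_n)$ up to a $\log n$ factor, which is consistent with the authors' earlier observation that $\gapmaj$ is essentially the only inner gadget for which composition is genuinely in doubt.
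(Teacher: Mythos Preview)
Your proposal is correct and essentially matches the paper's argument. The paper phrases the reverse direction as an application of \lem{strongest} with $M(f)=\R(f)/\log n$, but the proof of that lemma is exactly what you wrote: plug in $g=\gapmaj_n$ and use \thm{noisyR_gapmaj}; so you have simply inlined the lemma rather than cited it.
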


The equivalence of the two conjectures follows from
\thm{composition} in one direction, and from
\lem{strongest} in the other direction
(taking $M(f)= \R(f)/\log n$).

One major barrier for proving
\conj{noisyR_log} is that it is false for relations.
Indeed, the family of relations $f$ from \cite{GLSS19}
has $\noisyR(f)=O(1)$ and $\R(f)=\Omega(\sqrt{n})$.
Any lower bound $M(\cdot)$ for $\noisyR(\cdot)$
must therefore either be specific to functions
(and not work for relations), or else must
satisfy $M(f)=O(1)$ for that family of relations,
even though $\R(f)=\Omega(\sqrt{n})$
(which means $M(f)$ is a poor
lower bound on $\R(f)$, at least for some relations).

We are able to overcome this ``relational barrier''
for proving $\noisyR(f)$ lower bounds in the setting
of non-adaptive algorithms. Let
$\R^\na(f)$ denote the non-adaptive randomized
query complexity of $f$ and let
$\noisyR^\na(f)$ denote the non-adaptive version
of $\noisyR(f)$. Then for the family of relations $f$
from \cite{GLSS19}, it is still the case that
$\noisyR^\na(f)=O(1)$ and $\R^\na(f)=\Omega(\sqrt{n})$.
Despite this relational barrier, we have the following
theorem for the non-adaptive setting.

\begin{theorem}\label{thm:nonadaptive}
For all (possibly partial) Boolean functions $f$,
we have $\noisyR^\na(f)=\Theta(\R^\na(f))$.
\end{theorem}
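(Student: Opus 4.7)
The plan is to route through fractional block sensitivity, proving separately that $\noisyR^\na(f) = \Omega(\fbs(f))$ via a Hellinger tensorization argument, and that $\R^\na(f) = O(\fbs(f))$ for partial Boolean $f$ via LP duality and Yao's minimax; combining these yields $\noisyR^\na(f) = \Omega(\R^\na(f))$. The easy direction $\noisyR^\na(f) = O(\R^\na(f))$ follows by simulating each query of a non-adaptive algorithm for $f$ with a $\gapmaj_n$ subroutine using $\Theta(n)$ noisy queries and appropriate error amplification.

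For the lower bound $\noisyR^\na(f) = \Omega(\fbs(f))$, I would fix $x^\ast \in \Dom(f)$ with $\fbs_{x^\ast}(f) = \fbs(f)$ and let $\{(B_j, \lambda_j^\ast)\}_j$ be an LP-optimal fractional packing of sensitive blocks at $x^\ast$, so that $y_j := (x^\ast)^{B_j} \in \Dom(f)$, $f(y_j) \ne f(x^\ast)$, $\sum_{j : i \in B_j} \lambda_j^\ast \le 1$ for each $i$, and $\sum_j \lambda_j^\ast = \fbs(f)$. Define the hard distribution $\mu(x^\ast) = 1/2$, $\mu(y_j) = \lambda_j^\ast/(2\fbs(f))$. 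By Yao, it suffices to show every deterministic non-adaptive noisy algorithm with query configuration $(q_1,\ldots,q_N)$ errs on $\mu$ with probability $> 1/3$ unless $\sum_i q_i/n = \Omega(\fbs(f))$. On input $z$ the observations form a product $P_z = \prod_i \mathrm{Bern}(\tfrac12 + \tfrac{2z_i-1}{\sqrt n})^{q_i}$; tensorization of squared Hellinger gives $H^2(P_{x^\ast}, P_{y_j}) = O\bigl(\sum_{i \in B_j} q_i/n\bigr)$ and hence $\mathrm{TV}(P_{x^\ast}, P_{y_j}) = O\bigl(\sqrt{\sum_{i \in B_j} q_i/n}\bigr)$. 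A short calculation bounds the error of any decision rule on $\mu$ below by $1/2 - \sum_j \mu(y_j)\,\mathrm{TV}(P_{x^\ast}, P_{y_j})$; Jensen on $\sqrt{\cdot}$ together with the packing constraint $\sum_{j : i \in B_j}\lambda_j^\ast \le 1$ then yields $\sum_j \mu(y_j)\,\mathrm{TV}^2 \le O\bigl(\tfrac{1}{\fbs(f)}\sum_i q_i/n\bigr)$, forcing $\sum_i q_i/n = \Omega(\fbs(f))$.

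For the matching upper bound $\R^\na(f) = O(\fbs(f))$ on partial Boolean $f$, the plan is to use LP duality between $\fbs_x(f)$ and the fractional certificate complexity at $x$, then apply Yao's minimax to produce a single non-adaptive query distribution over subsets of $[N]$ that certifies every $x \in \Dom(f)$ with constant probability and expected size $O(\fbs(f))$. Combined with the previous paragraph this gives $\noisyR^\na(f) = \Omega(\fbs(f)) = \Omega(\R^\na(f))$, proving the theorem.

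The main obstacle is this upper bound $\R^\na(f) = O(\fbs(f))$ for partial Boolean $f$: a naive probabilistic construction via union bound over exponentially many sensitive blocks incurs a logarithmic loss, so the argument must exploit the Boolean output to avoid this. The relational counterexample of \cite{GLSS19} confirms that this step genuinely fails for non-Boolean outputs, so the Boolean structure must play an essential role.
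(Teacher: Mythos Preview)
Your proposed route through fractional block sensitivity cannot work, because the upper bound $\R^\na(f)=O(\fbs(f))$ is false even for partial Boolean functions. Consider the unary-address indexing function on $2n$ bits: the promise is that the first $n$ bits have Hamming weight exactly one, say at position $i$, and the output is $x_{n+i}$. At any input in the domain, every sensitive block either contains position $n+i$ or contains position $i$ (any block that changes the output must either flip the addressed data bit, or move the unique $1$ in the address, and all of the latter blocks share index $i$). Hence $\fbs(f)\le 2$. But any non-adaptive randomized algorithm that queries $o(n)$ of the last $n$ bits misses the relevant data bit $x_{n+i}$ with probability $1-o(1)$ when $i$ is worst-case, so $\R^\na(f)=\Theta(n)$. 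The gap between $\fbs$ and $\R^\na$ is therefore not a logarithmic artifact of a union bound but a genuine polynomial separation, and no amount of exploiting the Boolean output will close it along this route.

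The paper's proof is entirely different and does not pass through any block-sensitivity-type measure. It fixes a hard distribution $\mu$ for $\R^\na(f)$ via minimax, takes a near-optimal noisy non-adaptive algorithm, normalizes it so that it makes one bias-$1/\sqrt{n}$ query to each (duplicated) input bit, and then invokes a theorem of Samorodnitsky (in the form given by Polyanskiy--Wu) stating that for any function of $X$, the erasure channel $E_{\rho^2}(X)$ preserves at least as much information as the noise channel $N_\rho(X)$. This lets one replace noisy observations by erasures---i.e., by a randomized set of \emph{clean} queries of the same expected size---while only degrading the conditional entropy $H(f(X)\mid\cdot)$, and Fano-type bounds convert this back to bounded error. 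The place where Boolean output is used is precisely the Fano/Hellman--Raviv step relating conditional entropy to optimal prediction error, which has no analogue for relations.
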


Since \thm{nonadaptive} is false for relations,
its proof necessarily ``notices'' whether $f$
is a relation or a partial function. Such proofs
are unusual in query complexity. We hope that the
techniques we used in the proof of \thm{nonadaptive}
will assist future work in settling
the relationship between $\noisyR(f)$ and $\R(f)$
(perhaps resolving \conj{noisyR_log}).

\subsection{Our techniques}

\subsubsection{Main idea for the counterexample}

The main idea for the counterexample to composition
is to take $g=\gapmaj_m$ and to construct a function
$f$ that only requires some of its bits to be computed
to bias $1/\sqrt{m}$ instead of exactly. Achieving
bias $1/\sqrt{m}$ will be disproportionately cheap
for an input to $f\circ g$ compared to an input to $f$.

This is the same principle used for the relational
counterexample of \cite{GLSS19}. There, the authors
took $f$ to be the relational problem of taking
an input $x\in\B^n$ and returning an output
$y\in\B^n$ with the property that $|x-y|\le n/2-\sqrt{n}$.
This can be done using either $\sqrt{n}$
exact queries to $x$, or using $n$ queries to $x$
with bias $1/\sqrt{n}$ each. When $f$ is composed with
$g$ and $n=m$, it's not hard to verify that
$\R(f\circ g)=O(n)$, even though $\R(f)=\Omega(\sqrt{n})$
and $\R(g)=\Omega(m)=\Omega(n)$.

To convert $f$ into a partial Boolean function,
we use the indexing trick. We let the first $m$
bits of $f$ represent a string $x$, and we want to
force an algorithm to find a string $y$ that's within
Hamming weight $m/2-\sqrt{m}$ of $x$. To do so,
we can try adding an array of length $2^m$ to the
input of $f$, with entries indexed by $y$.
We'll fill the array with $*$ on positions indexed
by strings $y$ that are far from $x$. On positions
corresponding to strings $y$ within $m/2-\sqrt{m}$
of $x$, we'll put either all $0$s or all $1$s,
and we'll require the algorithm to output $0$
in the former case and $1$ in the latter case
(promised one of the two cases hold).

The above construction doesn't quite work,
because a randomized algorithm can cheat:
instead of finding a string $y$ close to $x$,
it can simply search the array for a non-$*$
bit and output that bit. Since a constant fraction
of the Boolean hypercube is within $m/2-\sqrt{m}$
of $x$, this strategy will succeed after a constant
number of queries. To fix this, all we need
to do is increase the gap from $\sqrt{m}$
to $10\sqrt{m\log m}$, so that $y$
is required to be
within $m/2-10\sqrt{m\log m}$ of $x$.
Now the non-$*$ positions in the array will fill only
a $1/m^{\Omega(1)}$ fraction of the array, and a randomized
algorithm has no hope of finding one of those positions with a small number of random guesses.
The input size of $f$ will be $n=m+2^m$.
Then we have
$\R(f)=\Theta(\sqrt{m\log m}))$,
$\R(g)=\Theta(m)$, but
$\R(f\circ g)=\Theta(m\log m)$
as we can solve $f\circ g$ by querying
each of the first $m$ copies of $g$
$O(\log m)$ times each, getting bias
$\Omega(\sqrt{(\log m)/m})$ for each of the $m$
bits of $x$, which provides a good string $y$
with high probability.

\subsubsection{Main idea for the composition theorem}

The main idea for proving the composition theorem
$\R(f\circ g)=\Omega(\noisyR(f)\R(g))$ is to try
to turn an algorithm for $f\circ g$ into an algorithm
for $f$. This is the standard approach for most
composition theorems, and the main question becomes
how to solve $f$ when we only have an algorithm $A$
which makes queries to an $nm$-length input for $f\circ g$.
When the algorithm
queries bit $j$ inside copy $i$ of $g$,
and we only have an $n$-bit input $x$ to $f$, what do we
query?

One solution would be to fix hard distributions
$\mu_0$ and $\mu_1$ for $g$, and then, when
$A$ makes a query to bit $j$ inside copy $i$ of $g$,
we can query $x_i$, sample an $m$-bit string
from $\mu_{x_i}$, and then return the $j$-th bit
of that string. However, this uses a lot of queries:
in the worst case, one query to $x$ would be needed
for each query $A$ makes, giving only the
upper bound $\R(f)\le\R(f\circ g)$ instead
of something closer to $\R(f)\le\R(f\circ g)/\R(g)$.
The goal is to simulate the behavior of $A$
while avoiding making queries to $x$
as much as possible.

One insight (also used in previous work)
is that if bit $j$ is queried inside copy $i$
of $g$, we only need to query $x_i$ from the real
input $x$ if $\mu_0$ and $\mu_1$ disagree
on the $j$-th bit with substantial probability.
In \cite{GLSS19}, the approach was to
first try to generate the answer $j$
from $\mu_0$ and $\mu_1$, and see if they happen
to agree; this way, querying the real input $x_i$ is only 
needed in case they disagree.

We do something slightly different: we assume
we have access to a (very) noisy oracle for $x_i$, and use
calls to the oracle to generate
bit $j$ from $\mu_{x_i}$ without actually finding
out $x_i$. In effect, this lets us use
the squared-Hellinger
distance between the marginal distributions
$\mu_0|_j$ and $\mu_1|_j$ as the cost
of generating the sample, instead of using
the total variation distance between $\mu_0|_j$ and
$\mu_1|_j$. That is, we charge a cost for the noisy oracle calls
in a special way, which ensures that the total
cost of the noisy oracle calls will be proportional
to the squared-Hellinger distance between
the transcript of $A$ when run on $\mu_0$
and when run on $\mu_1$. In other words,
the cost our $\R(f)$ algorithm pays for simulating
$A$ will be proportional to how much $A$
solved the copies of $g$, as tracked by the
Hellinger distance of the transcript of $A$
(i.e.\ its set of queries and query answers)
on $\mu_0$ vs.\ $\mu_1$.
It turns out this way of tracking the progress
of $A$ in solving $g$ is tight, at least
for the appropriate choice of hard distributions
$\mu_0$ and $\mu_1$ for $g$.
Therefore, this will give us an algorithm
for $f$ that has only $\R(f\circ g)/\R(g)$
cost, though this algorithm for $f$ will require
noisy oracles for the bits of the input---that is to say, 
it will be a $\noisyR(f)$
algorithm instead of an $\R(f)$ algorithm.

One wrinkle is that the hard distribution
produced by Yao's minimax theorem is not sufficient
to give the hardness guarantee we will need
from $\mu_0$ and $\mu_1$. Roughly speaking,
we will need $\mu_0$ and $\mu_1$ to be such that
distinguishing them with squared-Hellinger distance
$\epsilon$ requires at least $\Omega(\epsilon\R(g))$
queries, uniformly across  all choices of $\epsilon$.
To get such a hard distribution,
we use our companion paper \cite{BB20a}.
The concurrent work of \cite{BDG+20} also
gives a sufficiently strong hard distribution
for $g$ (though it is phrased somewhat differently).

\subsubsection{Noisy oracle model}

The noisy oracle model we will use is the following.
There is a hidden bit $b\in\B$ known to the oracle.
The oracle will accept queries with any parameter
$\gamma\in[0,1]$, and will return a bit $\tilde{b}$
that has bias $\gamma$ towards $b$---that is, 
a bit from $\Bernoulli\big(\frac{1-(-1)^b\gamma}2\big)$
(independently sampled for each query call).
This oracle can be called any number of times
with possibly different parameters,
but each call with parameter $\gamma$ costs $\gamma^2$.
(The cost $\gamma^2$ is a natural choice,
as it would take $O(1/\gamma^2)$ bits of bias
$\gamma$ to determine the bit with constant error.)

The measure $\noisyR(f)$ is defined as the cost
of computing $f$ (to worst-case bounded error) using
noisy oracle access to each bit in the input of $x$.
That is, instead of receiving query access to the
$n$-bit string $x$, we now have access to $n$ noisy
oracles, one for each bit $x_i$ of $x$.
We can call each oracle with any parameter $\gamma$
of our choice, at the cost of $\gamma^2$ per such call.
The goal is to compute $f$ to bounded error
using minimum expected cost (measured in the worst case
over inputs $x$). We note that by using
$\gamma=1$ each time, this reverts to the usual
query complexity of $f$, meaning that
$\noisyR(f)\le\R(f)$.

The key to our composition theorem
lies in using such a noisy oracle
for a bit $x_i$ to generate a sample from a distribution
$\mu_{x_i}|_j$ (distribution $\mu_{x_i}$ marginalized
to bit $j$) without learning $x_i$. More generally, suppose
we have two distributions, $p_0$ and $p_1$,
and we wish to sample from one of them, but we don't
know which one. The choice of which distribution
to sample from depends on a hidden bit $b$,
and we have noisy oracle access to $b$.
Suppose we know that $p_0$ and $p_1$
are close, say $\h^2(p_0,p_1)=\epsilon$.
How many queries to this noisy oracle do we need
to make in order to generate this sample?

We show that using such noisy oracle calls,
we can return a sample from $p_b$
with an expected cost of $O(\h^2(p_0,p_1))$.
When $p_0$ and $p_1$ are close, this is a much
lower cost than the $\Omega(1)$ cost of extracting $b$.
In other words, when the distributions are close,
we can return a sample from $p_b$ (without any error)
without learning the value of the bit $b$!
This is the key insight that allows our composition
result to work.

\subsubsection{Main idea for characterizing
\texorpdfstring{$\noisyR(f)$}{noisyR(f)}}

In order to show that
$\noisyR(f)=\Theta(\R(f\circ\gapmaj_n)/n)$,
we first note that the upper bound follows
from our composition theorem: that is,
$\R(f\circ\gapmaj_n)=\Omega(\noisyR(f)\R(\gapmaj_n))$,
and $\R(\gapmaj_n)=\Theta(n)$.
For the lower bound direction, we need
to convert a $\noisyR(f)$ algorithm
(which makes noisy oracle calls to the input bits,
with cost $\gamma^2$ for a noisy oracle call with parameter
$\gamma$) into an algorithm for
$\noisyR(f\circ\gapmaj_n)$ where each query
costs $1/n$. Recalling that $\gapmaj_n$
is the majority function with the promise that the
Hamming weight of the input is $n/2\pm\lfloor\sqrt{n}\rfloor$,
it's not hard to see that a single random query
to a $\gapmaj_n$ gadget (with cost $1/n$ each)
is the same thing as a noisy oracle query
with $\gamma\approx1/\sqrt{n}$.
Also, querying all $n$ bits in a $\gapmaj_n$
(with cost $1$ in total) is the same thing as
a noisy oracle query with $\gamma=1$.

To finish the argument, all we have to show
is that a $\noisyR(f)$ algorithm can always
be assumed to make only queries with $\gamma=1/\sqrt{n}$
or $\gamma=1$.
Now, it is well-known that an oracle with bias $\gamma$
can be amplified to an oracle with bias
$\gamma'>\gamma$ by calling it $O(\gamma'^2/\gamma^2)$
times and taking the majority of the answers.
Since oracle calls with parameter $\gamma$
cost us $\gamma^2$, this fact ensures
that we only need to make noisy oracle
calls with parameter either $\gamma=\hat{\gamma}$
or $\gamma=1$, where $\hat{\gamma}$ is
extremely small -- smaller than anything
used by an optimal (or at least near-optiomal)
$\noisyR(f)$ algorithm. This is because for any desired
bias level larger than $\hat{\gamma}$, we could
simply amplify the $\hat{\gamma}$ calls.

Hence it only remains to show how to simulate
noisy oracle queries with an arbitrarily
small parameter $\hat{\gamma}$ using
noisy oracle queries with parameter $1/\sqrt{n}$.
For this, we consider a random walk on a line
that starts at $0$ and flips a
$\Bernoulli\big(\frac{1-(-1)^b\hat{\gamma}}{2}\big)$ coin
when deciding whether step forwards or backwards.
Consider making this walk starting at $0$,
walking until either $k$ or $-k$ is reached,
and then stopping (where $k$ is some fixed integer).
Note that the probability that neither $k$ or $-k$
is ever reached after infinitely many steps is $0$.
We then make the following key observation:
the probability distribution over
the sequence steps of this walk,
\emph{conditioned} on reaching $k$ before $-k$,
is the same whether $b=0$ or $b=1$.
Therefore, it is possible to generate
the full walk by generating the
sequence of multiples of $k$ the walk will reach
(in a way that depends on $b$), and then completely
separately -- and independently of $b$ -- generating
the sequence of steps between one multiple and the next,
up to negation.

To simulate a bias $\hat{\gamma}$ oracle with a
bias $1/\sqrt{n}$ oracle, we can use latter
to generate the sequence of multiples of $k$ described
above, with $k=O(1/(\sqrt{n}\hat{\gamma}))$. We generate
this sequence one at a time. For each one, we can then
generate $\ell$ calls to the bias $\hat{\gamma}$
oracle, where $\ell$ is the (random) number of steps the random
walk takes to go from one multiple of $k$ to the next.
This simulation is perfect: is produces the distribution
of any number of calls to the $\hat{\gamma}$-bias oracle.
It also turns out to use the right number
of noisy oracle queries in the long run. The only
catch is that if the algorithm makes only
one noisy oracle call with bias $\hat{\gamma}$,
this still requires one call to the oracle of bias $1/\sqrt{n}$,
at a cost of $1/n$ instead of $1/\hat{\gamma}^2$.
Since there are $n$ total bits, this means the simulation
can suffer an additive cost of $1$.
To complete the argument, we then
show that $\noisyR(f)=\Omega(1)$
for every non-constant Boolean function $f$.

\subsubsection{Main idea for bypassing the relational
barrier in the non-adaptive setting}

The trick for showing $\noisyR^\na(f)=\Theta(\R^\na(f))$
for partial functions is to use an information-theoretic
characterization of this statement. First,
using a Yao-style minimax theorem, we can assume we are working
against a hard distribution $\mu$ for $\R^\na(f)$.
Then we consider a non-adaptive randomized algorithm
that uses noisy oracle queries (that is, a $\noisyR^\na(f)$
algorithm) that solves $f$ against $\mu$. By some simple
modifications and reductions, we can assume that this algorithm
simply makes one noisy query to each bit of the input,
with bias parameter $1/\sqrt{n}$. In other words, if $X$
is the random variable for a string sampled from $\mu$,
and if $Y$ is the random variable we get by flipping
each bit of $X$ independently with probability $(1-1/\sqrt{n})/2$,
then we can assume a $\noisyR^\na(f)$ algorithm just has
access to the string $Y$ and tries to compute $f(X)$ using $Y$.
Our reductions change the length of the string (by duplicating
bits of the input), and the
cost of this noisy randomized algorithm will roughly be
$|X|/n$, where $|X|$ is the length of the string $X$
and $n$ is the length of the original string.

What we wish to show is that such a noisy non-adaptive
randomized algorithm
(which computes $f(X)$ using $Y$) can be converted into a
regular non-adaptive randomized algorithm which
computes $f(X)$ by querying
only around $|X|/n$ bits of $X$. To do so, we use
a theorem of Samorodnitsky \cite{Sam16,PW17}, which states
that the erasure channel with parameter $\rho^2$ -- which
deletes each bit of $X$ with probability $1-\rho^2$ --
preserves more information about any function $f(X)$
than the noisy channel
with parameter $\rho$ (which flips each bit of $X$ with probability
$(1-\rho)/2$). Hence, if $f(X)$ can be computed from $Y$,
it can also be computed from the string $Z$ which is formed
by deleting each bit of $X$ with probability $1-1/n$. Since
$Z$ reveals only $|X|/n$ bits on expectation, this can be used
to define a non-adaptive randomized algorithm whose cost
is at most $\noisyR^\na(f)$, and which still succeeds in computing
$f$ against $\mu$ to bounded error.
This shows $\R^\na(f)=O(\noisyR^\na(f))$.

We note that the step where we used the fact that
$f$ is a partial function is the step where we said that if
$Z$ gives \emph{information} about $f(X)$, seeing $Z$
can be used to \emph{compute} $f(X)$ to bounded error.
This statement holds when $f(X)$ is a Boolean-valued random variable,
but it has no good analogue in the relational setting
(and indeed, we know that $\noisyR^\na(f)$ does not equal
$\R^\na(f)$ for relations).

\begin{fulltext}

\section{Preliminaries and definitions}

\subsection{Query complexity}

We introduce some basic concepts in query complexity.
For a survey, see \cite{BdW02}. Fractional
block sensitivity can be found in
\cite{Aar08,KT16}.

\paragraph{Partial Boolean functions.}
In this work, we will refer to partial
Boolean functions, which are functions
$f\colon S\to\B$ where $S\subseteq\B^n$
and $n$ is a positive integer. For a partial
function $f$, the term \emph{promise}
refers to its domain $S$, which we also
denote by $\Dom(f)$. If $\Dom(f)=\B^n$,
we say $f$ is a total function.

\paragraph{Composition.}
For partial Boolean functions $f$ and $g$
on $n$ and $m$ bits respectively, we define
their \emph{composition}, denoted $f\circ g$,
as the Boolean function on $nm$ bits with
the following properties.
$\Dom(f\circ g)$ will contain the set of
$nm$-bit strings which are concatenations
of $n$ different $m$-bit strings in $\Dom(g)$,
say $x^1,x^2,\dots,x^n$, where the tuple
$(x^1,x^2,\dots,x^n)$ must have the property
that the string $g(x^1)g(x^2)\dots g(x^n)$
is in $\Dom(f)$. The value of $f\circ g$
on such a string $x^1x^2\dots x^n$
is then defined as $f(g(x^1)g(x^2)\dots g(x^n))$.

\paragraph{Partial assignments.}
A partial assignment is a string in $\{0,1,*\}^n$
representing partial knowledge of a string in $\B^n$.
We say two partial assignments $w$ and $z$ are consistent
if they agree on the non-$*$ bits, that is,
for every $i\in[n]$ we have
either $w_i=*$ or $z_i=*$ or $w_i=z_i$
(we use $[n]$ to denote $\{1,2,\dots,n\}$).

\paragraph{Decision trees.}
A decision tree $D$ on $n$ bits
is a rooted binary tree whose leaves are labeled
by $\B$ and whose internal nodes are labeled by $[n]$.
We do not allow two internal nodes of a decision
tree to have the same label if one is a descendant
of the other. We interpret a decision tree $D$
as a deterministic algorithm which takes
as input a string $x$, starts at the root,
and at each internal node with label $i$,
the algorithm queries $x_i$ and then
goes left down the tree if $x_i=0$ and right
if $x_i=1$. When this algorithm reaches a leaf,
it outputs its label. We denote by
$\cost(D,x)$ the number of queries $D$
makes when run on $x$, and by $\cost(D)$
the height of the tree $D$. We denote the
output of $D$ on input $x$ by $D(x)$.
We say $D$ computes Boolean function $f$
if $D(x)=f(x)$ for all $x\in\Dom(f)$.

\paragraph{Randomized decision trees.}
A randomized decision tree $R$ on $n$ bits
is a probability distribution over deterministic
decision trees on $n$ bits.
We denote by $\cost(R,x)$ the expectation
of $\cost(D,x)$ over decision trees $D$
sampled from $R$. If $\mu$ is a distribution
over $\B^n$, we further denote by $\cost(R,\mu)$
the expectation of $\cost(R,x)$ over $x$
sampled from $\mu$. We denote by $\cost(R)$
the maximum of $\cost(R,x)$ over $x\in\B^n$,
and by $\height(R)$ the maximum of $\cost(D)$
over $D$ in the support of $R$.
Further, we let $R(x)$ denote
the random variable $D(x)$ with $D$ sampled from $R$.
We say $R$ computes $f$ to error $\epsilon\in[0,1/2]$
if $\Pr[R(x)=f(x)]\ge 1-\epsilon$ for all
$x\in\Dom(f)$.

\paragraph{Randomized query complexity.}
The randomized query complexity of a Boolean function
$f$ to error $\epsilon$,
denoted $\R_\epsilon(f)$, is the minimum height
$\height(R)$ of a randomized decision tree computing
$f$ to error $\epsilon$. The expectation
version of the randomized query complexity of $f$,
denoted $\bar{\R}_\epsilon(f)$, is the minimum
value of $\cost(R)$ of a randomized decision tree
computing $f$ to error $\epsilon$.
When $\epsilon=1/3$, we omit it and write
$\R(f)$ and $\bar{\R}(f)$. We note that
randomized query complexity can be amplified
by repeating the algorithm a few times and taking
the majority vote of the answers; for this reason,
the constant $1/3$ is arbitrary and any other
constant in $(0,1/2)$ could work for the definition.
Note that in the constant error regime,
$\bar{R}(f)=\Theta(\R(f))$, since we can cut off
paths of a $\bar{R}(f)$ algorithm
that run too long and use Markov's inequality
to argue that we only suffer a constant error
penalty for this.

\paragraph{Block sensitivity.}
Let $f$ be a Boolean function and let $x\in\Dom(f)$.
A \emph{sensitive block} of $f$ at $x$
is a subset $B\subseteq[n]$ such that
$x^B\in\Dom(f)$ and $f(x^B)\ne f(x)$,
where $x^B$ denotes the string $x$ with bits
in $B$ flipped (i.e.\ $x^B_i=x_i$ for $i\notin B$
and $x^B_i=1-x_i$ for $i\in B$).
The \emph{block sensitivity} of $f$
at $x$, denoted $\bs_x(f)$,
is the maximum number of disjoint sensitive
blocks of $f$ at $x$. The block sensitivity
of $f$, denoted $\bs(f)$, is the maximum value
of $\bs_x(f)$ over $x\in\Dom(f)$.
We note that $\R(f)=\Omega(\bs(f))$,
since if $B_1,\dots,B_k$ are disjoint sensitive
blocks of $f$ at $x$, then a randomized
algorithm must make $\Omega(k)$ queries
to determine whether the input is $x$ or $x^{B_j}$
for some $j\in[k]$.

\paragraph{Fractional block sensitivity.}
Fix a Boolean function $f$ and an input $x\in\Dom(f)$,
and let $\mathcal{B}$ be the set of all sensitive
blocks of $f$ at $x$. We consider weighting schemes
assigning non-negative weights $w_B$ to blocks
$B\in\mathcal{B}$. We say such a scheme is feasible
if for each $i\in[n]$, the sum of $w_B$
over all blocks $B\in\mathcal{B}$ containing $i$
is at most $1$. The fractional block sensitivity
of $f$ at $x$, denoted $\fbs_x(f)$, is the maximum
total weight in such a feasible weighting scheme.
The fractional block sensitivity of $f$,
denoted $\fbs(f)$, is the maximum of $\fbs_x(f)$
over all $x\in\Dom(f)$.
We note that $\R(f)=\Omega(\fbs(f))$.
To see this, let $R$ be a randomized algorithm
solving $f$ let $x\in\Dom(f)$ be an input,
and for $i\in[n]$ let $p_i$
be the probability that $R$ queries bit $i$
when run on $x$. If, for any sensitive block $B$,
we have $\sum_{i\in B} p_i\ll 1$, then
$R$ does not distinguish $x$ from $x^B$ with
constant probability, which means $R$
fails to compute $f$ to bounded error
(since $f(x)\ne f(x^B)$). So we have
$\sum_{i\in B} p_i\ge\Omega(1)$ for all $B$.
Then 
\[\height(R)\ge \sum_{i\in[n]} p_i
\ge \sum_{i\in[n]}p_i\sum_{B\in\mathcal{B}:i\in B} w_B
=\sum_{B\in\mathcal{B}} w_B\sum_{i\in B} p_i
=\Omega\left(\sum_{B\in\mathcal{B}} w_B\right)
=\Omega(\fbs_x(f)).\]

\paragraph{Relations.}
A relation $f$ is a subset of $\B^n\times\Sigma$
for some finite alphabet $\Sigma$. When computing
a relation $f$, we only require that an algorithm
$A$ given input $x$
outputs some $\sigma\in\Sigma$ satisfying
$(x,\sigma)\in f$. In other words, each input
may have many valid outputs. It is not hard
to generalize the definitions of $\D(f)$
and $\R(f)$ to include relations: the decision
trees need leaves labeled by $\Sigma$, but otherwise
everything works the same
(though one catch is that amplification no longer works,
which means $\R_\epsilon(f)$ becomes a different measure
for different values of $\epsilon$).
Note that relations generalize partial functions,
because instead of restricting the inputs to a
promise set $S\subseteq\B^n$, we can simply allow
all possible outputs for every $x\notin S$.
With this in mind, it is not hard to see that
composition $f\circ g$ is well-defined if $f$ is a
relation, so long as $g$ remains a (possibly partial)
Boolean function. In general, we will define measures
for Boolean functions and later wish to apply
them to relations; this will usually work
without too much trouble.

\subsection{Distance measures for distributions}

In this work, we will only consider finite-support
distributions and finite-support random variables.
For a distribution $\mu$, we will use
$\mu^A$ to denote the conditional distribution of $\mu$
conditioned on event $A$. If $\mu$ is a distribution
over $\B^n$ and $z$ is a partial assignment,
we will also use $\mu^z$ to denote the distribution $\mu$
conditioned on the string sampled from $\mu$ agreeing with
the partial assignment $z$. If $\mu$ is a distribution
over $\B^n$ and $j\in[n]$ is an index, we will use
$\mu|_j$ to denote the marginal distribution of $\mu$
on the bit $j$ (the distribution
we get by sampling $x$ from $\mu$ and returning $x_j$).

The following distance measures will be useful.
All logarithms are base 2.

\begin{definition}[Distance measures]
\label{def:dist_measures}
For probability distributions $\mu_0$ and $\mu_1$
over a finite support $S$, define the squared-Hellinger,
symmetrized chi-squared, Jensen-Shannon, and total variation
distances respectively as follows:
\begin{align*}
\h^2(\mu_0,\mu_1) &\coloneqq \frac{1}{2}
    \sum_{x\in S}(\sqrt{\mu_0[x]}-\sqrt{\mu_1[x]})^2\\
\Ess^2(\mu_0,\mu_1) &\coloneqq \frac{1}{2}
    \sum_{x\in S}\frac{(\mu_0[x]-\mu_1[x])^2}{\mu_0[x]+\mu_1[x]}\\
\JS(\mu_0,\mu_1) &\coloneqq \frac{1}{2}
    \sum_{x\in S}\mu_0[x]\log\frac{2\mu_0[x]}{\mu_0[x]+\mu_1[x]}
    +\mu_1[x]\log\frac{2\mu_1[x]}{\mu_0[x]+\mu_1[x]}\\
\Delta(\mu_0,\mu_1) 
  &\coloneqq \frac{1}{2} \sum_{x\in S}|\mu_0[x]-\mu_1[x]|.
\end{align*}
\end{definition}

We will need a few basic claims regarding the properties of various distance measures between probability distributions. The first one relates these
probability distributions to each other. This is
known in the literature, though the citations
are hard to trace down; some parts of this
inequality chain follow from \cite{Top00},
some parts from \cite{MCAL17}, and for others we cannot
find a good citation.
In any case, a proof of the complete chain is provided
in the appendix of our companion manuscript \cite{BB20a}.

\begin{claim}[Relationship of distance measures]
\label{clm:h_vs_JS_vs_S}
For probability distributions $\mu_0$ and $\mu_1$,
\[\h^2(\mu_0,\mu_1)\le \JS(\mu_0,\mu_1)
\le \Ess^2(\mu_0,\mu_1)\le 2\h^2(\mu_0,\mu_1).\]
We also have $\Delta^2(\mu_0,\mu_1)\le\Ess^2(\mu_0,\mu_1)\le\Delta(\mu_0,\mu_1)$.
\end{claim}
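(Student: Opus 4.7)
My plan is to prove each pointwise inequality in the chain separately, reducing each one to a standard one-variable inequality by normalizing $a = \mu_0[x]$ and $b = \mu_1[x]$ so that $a+b=1$, i.e., writing $p = a/(a+b)$. Summing over $x \in S$ then gives the distribution-level inequality. I would start with the two easy bookend inequalities. For $\Ess^2 \le 2\h^2$, I factor
\[
\frac{(a-b)^2}{a+b} = (\sqrt{a}-\sqrt{b})^2 \cdot \frac{(\sqrt{a}+\sqrt{b})^2}{a+b},
\]
and note that $(\sqrt{a}+\sqrt{b})^2 \le 2(a+b)$ by AM-QM, so the pointwise bound is $(a-b)^2/(a+b) \le 2(\sqrt{a}-\sqrt{b})^2$. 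For $\Ess^2 \le \Delta$, the pointwise comparison is $(a-b)^2/(a+b) \le |a-b|$, which reduces to $|a-b| \le a+b$. For $\Delta^2 \le \Ess^2$, I use Cauchy-Schwarz by writing $|a-b| = (|a-b|/\sqrt{a+b})\cdot\sqrt{a+b}$ and using $\sum_x (a+b) = 2$.

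The two middle inequalities $\h^2 \le \JS$ and $\JS \le \Ess^2$ are the interesting ones. After setting $s = a+b$ and $p = a/s$, each reduces (after dividing out the factor $s$ and using $1 - H(p) = p\log_2(2p) + (1-p)\log_2(2(1-p))$) to a classical inequality about the binary entropy $H(p)$:
\[
\h^2 \le \JS \iff H(p) \le 2\sqrt{p(1-p)}, \qquad \JS \le \Ess^2 \iff H(p) \ge 4p(1-p).
\]
I would then prove each of these single-variable inequalities by standard calculus: both sides agree at $p \in \{0, 1/2, 1\}$, both have matching derivative at $p = 1/2$, and one checks the sign of the difference by analyzing the second derivative (concavity of $H$ vs.\ the parabola, and concavity of $\sqrt{p(1-p)}$ relative to $H$). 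Symmetry about $p = 1/2$ cuts the work in half.

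The main obstacle will be the entropy-vs-square-root inequality $H(p) \le 2\sqrt{p(1-p)}$: the two sides are tangent to high order at $p=1/2$, so a naive Taylor comparison does not immediately close the argument. The cleanest way to handle it is to compare $(2\sqrt{p(1-p)})^2 - H(p)^2 \ge 0$ or to write both sides as integrals starting at $p=1/2$ and compare integrands, using the explicit formula for $H'(p) = \log_2((1-p)/p)$. The entropy-vs-parabola bound $H(p) \ge 4p(1-p)$ is easier and already classical. Once all pointwise inequalities are in hand, summing over $x$ yields the claim.
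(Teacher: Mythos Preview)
The paper does not actually prove this claim: it states that the inequalities are known in the literature (citing \cite{Top00} and \cite{MCAL17}) and defers a full proof to the appendix of the companion manuscript \cite{BB20a}. So there is no in-paper proof to compare against.

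That said, your proposal is correct and is essentially the standard route. Your reductions are right: with $s=a+b$ and $p=a/s$, the pointwise contributions to $\h^2$, $\JS$, and $\Ess^2$ are $\tfrac{s}{2}(1-2\sqrt{p(1-p)})$, $\tfrac{s}{2}(1-H(p))$, and $\tfrac{s}{2}(1-4p(1-p))$ respectively, so the chain reduces exactly to $4p(1-p)\le H(p)\le 2\sqrt{p(1-p)}$. The bookend inequalities ($\Ess^2\le 2\h^2$, $\Ess^2\le\Delta$, $\Delta^2\le\Ess^2$ via Cauchy--Schwarz) are handled as you describe. For the delicate bound $H(p)\le 2\sqrt{p(1-p)}$, your instinct that a naive Taylor expansion is insufficient is correct (the two sides are tangent at $p=1/2$), and the squaring trick or integral comparison you suggest is a clean way to finish; alternatively, one can compare derivatives via $\log_2\tfrac{1-p}{p}\ge \tfrac{1-2p}{\sqrt{p(1-p)}\ln 2}$ on $[0,1/2]$ after a change of variable. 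Nothing is missing from your plan.
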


Since the distance measures
$\h^2$, $\Ess^2$, and $\JS$ are equivalent
up to constant factors, one might wonder why
we need all three. It turns out that the 
squared-Hellinger distance is mathematically
the nicest (e.g.\ it tensorizes and behaves
nicely under disjoint mixtures), the Jensen-Shannon
distance has an information-theoretic interpretation
that allows us to use tools from information theory,
and the symmetrized chi-squared distance $\Ess^2$
is the one that most naturally captures the cost
of outputting a sample from $\mu_b$
given noisy oracle access to the bit $b\in\B$
(see \lem{single_query_simulation}).
% In addition, while the three are equivalent up to
% constant factors, this equivalence is fairly
% annoying to prove, so it makes more sense
% to refer to \clm{h_vs_JS_vs_S} when
% we need to convert between them instead of reproving
% the conversion from scratch.

\subsubsection{Properties of the squared-Hellinger distance}

\begin{claim}[Hellinger tensorization]
\label{clm:tensorization}
Fix distributions $\mu_0$ and $\mu_1$ with finite support,
and let $\mu_0^{\otimes k}$ denote the distribution
where $k$ independent samples from $\mu_0$ are returned
(with $\mu_1^{\otimes k}$ defined similarly). Then
\[\h^2\left(\mu_0^{\otimes k},\mu_1^{\otimes k}\right)
=1-\left(1-\h^2(\mu_0,\mu_1)\right)^k.\]
\end{claim}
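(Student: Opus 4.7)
The plan is to prove this via the Bhattacharyya coefficient (also called the Hellinger affinity), namely $\mathrm{BC}(\mu_0,\mu_1) \coloneqq \sum_{x} \sqrt{\mu_0[x]\mu_1[x]}$, and exploit the fact that it factorizes over product distributions. This is the standard route for such tensorization identities because the squared-Hellinger distance itself is not multiplicative, but its ``$1-$'' complement is.

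First I would rewrite the definition of $\h^2$ by expanding the square:
\[
\h^2(\mu_0,\mu_1) = \tfrac{1}{2}\sum_{x}\bigl(\mu_0[x]+\mu_1[x]-2\sqrt{\mu_0[x]\mu_1[x]}\bigr) = 1 - \mathrm{BC}(\mu_0,\mu_1),
\]
using that $\mu_0$ and $\mu_1$ are each probability distributions summing to $1$. So the identity to prove reduces to showing $\mathrm{BC}(\mu_0^{\otimes k},\mu_1^{\otimes k}) = \mathrm{BC}(\mu_0,\mu_1)^k$.

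Next, I would verify this multiplicativity directly. Letting $S$ be the common support, the product distribution $\mu_b^{\otimes k}$ assigns probability $\prod_{i=1}^{k}\mu_b[x_i]$ to the tuple $(x_1,\dots,x_k)\in S^k$. Therefore
\[
\mathrm{BC}(\mu_0^{\otimes k},\mu_1^{\otimes k}) = \sum_{(x_1,\dots,x_k)\in S^k}\prod_{i=1}^{k}\sqrt{\mu_0[x_i]\mu_1[x_i]} = \prod_{i=1}^{k}\sum_{x_i\in S}\sqrt{\mu_0[x_i]\mu_1[x_i]} = \mathrm{BC}(\mu_0,\mu_1)^k,
\]
where the middle equality is just the distributive law (the sum of a product of separable terms factors as a product of sums). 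Substituting back yields
\[
\h^2(\mu_0^{\otimes k},\mu_1^{\otimes k}) = 1-\mathrm{BC}(\mu_0,\mu_1)^k = 1-\bigl(1-\h^2(\mu_0,\mu_1)\bigr)^k,
\]
as desired.

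There is no real obstacle here; the only ``trick'' is noticing that squared-Hellinger is naturally expressed through the Bhattacharyya coefficient, at which point the tensor-product claim becomes a one-line application of Fubini. The statement as written assumes $\mu_0,\mu_1$ share a common finite support, but the calculation is unchanged if we sum over the union of supports with the convention $\sqrt{0\cdot p}=0$, so no care is needed there.
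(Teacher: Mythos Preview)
Your proof is correct and essentially identical to the paper's: the paper calls your Bhattacharyya coefficient the fidelity $F(\mu_0,\mu_1)=\sum_x\sqrt{\mu_0[x]\mu_1[x]}$, observes $\h^2=1-F$, and then verifies $F(\mu_0^{\otimes k},\mu_1^{\otimes k})=F(\mu_0,\mu_1)^k$ by the same distributive-law factorization you wrote down.
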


\begin{proof}
From the definition of $\h^2(\cdot,\cdot)$,
it is not hard to see that $\h^2(\mu_0,\mu_1)=1-F(\mu_0,\mu_1)$,
with $F(\mu_0,\mu_1)$ denoting
the fidelity $\sum_x\sqrt{\mu_0[x]\mu_1[x]}$ between
$\mu_0$ and $\mu_1$. The claim that
$F(\mu_0^{\otimes k},\mu_1^{\otimes k})=F(\mu_0,\mu_1)^k$
is easy to see, as it is simply the claim
\[\sum_{x_1}\sum_{x_2}\dots\sum_{x_k}\sqrt{\mu_0[x_1]\dots\mu_0[x_k]\cdot\mu_1[x_1]\dots\mu_1[x_k]}=\left(\sum_x\sqrt{\mu_0[x]\mu_1[x]}\right)^k. \qedhere\]
\end{proof}

\begin{claim}[Hellinger interpretation]
\label{clm:h_interpretation}
For distributions $\mu_0$ and $\mu_1$, let $k$ be the
minimum number of independent samples from $\mu_b$ necessary to be able
to deduce $b$ with error at most $1/3$. Then
\[k=\Theta\left(\frac{1}{\h^2(\mu_0,\mu_1)}\right),\]
with the constants in the big-$\Theta$ notation being universal.
\end{claim}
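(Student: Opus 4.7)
The plan is to translate the sample-complexity question into a statement about total variation distance via Le Cam's two-point lemma, and then exploit the tensorization identity from \clm{tensorization} together with the inequalities between $\h^2$ and $\Delta$ provided by \clm{h_vs_JS_vs_S}. Recall Le Cam's lemma: the optimal error of any hypothesis test deciding between $\nu_0$ and $\nu_1$ (from a single sample, or equivalently between $\mu_0^{\otimes k}$ and $\mu_1^{\otimes k}$ from $k$ samples viewed as one joint sample) equals $(1-\Delta(\nu_0,\nu_1))/2$. So the question reduces to determining for which $k$ the quantity $\Delta(\mu_0^{\otimes k},\mu_1^{\otimes k})$ crosses above the threshold needed for error $\le 1/3$, namely $\Delta \ge 1/3$.

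Set $\epsilon := \h^2(\mu_0,\mu_1)$. By \clm{tensorization}, $\h^2(\mu_0^{\otimes k},\mu_1^{\otimes k}) = 1-(1-\epsilon)^k$, and by \clm{h_vs_JS_vs_S} this quantity sandwiches total variation: for any pair $\nu_0,\nu_1$, $\h^2(\nu_0,\nu_1)\le\Delta(\nu_0,\nu_1)\le\sqrt{2\,\h^2(\nu_0,\nu_1)}$. So I have
\[1-(1-\epsilon)^k \;\le\; \Delta(\mu_0^{\otimes k},\mu_1^{\otimes k}) \;\le\; \sqrt{2\bigl(1-(1-\epsilon)^k\bigr)}.\]

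For the upper bound $k=O(1/\epsilon)$: pick $k=\lceil C/\epsilon\rceil$ for a constant $C$ large enough that $1-e^{-C}\ge 2/3$. Then $(1-\epsilon)^k\le e^{-\epsilon k}\le e^{-C}$, so the lower sandwich bound gives $\Delta(\mu_0^{\otimes k},\mu_1^{\otimes k})\ge 2/3$, and the maximum-likelihood test achieves error at most $(1-2/3)/2=1/6\le 1/3$. For the lower bound $k=\Omega(1/\epsilon)$: suppose $k<c/\epsilon$ for a small constant $c$ (say $c=1/18$). By Bernoulli's inequality $(1-\epsilon)^k\ge 1-k\epsilon>1-c$, so $\h^2(\mu_0^{\otimes k},\mu_1^{\otimes k})<c$, and the upper sandwich bound gives $\Delta(\mu_0^{\otimes k},\mu_1^{\otimes k})<\sqrt{2c}<1/3$. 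Le Cam then forces every test to have error strictly greater than $(1-1/3)/2=1/3$, so no $k<c/\epsilon$ suffices.

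There is no real obstacle here: the only moving parts are the standard two-point Le Cam bound and Bernoulli's inequality, both routine. The slack in the error constant $1/3$ (rather than requiring arbitrarily small error) leaves ample room to choose $C$ and $c$ cleanly, so no sharp constants need to be tracked. The universality of the implied big-$\Theta$ constants is automatic from this construction, since the chain only invokes inequalities that hold for arbitrary finite-support distributions.
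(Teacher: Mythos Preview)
Your proposal is correct and follows essentially the same approach as the paper: both arguments reduce the sample-complexity question to when $\Delta(\mu_0^{\otimes k},\mu_1^{\otimes k})$ crosses a constant threshold, convert this to a condition on $\h^2(\mu_0^{\otimes k},\mu_1^{\otimes k})$ via \clm{h_vs_JS_vs_S}, and then invoke \clm{tensorization} to analyze $1-(1-\epsilon)^k$. Your write-up is simply more explicit about Le Cam's lemma and the choice of constants, where the paper is content to say ``it is well-known'' and ``behaves like $kx$.''
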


\begin{proof}
This minimum $k$ is the minimum $k$ such that
$\mu_0^{\otimes k}$ and $\mu_1^{\otimes k}$ can be distinguished
with constant error; it is well-known that this is the
same as saying $\Delta(\mu_0^{\otimes k},\mu_1^{\otimes k})$
is at least a constant. By \clm{h_vs_JS_vs_S},
this is the same as saying
$\h^2(\mu_0^{\otimes k},\mu_1^{\otimes k})$
is at least a constant. By \clm{tensorization},
this is the same as saying
$1-(1-\h^2(\mu_0,\mu_1))^k$ is at least a constant.
The function $1-(1-x)^k$ behaves like $kx$ when $k$
is small compared to $1/x$, so the minimum such $k$
must be $\Theta(1/\h^2(\mu_0,\mu_1))$.
\end{proof}

\begin{claim}[Hellinger of disjoint mixtures]
\label{clm:h_mix}
Let $p_a$ and $q_a$ be families of distributions, with $a$
ranging over a finite set $S$. Suppose that for each $a,b\in S$
with $a\ne b$, it holds that the support $U_a$ of
$p_a$ and $q_a$ is disjoint from the support $U_b$ of $p_b$
and $q_b$. Let $\mu$ be a distribution
over $S$. Let $p_\mu$ denote the distribution that samples
$a\leftarrow \mu$ and then returns a sample from $p_a$,
and let $q_\mu$ be defined similarly. Then
\[\h^2(p_\mu,q_\mu)=\bE_{a\sim\mu}[\h^2(p_a,q_a)].\]
\end{claim}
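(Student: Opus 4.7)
The plan is to proceed by a direct unpacking of the definition of squared-Hellinger distance, using the disjoint-support hypothesis in an essential way. The whole argument amounts to one calculation; there is no real obstacle to anticipate.

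First I would observe that because the supports $U_a$ are pairwise disjoint, every point $x$ in the support of $p_\mu$ (or of $q_\mu$) lies in a unique $U_a$, and for that $a$ the mixture pmfs factor cleanly as $p_\mu[x]=\mu[a]\,p_a[x]$ and $q_\mu[x]=\mu[a]\,q_a[x]$. This is exactly the point at which disjointness is used: without it, $p_\mu[x]$ would be a sum over multiple values of $a$ and the square root would not distribute.

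Next I would split the sum in the definition of $\h^2(p_\mu,q_\mu)$ according to which $U_a$ the summand $x$ belongs to, then pull out the factor $\sqrt{\mu[a]}$ from inside the square:
\[
\h^2(p_\mu,q_\mu)=\tfrac12\sum_{a\in S}\sum_{x\in U_a}\bigl(\sqrt{\mu[a]\,p_a[x]}-\sqrt{\mu[a]\,q_a[x]}\bigr)^2
=\sum_{a\in S}\mu[a]\cdot\tfrac12\sum_{x\in U_a}\bigl(\sqrt{p_a[x]}-\sqrt{q_a[x]}\bigr)^2.
\]
The inner sum is precisely $\h^2(p_a,q_a)$, since $p_a$ and $q_a$ are supported on $U_a$. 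Recognizing the outer sum as the expectation over $a\sim\mu$ yields the claimed identity $\h^2(p_\mu,q_\mu)=\mathbb{E}_{a\sim\mu}[\h^2(p_a,q_a)]$.

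As mentioned, no step presents a real obstacle: the identity is essentially a bookkeeping consequence of the fact that squared-Hellinger distance is a sum of local contributions and that the disjointness of supports prevents any cross-terms between different values of $a$.
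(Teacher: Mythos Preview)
Your proof is correct and is essentially the same direct calculation as the paper's: both split the sum over the disjoint supports $U_a$ and factor out $\mu[a]$. The only cosmetic difference is that the paper routes the computation through the fidelity $F(\cdot,\cdot)=1-\h^2(\cdot,\cdot)$ (showing $F(p_\mu,q_\mu)=\bE_{a\sim\mu}[F(p_a,q_a)]$), whereas you work directly with the definition of $\h^2$; the two are equivalent since $(\sqrt{p}-\sqrt{q})^2=p+q-2\sqrt{pq}$ and the $p,q$ parts each sum to $1$.
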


\begin{proof}
As in the proof of \clm{tensorization}, it suffices
to prove that the fidelity satisfies
$F(p_\mu,q_\mu)=\bE_{a\sim\mu}[F(p_a,q_a)]$.
This is clear, as it is simply the claim
\[\sum_{a\in S}\sum_{x\in U_a}\sqrt{\mu[a]p_a[x]\mu[a]q_a[x]}
=\sum_{a\in S}\mu[a]\sum_{x\in U_a}\sqrt{p_a[x]q_a[x]}. \qedhere\]
\end{proof}

\subsubsection{Properties of the Jensen-Shannon distance}

Here we will need some standard notation from information
theory. For random variables $X$ and $Y$ with finite supports,
we write $H(X)\coloneqq-\sum_x\Pr[X=x]\log\Pr[X=x]$ for the entropy
of $X$, and $I(X;Y)\coloneqq H(X)+H(Y)-H(X,Y)$
for the mutual information between $X$ and $Y$.
If $Z$ is another random variable, we will write
$I(X;Y|Z)\coloneqq \sum_z[\Pr(Z=z)\cdot I(X^{Z=z};Y^{Z=z})]$ for
the conditional mutual information, where we use
the notation $X^{Z=z}$ to denote the random variable $X$
conditioned on the event $Z=z$. We note that
$I(X;Y)=I(Y;X)$ and $I(X;Y|Z)=I(Y;X|Z)$.

The chain rule for mutual information is well-known.

\begin{claim}[Chain rule for mutual information]
\label{clm:chain_rule}
For discrete random variables $X$, $Y$, and $Z$, we have
\[I(X;Y|Z)=I(X,Z;Y)-I(Z;Y).\]
\end{claim}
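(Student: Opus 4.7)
The plan is to prove the identity by expanding both sides into a common sum of joint/marginal entropies using only the definitions given in the excerpt, namely $I(A;B)=H(A)+H(B)-H(A,B)$ and the averaged formula $I(X;Y|Z)=\sum_z\Pr(Z=z)\cdot I(X^{Z=z};Y^{Z=z})$, together with the basic identity $H(A|B)=H(A,B)-H(B)$ (which itself is a direct consequence of the entropy definition applied to the conditional distributions).

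First, I would handle the right-hand side. Using the definition of mutual information twice,
\[
I(X,Z;Y)-I(Z;Y) = \bigl(H(X,Z)+H(Y)-H(X,Y,Z)\bigr)-\bigl(H(Z)+H(Y)-H(Y,Z)\bigr).
\]
The $H(Y)$ terms cancel, and grouping the remainder as $\bigl(H(X,Z)-H(Z)\bigr)-\bigl(H(X,Y,Z)-H(Y,Z)\bigr)$ and applying $H(A|B)=H(A,B)-H(B)$ gives $H(X|Z)-H(X|Y,Z)$.

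Next, I would handle the left-hand side. For each fixed value $z$ of $Z$, the term $I(X^{Z=z};Y^{Z=z})$ equals $H(X^{Z=z})-H(X^{Z=z}\mid Y^{Z=z})$ by the same algebra applied to the conditional distributions. Averaging over $z$ with weights $\Pr(Z=z)$ and recognizing the two averages as $H(X|Z)$ and $H(X|Y,Z)$ respectively (this is precisely the definition of conditional entropy) yields
\[
I(X;Y|Z)=H(X|Z)-H(X|Y,Z).
\]
Comparing the two expressions gives the claim.

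The only subtlety, and the step I would be most careful about, is the averaging of conditional mutual informations over $z$: one has to verify that $\sum_z\Pr(Z=z)\,H(X^{Z=z}\mid Y^{Z=z})$ equals $H(X|Y,Z)$ as defined through the joint distribution of $(X,Y,Z)$. This follows by writing out the definitions and using $\Pr(Y^{Z=z}=y)=\Pr(Y=y\mid Z=z)$, but it is the one place where one must be attentive to whether $Y$ and $Z$ are being conditioned jointly versus sequentially. Aside from that bookkeeping, the proof is a routine entropy calculation and there is no substantive obstacle.
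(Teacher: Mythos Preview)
Your proof is correct. The paper does not actually include a proof of this claim; it simply states it as ``well-known'' and moves on. Your entropy-expansion argument is the standard textbook derivation and is entirely sound, including the bookkeeping step you flagged about averaging $H(X^{Z=z}\mid Y^{Z=z})$ over $z$ to recover $H(X\mid Y,Z)$.
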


We now use information theory
to characterize the Jensen-Shannon distance $\JS$.

\begin{claim}[Jensen-Shannon interpretation]
\label{clm:JS_I}
For finite-support probability distributions $\mu_0$ and $\mu_1$,
\[\JS(\mu_0,\mu_1)=I(X;\mu_X)\]
where $X$ is a $\Bernoulli(1/2)$ random variable.
\end{claim}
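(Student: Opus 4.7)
The plan is to unfold both sides of the claimed identity by direct computation, showing that the information-theoretic quantity $I(X;\mu_X)$ rearranges term-by-term into the Jensen--Shannon expression.

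First I would name the joint distribution more explicitly. Let $Y$ denote the random variable whose distribution, conditioned on $X=b$, is $\mu_b$; thus $\Pr[X=b, Y=y] = \frac{1}{2}\mu_b[y]$. Since $X \sim \Bernoulli(1/2)$, the marginal of $Y$ is the mixture $M \coloneqq \tfrac{1}{2}(\mu_0 + \mu_1)$.

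Next I would expand $I(X;Y) = H(Y) - H(Y|X)$. The first term is $H(Y) = -\sum_y M[y]\log M[y]$, and the second is $H(Y|X) = \tfrac{1}{2}H(\mu_0) + \tfrac{1}{2}H(\mu_1)$, since conditioning on $X=b$ makes $Y$ distributed as $\mu_b$. Substituting and splitting $M[y]\log M[y] = \tfrac{1}{2}\mu_0[y]\log M[y] + \tfrac{1}{2}\mu_1[y]\log M[y]$, I would group the $\mu_0$ and $\mu_1$ contributions separately to obtain
\[
I(X;Y) = \tfrac{1}{2}\sum_y \mu_0[y]\bigl(\log\mu_0[y] - \log M[y]\bigr) + \tfrac{1}{2}\sum_y \mu_1[y]\bigl(\log\mu_1[y] - \log M[y]\bigr).
\]
Using $\log\mu_b[y] - \log M[y] = \log\frac{2\mu_b[y]}{\mu_0[y]+\mu_1[y]}$, this is exactly the expression defining $\JS(\mu_0,\mu_1)$ in \defn{dist_measures}, completing the proof.

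There is no real obstacle here: the argument is a routine rearrangement, and the only mild care needed is the convention that $0 \log 0 = 0$ (and, where applicable, that $0 \log(0/0) = 0$), which is standard and handles the support issues when $\mu_0$ and $\mu_1$ have different supports.
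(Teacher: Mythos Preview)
Your proposal is correct and follows essentially the same approach as the paper: a direct unfolding of both sides. The only cosmetic difference is that the paper expands $I(X;\mu_X)$ via $H(X)+H(\mu_X)-H(X,\mu_X)$ (so an extra ``$+1$'' appears from $H(X)$ and is later absorbed), whereas you use the equivalent decomposition $H(Y)-H(Y\mid X)$; the resulting algebra is the same.
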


\begin{proof}
Let $\mu=(\mu_0+\mu_1)/2$.
We have
\[I(X;\mu_X)=H(X)+H(\mu_X)-H(X\mu_X)
=1+\sum_x \mu[x]\log\frac{1}{\mu[x]}
-\frac12\sum_x\mu_0[x]\log\frac2{\mu_0[x]}+\mu_1[x]\log\frac2{\mu_1[x]}\]
\[=1+\frac{1}{2}\sum_x\mu_0[x]\log\frac{\mu_0[x]}{\mu_0[x]+\mu_1[x]}
+\mu_1[x]\log\frac{\mu_1[x]}{\mu_0[x]+\mu_1[x]}.\]
This last line equals the definition of $\JS(\mu_0,\mu_1)$
by using $1=(1/2)\sum_x\mu_0[x]+\mu_1[x]$.
\end{proof}

We will also need to understand $I(Z;\mu_Z)$ when $Z$
is a Bernoulli distribution with parameter not quite
equal to $1/2$.

\begin{claim}[Information of imperfect coins]
\label{clm:JS_bias}
Let $Y_1$ and $Y_2$ be random variables drawn from distributions $\mu_0$ and $\mu_1$, respectively.
Let $X$ be a $\Bernoulli(1/2)$ random variable,
and let $Z$ be a $\Bernoulli((1+\gamma)/2)$ be a Bernoulli
random variable with bias $-1 \le \gamma \le 1$. 
Then
\[
I(Z;Y_Z) \ge (1-|\gamma|)I(X;Y_X)
         = (1-|\gamma|)\JS(p_0,p_1).
\]
\end{claim}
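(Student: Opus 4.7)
The plan is to realize the biased coin $Z$ as a convex combination of the fair coin $X$ and a deterministic bit, and then invoke concavity of mutual information in the input distribution for a fixed channel.

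First, I would write the distribution of $Z$ as
\[ P_Z = (1-|\gamma|)\cdot P_X + |\gamma|\cdot \delta_b, \]
where $\delta_b$ is the point mass at $b = \mathbf{1}[\gamma \ge 0]$. A direct check shows the right-hand side has mean $(1+\gamma)/2$, matching $Z$. Crucially, the channel from the coin value to the output is the same in all cases: on input $z \in \B$, the output is a fresh sample from $\mu_z$. So once we fix this channel $P_{Y|Z}(\cdot|z) = \mu_z(\cdot)$, the three random variables $Y_Z$, $Y_X$, and $Y_b$ are obtained by feeding $Z$, $X$, and $\delta_b$ through the \emph{same} Markov kernel.

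Next, I would invoke the standard fact that $I(\cdot\,;\cdot)$ is concave in the input distribution when the channel is fixed. Writing $I(Z;Y_Z) = H(Y_Z) - H(Y_Z \mid Z)$, the term $H(Y_Z \mid Z) = \sum_z P_Z(z) H(\mu_z)$ is linear in $P_Z$, while $H(Y_Z)$ is concave in $P_{Y_Z}$ and hence in $P_Z$ (since $P_{Y_Z}$ is a linear function of $P_Z$). Applying concavity to the decomposition above gives
\[ I(Z; Y_Z) \;\ge\; (1-|\gamma|)\, I(X; Y_X) + |\gamma|\, I(\delta_b; Y_b). \]
Since $\delta_b$ is deterministic, $I(\delta_b; Y_b) = 0$, leaving $I(Z;Y_Z) \ge (1-|\gamma|)\, I(X;Y_X)$.

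Finally, by \clm{JS_I}, $I(X; Y_X) = \JS(\mu_0, \mu_1)$, which finishes the proof. There is no genuine obstacle here; the only thing to be a little careful about is cleanly stating the concavity fact and confirming that the channel $P_{Y|Z}$ really is the same object in all three terms of the mixture (so that mixing inputs and then passing through the channel yields the same joint distribution as first sampling whether to use $X$ or $\delta_b$ and then applying the channel). If one preferred to avoid the ``concavity of mutual information'' black box entirely, the same inequality can be obtained by introducing an auxiliary indicator $W \sim \Bernoulli(|\gamma|)$ for which branch of the mixture is taken, writing $I(Z; Y_Z) \ge I(Z; Y_Z, W) - H(W \mid Y_Z, Z)$-style manipulations, and using that conditioned on $W=0$ the pair $(Z, Y_Z)$ has the same distribution as $(X, Y_X)$; but the concavity route is cleaner.
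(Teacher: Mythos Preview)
Your proof is correct and is essentially the same argument as the paper's: both write $P_Z$ as the mixture $(1-|\gamma|)P_X + |\gamma|\delta_b$ and observe that the constant branch contributes zero information. You package this as ``concavity of mutual information in the input distribution for a fixed channel,'' while the paper unrolls that concavity inline via an auxiliary indicator $B\sim\Bernoulli(|\gamma|)$ and the chain rule (which is exactly the alternative you sketch at the end).
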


\begin{proof}
Consider the case where $\gamma \ge 0$.
Let $B \sim \Bernoulli(\gamma)$ and 
\[
Z \sim \begin{cases}
1 & \mbox{if } B = 1 \\
\Bernoulli(\tfrac12) & \mbox{otherwise.}
\end{cases}
\]
Then $Z \sim \Bernoulli(\tfrac{1+\gamma}{2})$. Using the fact that $B$ and $Y_Z$ are independent conditioned on $Z$, 
the chain rule, and the non-negativity of conditional mutual information, we obtain
\[
I(Z; Y_Z) = I(B,Z; Y_Z) = I(B; Y_Z) + I(Z; Y_Z \mid B)
\ge I(Z; Y_Z \mid B).
\]
Then
\[
I(Z; Y_Z \mid B) 
= (1-\gamma) I(Z; Y_Z \mid B=0) 
= (1-\gamma) I(X; Y_X).
\]
The case where $\gamma < 0$ is obtained by a symmetric argument.
\end{proof}

\subsection{Noisy oracles and the definition of
\texorpdfstring{$\noisyR(f)$}{noisyR(f)}}

We use the following sequence of definitions to define
$\noisyR(f)$.

\begin{definition}[Noisy oracles]\label{def:noisy_oracle}
A \emph{noisy oracle} to a bit $b \in \{0,1\}$ is an oracle that takes a parameter $\gamma$ in the range $-1 \le \gamma \le 1$ and outputs a random bit $a \in \{0,1\}$ that satisfies
$\Pr[ a = b ] = \frac{1+\gamma}{2}$.
We write $\textsc{NoisyOracle}_b(\gamma)$ to denote a call to the noisy oracle for bit $b$ with parameter $\gamma$.
Each call to a noisy oracle returns an independent random variable.
The \emph{cost} of a query to a noisy oracle with parameter $\gamma$ is defined to be $\gamma^2$.
\end{definition}

Note that the user of the noisy oracle is allowed
to \emph{choose} the bias parameter $\gamma$, and smaller
$\gamma$ comes with smaller cost.

\begin{definition}[Noisy oracle algorithms]
A \emph{noisy oracle decision tree} $D$ on $n$ bits
is a binary tree with internal nodes labeled by pairs
$(i,\gamma)$ with $i\in[n]$ and $\gamma\in[0,1]$,
and leaves labeled by $\B$.
Unlike for regular decision trees, we do not forbid
descendants from having the same label as ancestors.
We only allow finite decision trees.

A \emph{noisy oracle randomized algorithm} $R$ on $n$
bits is a finite-support probability distribution over
noisy oracle decision trees on $n$ bits. For $x\in\B^n$,
we let $R(x)$ be the random variable representing
the output of $R$ on $x$, defined as the result of
sampling a decision tree $D$ from $R$ and walking
down the tree to a leaf, where at each internal node labeled
$(i,\gamma)$ we call the noisy oracle for $x_i$
with parameter $\gamma$ and go to the left child if
the output is $0$ and to the right child if the output is $1$.
The cost of such a path to a leaf is the sum of $\gamma^2$
for parameters $\gamma$ in the path, and $\cost(R,x)$
denotes the expected cost of running $R$ on $x$.

We say that $R$ computes Boolean function $f$ to error $\epsilon$
if $\Pr[R(x)=f(x)]\ge 1-\epsilon$ for all $x\in\Dom(f)$.
\end{definition}

\begin{definition}[Noisy randomized query complexity]
\label{def:noisyR}
The \emph{$\epsilon$-error noisy randomized query complexity} of a (possibly partial) Boolean function $f$, denoted
$\noisyR_\epsilon(f)$,
is the infimum expected worst-case
cost of a noisy oracle randomized algorithm that computes $f$ to error $\epsilon$. In other words, the cost is measured
in the worst case against inputs $x\in\Dom(f)$,
but on expectation against the internal randomness
of the algorithm and against the randomness of the oracle answers.

When $\epsilon=1/3$, we omit it and write
We write $\noisyR(f)$.
\end{definition}

We note that the set of noisy oracle randomized algorithms
on $n$ bits is not compact, so the infimum in the definition
of $\noisyR_\epsilon(f)$ need not be attained. However,
this won't bother us too much, as there is always some
algorithm attaining (say) cost $2\noisyR_\epsilon(f)$
for computing $f$ to error $\epsilon$, and we will
not care about constant factors. We also note that
noisy oracle randomized algorithms can be amplified as usual,
which means that the constant $1/3$ is arbitrary.
Further, by cutting off paths that cost too much
and using Markov's inequality, it's not hard to see
that there is always an algorithm computing
$f$ to bounded error using noisy oracles whose worst-case cost
is $O(\noisyR(f))$ even in the absolute worst case
(getting maximally unlucky with oracle answers
and internal randomness).

The following well-known lemma will be very convenient
for analyzing low-bias oracles. For completeness,
we prove it in \app{amplify}.

\begin{restatable}[Small bias amplification]{lemma}{amplify}
\label{lem:amplify}
Let $\gamma\in[-1/3,1/3]$ be nonzero, and let $k$ be an odd positive
integer which is at most $1/\gamma^2$. Let $X$
be the Boolean-valued random variable we get by
generating $k$ independent bits from $\Bernoulli((1+\gamma)/2)$
and setting $X$ to their majority vote. Then
$X$ has distribution $\Bernoulli((1+\gamma')/2)$,
where $\gamma'\in[-1,1]$ has the same sign as $\gamma$ and
\[(1/3)\sqrt{k}|\gamma|\le |\gamma'|\le 3\sqrt{k}|\gamma|.\]
\end{restatable}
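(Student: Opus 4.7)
By symmetry (negating every $Y_i$ flips the sign of both $\gamma$ and $\gamma'$), it suffices to treat $\gamma>0$. Write $p:=(1+\gamma)/2$ and $S:=Y_1+\cdots+Y_k$, so that $\gamma'=2\Pr_p[S\ge(k+1)/2]-1$. My plan is to regard $f(q):=\Pr_q[S\ge(k+1)/2]$ as a function of the bias parameter $q\in[0,1]$ and apply the fundamental theorem of calculus on $[1/2,p]$. Since $k$ is odd the binomial point mass at $k/2$ vanishes and symmetry gives $f(1/2)=1/2$, so
\[
\gamma'/2=f(p)-f(1/2)=\int_{1/2}^{p}f'(q)\,dq.
\]
Differentiating the tail sum $\sum_{j\ge(k+1)/2}\binom{k}{j}q^j(1-q)^{k-j}$ term by term and using $j\binom{k}{j}=k\binom{k-1}{j-1}$ and $(k-j)\binom{k}{j}=k\binom{k-1}{j}$ causes the interior terms to telescope, leaving only the boundary contribution $f'(q)=k\binom{k-1}{(k-1)/2}(q(1-q))^{(k-1)/2}$. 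Substituting $u=2q-1$ so that $q(1-q)=(1-u^2)/4$ then yields the clean formula
\[
\gamma'=k\binom{k-1}{(k-1)/2}\,2^{-(k-1)}\int_{0}^{\gamma}(1-u^2)^{(k-1)/2}\,du.
\]

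From here two elementary estimates finish the job. For the prefactor, Stirling's approximation (or an explicit inductive bound on the central binomial coefficient) gives $k\binom{k-1}{(k-1)/2}\,2^{-(k-1)}=\sqrt{2k/\pi}\,(1+O(1/k))$, so it lies in $[c_1\sqrt{k},c_2\sqrt{k}]$ for explicit absolute constants. For the integral, the hypothesis $k\le 1/\gamma^2$ gives $\gamma\le 1/\sqrt{k}$, so on the whole interval of integration the integrand $(1-u^2)^{(k-1)/2}$ is at most $1$ and at least $(1-1/k)^{(k-1)/2}\ge e^{-1/2}$ (where the last inequality is a short calculus exercise on $(k-1)\ln(1-1/k)\ge -1$). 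Hence $\int_0^{\gamma}(1-u^2)^{(k-1)/2}\,du\in[e^{-1/2}\gamma,\gamma]$, and multiplying the two bounds gives $\gamma'=\Theta(\sqrt{k}\,\gamma)$ with implied constants tending to $\sqrt{2/\pi}\cdot e^{-1/2}\approx 0.48$ and $\sqrt{2/\pi}\approx 0.80$ as $k\to\infty$.

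The one remaining subtlety---the only real obstacle---is verifying that the overall constants sit inside the specific window $[1/3,3]$ required by the lemma. The asymptotic range $[0.48,0.80]$ is comfortably inside this window, but the $O(1/k)$ Stirling slack could in principle threaten the bound for small $k$. I would handle those by computing $\gamma'(\gamma)$ directly for the first few odd values from the closed-form integral: for instance $\gamma'=\gamma$ when $k=1$ and $\gamma'=\gamma(3-\gamma^2)/2$ when $k=3$, both of which satisfy the claimed two-sided bound by inspection on $|\gamma|\le 1/\sqrt{k}$. Once a finite list of small $k$ is dispatched this way, the asymptotic estimate handles every remaining $k$ uniformly, and the lemma follows.
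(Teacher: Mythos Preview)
Your proof is correct and takes a genuinely different route from the paper's. The paper works directly with the sum
\[
\gamma' = 2^{-k}\sum_{i=0}^{(k-1)/2}\binom{k}{i}(1-\gamma^2)^i\bigl[(1+\gamma)^{k-2i}-(1-\gamma)^{k-2i}\bigr],
\]
bounding it below via $(1+\gamma)^x-(1-\gamma)^x\ge 2\gamma x$ and relating the result to the mean absolute deviation $M_k$ of the symmetric binomial; for the upper bound it splits the sum at a carefully chosen index, controls one piece by a Chernoff tail and the other by a power-series estimate on $(1+\gamma)^{k-2i}-(1-\gamma)^{k-2i}$. Your integral representation $\gamma'=c_k\int_0^\gamma(1-u^2)^{(k-1)/2}\,du$ with $c_k=k\binom{k-1}{(k-1)/2}2^{-(k-1)}$ is cleaner: the whole two-sided bound reduces to one estimate on $c_k$ (Stirling) and one trivial estimate on the integrand, whereas the paper's upper bound argument is noticeably more laborious. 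The two approaches are related at one point: your prefactor satisfies $c_k=2M_k$, so the paper's lemma $\sqrt{k/(2\pi)}\le M_k\le\sqrt{k/(2\pi)}(1+1/k)$ is exactly the sharp Stirling bound you need.

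Your one stated worry, that small $k$ might need separate treatment, is in fact unnecessary. Using $c_k\ge\sqrt{2k/\pi}$ (valid for all odd $k\ge 1$) together with your integrand bound $(1-u^2)^{(k-1)/2}\ge e^{-1/2}$ on $[0,1/\sqrt{k}]$ gives $\gamma'\ge\sqrt{2/(e\pi)}\sqrt{k}\,\gamma\approx 0.484\sqrt{k}\,\gamma>(1/3)\sqrt{k}\,\gamma$ uniformly in $k$; and $c_k\le\sqrt{2k/\pi}(1+1/k)\le 2\sqrt{2k/\pi}$ together with the trivial integrand bound $1$ gives $\gamma'\le 2\sqrt{2/\pi}\sqrt{k}\,\gamma\approx 1.60\sqrt{k}\,\gamma<3\sqrt{k}\,\gamma$, again uniformly. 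So the finite check you proposed is not needed once the Stirling constant is made explicit.
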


\subsection{Transcripts, Hellinger distinguishing cost,
and \texorpdfstring{$\sfR(g)$}{sfR(g)}}
\label{sec:shaltiel}

To get our composition theorem to work, we will need
to start with very hard $0$- and $1$-distributions for $g$.
We will prove our lower bound in a way that clarifies
the dependence on the hardness of these distributions:
the lower bound will be in terms of the \emph{Hellinger
distinguishing cost} of these distributions,
which we define below. We will then cite our
companion manuscript \cite{BB20a} to ensure
that there exist hard distributions for $g$ whose
Hellinger distinguishing cost is $\Omega(\R(g))$.

\begin{definition}[Transcript]\label{def:transcript}
Let $D$ be a decision tree
on $n$ bits, and let $x\in\B^n$.
The \emph{transcript} of $D$ when run on $x$, denoted
$\tran(D,x)$, is the sequence of pairs
$(i_1,x_{i_1}),(i_2,x_{i_2}),\dots,(i_T,x_{i_T})$
consisting of all queries $i_t\in[n]$ that $D$ makes and all
answers $x_{i_t}\in\B$ that $D$ receives to its queries, until a leaf
is reached.

The transcript of $D$ on a distribution $\mu$ of inputs is the
random variable which takes value $\tran(D,x)$ when $x$
is sampled from $\mu$.

Furthermore, if $R$ is a randomized decision tree and $\mu$
is a distribution over $\B^n$, we define the transcript of $R$
when run on $\mu$, denoted $\tran(R,\mu)$, to be the random
variable which evaluates to the pair $(D,\tran(D,x))$ when
$D$ is the decision tree sampled from $R$ and $x$ is the input
sampled from $\mu$. In other words, the transcript writes down
both the queries seen and the value of the
internal randomness of the algorithm.
\end{definition}

\begin{definition}[Hellinger distinguishing cost]
\label{def:Hellinger_distinguishing_cost}
Let $n\in\bN$ and let $\mu_0$ and $\mu_1$ be distributions
over $\B^n$. The \emph{Hellinger distinguishing cost} of
$\mu_0$ and $\mu_1$ is 
\[
\cost(\mu_0,\mu_1) \coloneqq \min_R
\frac{\min\{\cost(R,\mu_0),\cost(R,\mu_1)\}} 
     {\h^2(\tran(R,\mu_0),\tran(R,\mu_1))},
\]
where the minimum is taken over all randomized decision
trees $R$
%, $\h^2$ denotes the squared Hellinger distance, 
and we interpret $x/0 = \infty$ for every $x \ge 0$
in the minimum.
\end{definition}

Informally, the Hellinger distinguishing cost measures
the number of queries a randomized algorithm must make
in order to ensure it behaves differently on $\mu_0$
and $\mu_1$. We allow algorithms to behave only a little
differently on $\mu_0$ and $\mu_1$ if their cost is low enough.

Next, we will define the ``Shaltiel free'' randomized
query complexity of $g$ as the maximum
Hellinger distinguishing cost between $0$- and $1$-distributions
of $g$. We name this measure $\sfR(g)$ after Shaltiel \cite{Sha03}
who showed that some distributions for a Boolean
function $g$ may be hard to compute to bounded
error without being sufficiently difficult
in other ways (e.g.\ they may be trivial to solve
to small bias).

\begin{definition}[Shaltiel-free randomized query complexity]
\label{def:sfR}
Let $g$ be a (possibly partial) function. The
\emph{Shaltiel-free randomized query complexity} of $g$, denoted
$\sfR(g)$, is the maximum over all distributions $\mu_0$ and
$\mu_1$ supported on $g^{-1}(0)$ and $g^{-1}(1)$, respectively,
of the Hellinger distinguishing cost of $\mu_0$ and $\mu_1$.
In other words,
\[
\sfR(g)\coloneqq\max_{\substack{\mu_0\,:\,\supp(\mu_0) \subseteq g^{-1}(0) \\ \mu_1\,:\,\supp(\mu_1) \subseteq g^{-1}(1)}} \cost(\mu_0,\mu_1).
\]
If $g$ is constant, define $\sfR(g)$ to be $0$.
\end{definition}

The result we need from our companion manuscript
\cite{BB20a} can then be phrased as follows.

\begin{theorem}\label{thm:minimax}
For all (possibly partial) Boolean functions $g$,
$\sfR(g)=\Omega(\R(g))$.
\end{theorem}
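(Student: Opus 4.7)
The plan is to exhibit distributions $\mu_0$ supported on $g^{-1}(0)$ and $\mu_1$ supported on $g^{-1}(1)$ such that, for every randomized decision tree $R$,
\[
\h^2\bigl(\tran(R,\mu_0),\tran(R,\mu_1)\bigr) \;\le\; \frac{C\cdot\min\{\cost(R,\mu_0),\cost(R,\mu_1)\}}{\R(g)}
\]
for an absolute constant $C$. I would argue the contrapositive via amplification. Suppose that $\sfR(g)\ll \R(g)$, so that for every pair $(\mu_0,\mu_1)$ there is a randomized decision tree $R=R_{\mu_0,\mu_1}$ whose transcript squared-Hellinger distance is disproportionately large relative to its cost, namely $\h^2\bigl(\tran(R,\mu_0),\tran(R,\mu_1)\bigr)\ge \cost(R)/\sfR(g)$.

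Starting from such a cheap distinguisher $R$, Claim~\ref{clm:tensorization} lets me tensorize: running $R$ independently on $k=\lceil 1/\epsilon\rceil$ samples from $\mu_b^{\otimes k}$, where $\epsilon\coloneqq \h^2(\tran(R,\mu_0),\tran(R,\mu_1))$, produces transcripts whose squared-Hellinger distance is $1-(1-\epsilon)^k=\Omega(1)$. By Claim~\ref{clm:h_vs_JS_vs_S} this translates to $\Omega(1)$ total-variation distance, giving a constant-bias distinguisher between $\mu_0^{\otimes k}$ and $\mu_1^{\otimes k}$ at total cost $k\cdot\cost(R)=O(\sfR(g))=o(\R(g))$.

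To turn this product-distribution distinguisher into a bounded-error single-input algorithm for $g$, I would combine it with Yao's minimax. The critical point is that the above construction produces a cheap distinguisher \emph{uniformly} for every adversary pair $(\mu_0,\mu_1)$ supported on the preimages of $0$ and $1$. Using Claim~\ref{clm:JS_I} to replace $\h^2$ by the Jensen--Shannon divergence (losing only a constant factor by Claim~\ref{clm:h_vs_JS_vs_S}), the adversary's payoff becomes the mutual information $I(X;\tran(R,Y))$ for $X\sim\mathrm{Bernoulli}(1/2)$ and $Y\sim\mu_X$, which is concave in the pair $(\mu_0,\mu_1)$ and convex in the distribution over decision trees $R$. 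Sion's minimax theorem then allows me to swap the order of quantifiers, yielding a single distribution over randomized decision trees of cost $o(\R(g))$ that achieves constant advantage against \emph{every} hard pair, contradicting Yao's minimax characterization of $\R(g)$.

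The main obstacle is the minimax interchange: the natural payoff $\cost(R)/\h^2$ is not concave in $(\mu_0,\mu_1)$, and transcripts live in a space whose geometry depends on $R$. Routing through the information-theoretic reformulation $\h^2\asymp\JS=I(X;\tran)$ sidesteps this, because mutual information is well-behaved under mixtures of input distributions. After the interchange, the remaining steps---Markov-type truncation converting expected cost to worst-case cost, and the standard amplification from constant bias to $1/3$ error---are routine. The detailed implementation of this strategy is carried out in the companion manuscript~\cite{BB20a}, whose hard-distribution construction I would invoke directly.
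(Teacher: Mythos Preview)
The paper itself does not prove this statement; it simply cites the companion manuscript \cite{BB20a}. Your proposal ultimately does the same thing in its last sentence, so at the level of ``what proof goes in this paper'' you agree with the authors: defer to \cite{BB20a}.

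However, the sketch you give before that deferral has a real gap, and it is worth naming because it is exactly the obstacle that makes \thm{minimax} nontrivial. Your amplification step runs $R$ on $k$ \emph{independent} samples from $\mu_b^{\otimes k}$, but an algorithm for $g$ receives a single input $x$, not $k$ fresh draws from $\mu_b$. Running $R$ repeatedly on the same $x$ produces transcripts that are correlated through $x$; Claim~\ref{clm:tensorization} does not apply, and the Hellinger distance need not amplify. So after tensorization you have a cheap distinguisher for product distributions, which is not the same as a cheap algorithm for $g$, and neither Yao nor Sion bridges that gap. Relatedly, your concavity claim is backwards: for fixed $P_X$, mutual information $I(X;Z)$ is \emph{convex} in the channel $P_{Z|X}$, hence convex (not concave) in $(\mu_0,\mu_1)$, so the Sion step as you state it does not go through.

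The genuine difficulty---and the reason the paper calls this measure ``Shaltiel-free''---is precisely that a standard Yao hard distribution for $\R(g)$ can fail to be hard at small bias scales (Shaltiel's examples \cite{Sha03}). What is needed is a single pair $(\mu_0,\mu_1)$ that is simultaneously hard at \emph{every} Hellinger level $\epsilon$, so that achieving $\h^2=\epsilon$ on a single input already costs $\Omega(\epsilon\cdot\R(g))$ queries, with no amplification required. Constructing such a uniformly hard distribution is what \cite{BB20a} does, via a minimax argument more delicate than the one you outline.
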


\section{Counterexample to perfect composition}

To define the partial functions used to prove
\thm{counterexample}, we will use $f(x)=*$ to denote
that $x\notin\Dom(f)$.

\begin{definition}
Define $\gapmaj_m \colon \{0,1\}^m \to \{0,1\}$
to be the gap majority function
\[
\gapmaj_m(x) = \begin{cases}
1 & \mbox{if } |x| = \lceil\frac m2 + 2\sqrt{m}\rceil \\
0 & \mbox{if } |x| = \lfloor\frac m2 - 2\sqrt{m}\rfloor \\
* & \mbox{otherwise.}
\end{cases}
\]
\end{definition}

Note that this is simply the majority function with
a Hamming weight promise which restricts the input
to two Hamming levels $O(\sqrt{m})$ apart.

\begin{lemma}\label{lem:gapmaj}
The randomized query complexity of the gap majority
function on $m$ bits is
\[R(\gapmaj_m) = \Theta(m).\]
\end{lemma}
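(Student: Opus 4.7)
The upper bound $\R(\gapmaj_m)=O(m)$ is immediate: just query every input bit and output the majority. For the lower bound, the plan is a standard Yao-style distributional argument combined with Hellinger tensorization. Let $\mu_1$ be the uniform distribution over inputs of Hamming weight $\lceil m/2+2\sqrt m\rceil$ and $\mu_0$ be the uniform distribution over inputs of Hamming weight $\lfloor m/2-2\sqrt m\rfloor$. By Yao's minimax principle and \thm{minimax} (or just the basic Yao inequality), it suffices to exhibit a constant $c>0$ such that any deterministic decision tree $D$ of height at most $cm$ fails to distinguish $\mu_0$ from $\mu_1$ with constant advantage, i.e.\ $\h^2(\tran(D,\mu_0),\tran(D,\mu_1))=o(1)$.

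The key estimate is a per-query bias bound. Fix any deterministic tree $D$ of height $k\le cm$ for a small absolute constant $c$. For any partial transcript $\tau$ (a sequence of previously queried indices and answers) consistent with both $\mu_0$ and $\mu_1$, and any next index $i$ queried by $D$ after seeing $\tau$, the marginal probability that $x_i=1$ under $\mu_b^{\tau}$ is precisely $(W_b-w)/(m-|\tau|)$, where $W_b$ is the total Hamming weight of $\mu_b$-inputs and $w$ is the number of $1$s already seen in $\tau$. Since $W_1-W_0=4\sqrt m$ and $m-|\tau|\ge(1-c)m$, the two conditional biases of bit $i$ differ by at most $O(1/\sqrt m)$. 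Consequently the single-step squared-Hellinger distance between the next-bit distributions under $\mu_0^{\tau}$ and $\mu_1^{\tau}$ is $O(1/m)$.

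To turn this per-query bound into a bound on the transcript, I would use the chain-rule form of squared-Hellinger distance (or equivalently expand fidelity node-by-node along $D$): the fidelity $F(\tran(D,\mu_0),\tran(D,\mu_1))$ factors as an expectation of products of single-step fidelities along the root-to-leaf path, each of which is $1-O(1/m)$. Summing $k\le cm$ such contributions yields
\[
\h^2(\tran(D,\mu_0),\tran(D,\mu_1))\le O(k/m)\le O(c),
\]
which can be made arbitrarily small by choosing $c$ small. Since distinguishing $\mu_0$ from $\mu_1$ with constant advantage requires Hellinger distance bounded away from $0$ (by \clm{h_vs_JS_vs_S} relating $\Delta$ to $\h^2$), this forces $k=\Omega(m)$, giving $\R(\gapmaj_m)=\Omega(m)$.

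The main obstacle is the adaptivity of $D$: queries made later depend on earlier answers, so one cannot simply apply Hellinger tensorization on the input distribution. The resolution is the per-transcript conditioning described above, which is clean because $\mu_0$ and $\mu_1$ are uniform on Hamming slices and so the conditional distribution of the next bit is an explicit hypergeometric expression. A secondary technical point is controlling the conditioning when $|\tau|$ grows comparable to $m$, but this is handled by choosing the absolute constant $c$ small enough that $m-|\tau|=\Theta(m)$ throughout the tree.
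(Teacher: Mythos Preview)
Your proposal is correct but takes a genuinely different route from the paper. The paper's proof exploits the permutation symmetry of $\mu_0,\mu_1$ to reduce immediately to a non-adaptive algorithm that reads the first $k=m/1000$ bits, observes that the optimal output is the majority of those bits, replaces sampling without replacement by sampling with replacement (which can only help the algorithm), and then appeals to the small-bias amplification lemma (\lem{amplify}) to conclude that $k$ independent bits of bias $\approx 2/\sqrt{m}$ cannot be amplified to bias $1/3$. Your approach instead tracks the squared-Hellinger distance of transcripts directly, bounding each conditional single-bit $\h^2$ by $O(1/m)$ via the hypergeometric marginals and then chaining via fidelity to get $\h^2(\tran(D,\mu_0),\tran(D,\mu_1))=O(k/m)$.

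Both arguments are sound. The paper's is more elementary (no distance-measure machinery) but relies on \lem{amplify}; yours sits naturally inside the Hellinger framework the paper develops and avoids the amplification lemma altogether. One remark: the ``main obstacle'' you flag---adaptivity of $D$---is in fact a non-issue here, because $\mu_0$ and $\mu_1$ are permutation-invariant, so the distribution of the sequence of answers is identical for every adaptive strategy querying $k$ distinct indices. The paper uses exactly this observation to discard adaptivity in one line; your per-transcript conditioning handles it correctly but is more work than the problem requires. Also, your invocation of \thm{minimax} is unnecessary (as you note parenthetically): the basic Yao reduction to a deterministic tree against the mixture $(\mu_0+\mu_1)/2$ suffices.
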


The proof of this lemma is a standard argument,
but we repeat it here for completeness.

\begin{proof}
The upper bound follows by querying all the bits of the
input. For the lower bound, let $\mu$ be the uniform
distribution on the domain of $\gapmaj_m$.
Suppose there was an algorithm $R$ that solved
$\gapmaj_m$ to error $1/3$ using only $m/1000$ queries.
Then by convexity, there is some deterministic decision
tree $D$ in the support of $R$ that solves $\gapmaj_m$
to bounded error against inputs from $\mu$.
The height of $D$ is still at most $m/1000$.

Now, since $\mu$ is symmetric under permuting
the input bits, the order in which $D$ queries
the inputs doesn't matter; we can assume it reads
them from left to right. Indeed, we can even
assume that $D$ reads the first $k=m/1000$ bits
of the input $x$ in one batch, and then gives
the output. Further, it is not hard to see
that $D$ maximizes its probability of success
by outputting the majority of the $k$ bits it sees.
Assume for simplicity that $k$ is odd.
Then the success probability of $D$ is the same on
$0$- and $1$-inputs from $\mu$, and equals
the probability that, when a string of length
$m$ and Hamming weight $\lceil m/2+\sqrt{m}\rceil$
is selected at random, its first $k$ bits have
Hamming weight at least $k/2$.

The $k$ bits are selected from the
$m$ bit string of that Hamming weight without replacement.
However, if they were selected with replacement,
the probability of seeing at least $k/2$
ones out of the $k$ bits would only increase,
so it suffices to upper bound the probability
of seeing $k/2$ or more ones in a string
of length $k$ when each bit is sampled
independently from $\Bernoulli(1/2+1/\sqrt{m})$.
This is precisely what we get by amplifying
bias $2/\sqrt{m}$ using $m/1000$ repetitions,
which is bias at most $1/5<1/3$ (and hence error
greater than $1/3$) by \lem{amplify}.
This gives a contradiction.
\end{proof}

We will take the inner function $g$ to be $\gapmaj_m$
in our counterexample. This is also essentially the same
inner function as used in the relational
counterexample of \cite{GLSS19}. In that construction,
the outer relation took an $m$ bit string $x$ as input
and accepted as output any string $y$ that has Hamming
distance within $m/2-\sqrt{m}$ of $x$. This relation
requires $\Theta(\sqrt{m})$ queries to solve to bounded
error using a randomized algorithm, but $f\circ g$
can be computed using only $O(m)$ queries instead of
$m^{3/2}$.

Our construction is motivated by this approach, but is somewhat
different as we need $f$ to be a partial function.
Let $\textsc{ApproxIndex} : \{0,1\}^k \times \{0,1,2\}^{2^k} \to \{0,1,*\}$ be the partial function on
$n = k + 2^k$-dimensional inputs defined by
\[
\textsc{ApproxIndex}(a,x) = \begin{cases}
x_a & \mbox{if } x_b = x_a\in\B \mbox{ for all $b$ that satisfy } |b-a| \le \frac k2 - 2\sqrt{k \log k}\\
&\mbox{and } x_b=2 \mbox{ for all other $b$},\\
*   & \mbox{otherwise.}
\end{cases}
\]
In other words, $\textsc{ApproxIndex}$
takes input strings that have two parts: the index
part and the array part. The promise is that
in the array, all positions within $k/2-2\sqrt{k\log k}$
of the index have the same Boolean value, and all
positions far from the index contain the value $2$.
Essentially,
the goal is to find an approximation of the index.

Note that $\textsc{ApproxIndex}$ has input alphabet
of size $3$. We can easily convert this into a function
with input alphabet $\B$ by using binary representation,
which only changes the input size and the complexity
of the function by a constant factor. Hence we will treat
$\textsc{ApproxIndex}$ as a partial Boolean function.
This will be our outer function $f$.
We now show the following lemma.

\begin{lemma}
The randomized query complexity of the approximate address function on $n = k + 2^k$ bits is
\[
R(\textsc{ApproxIndex}) = \Theta(\sqrt{k\log k}) = \Theta(\sqrt{\log n\log\log n}).
\]
\end{lemma}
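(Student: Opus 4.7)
The plan is to prove matching upper and lower bounds of $\Theta(\sqrt{k\log k})$ directly. The equivalent expression $\Theta(\sqrt{\log n\log\log n})$ then follows from $n = k + 2^k$, since $k = \Theta(\log n)$ and $\log k = \Theta(\log\log n)$.

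For the upper bound, I would give the following randomized algorithm. Pick a uniformly random subset $S\subseteq[k]$ of size $r = C\sqrt{k\log k}$ for a sufficiently large constant $C$, query $a_i$ for each $i \in S$, then form $y \in \{0,1\}^k$ by setting $y_i = a_i$ for $i \in S$ and choosing $y_i$ uniformly at random for $i \notin S$, and finally query $x_y$ and return it (returning an arbitrary bit if $x_y = 2$). The Hamming distance $|y - a|$ equals the number of disagreements on $[k]\setminus S$, which is a binomial random variable with $k-r$ trials and parameter $1/2$, so its mean is $(k-r)/2$. Hoeffding's inequality bounds the probability that $|y-a| > k/2 - 2\sqrt{k\log k}$ by $\exp(-2((C/2-2)\sqrt{k\log k})^2/(k-r)) = k^{-\Omega(1)}$ once $C > 4$. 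On the complementary event the promise forces $x_y \in \{0,1\}$ and $x_y$ equals the correct output. The total (worst-case) cost is $r + 1 = O(\sqrt{k\log k})$.

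For the lower bound, I would invoke Yao's minimax principle against the hard distribution that samples $a\in\{0,1\}^k$ uniformly and $c\in\{0,1\}$ uniformly and independently, and then sets $x_b = c$ for every $b$ within Hamming distance $k/2 - 2\sqrt{k\log k}$ of $a$ and $x_b = 2$ otherwise. Let $D$ be any deterministic decision tree of depth $T = c_0\sqrt{k\log k}$ for a sufficiently small constant $c_0$. The key observation is that after any history in which $D$ has queried $s \le T$ bits of $a$, the conditional distribution of the remaining $k - s$ bits of $a$ is uniform and independent of the history, so for any array query $y$ the conditional probability that $|y - a| \le k/2 - 2\sqrt{k\log k}$ is, by Hoeffding applied to the unqueried coordinates, at most $\exp(-\Omega(\log k)) = k^{-\Omega(1)}$ provided $c_0$ is small. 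A union bound over the at most $T$ array queries shows that with probability $1 - o(1)$ every such query returns $2$; on that event $D$'s transcript is independent of $c$, so $D$ predicts $c$ with probability exactly $1/2$, contradicting bounded-error computation.

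The main technical obstacle is handling the adaptive nature of the array queries: each query $y$ may depend on previously revealed bits of $a$, so the tail bound on $|y-a|$ must hold uniformly over every history. The choice of hard distribution ensures that, regardless of the history, the unqueried coordinates of $a$ are independent and uniform, which is precisely the setting in which Hoeffding applies with the claimed exponent. The ternary-to-binary encoding of the array entries is a routine constant-factor adjustment and does not affect the asymptotic bound.
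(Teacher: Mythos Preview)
Your upper bound is correct and essentially identical to the paper's (the paper queries the \emph{first} $8\sqrt{k\log k}$ bits of $a$ rather than a random subset, but this is immaterial).

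Your lower bound uses the same hard distribution and the same overall strategy as the paper, but there is a real gap in your conditioning argument. You assert that ``regardless of the history, the unqueried coordinates of $a$ are independent and uniform.'' This is false once the history includes answers to array queries: if $D$ has already queried array position $y$ and received the answer $2$, you are now conditioning on $|y-a| > k/2 - 2\sqrt{k\log k}$, which is a nontrivial constraint on the unqueried bits of $a$. So you cannot directly apply Hoeffding step by step with the claimed conditional distribution. Since you explicitly flag this as ``the main technical obstacle'' and then resolve it with the incorrect claim, this is more than cosmetic sloppiness.

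The fix is the one the paper uses. Define $D'$ to be the algorithm that runs $D$ but fakes every array answer as $2$ without actually making the query. The path of $D'$ depends only on the bits of $a$ it queries, so for each array position $y$ that $D'$ asks about, the unqueried bits of $a$ really are uniform and independent of $y$, and now your Hoeffding bound legitimately gives $\Pr[|y-a|\le k/2-2\sqrt{k\log k}]\le k^{-\Omega(1)}$. A union bound over the at most $T$ such positions shows that with probability $1-o(1)$ every array position $D'$ visits is genuinely $2$-valued; on this event the real run of $D$ coincides with $D'$, whose output is independent of $c$, so $D$ errs with probability $1/2 - o(1)$. (The paper phrases the endgame slightly differently, turning $D'$ into an algorithm that guesses a string close to $a$ and deriving a contradiction from a single Chernoff bound, but either finish works.) The repair is small, and your overall plan is sound; only the justification of the key step needs to be replaced.
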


\begin{proof}
The upper bound is obtained by the simple algorithm that obtains an approximate address $b$ by querying and copying the first $8\sqrt{k \log k}$ bits of $a$ and setting the remaining bits of $b$ uniformly at random, then queries $x_{b}$ and returns that value. The distance $|b-a|$ between the approximate and actual addresses is a random variable with binomial distribution distribution with parameters $N=k-8\sqrt{k\log k}$ and $p=\frac12$ 
%whose expected value is $\E[|b-a|] = \frac k2 - 4{\sqrt{k \log k}}$, 
so standard tail bounds imply that the algorithm has bounded error.

For the lower bound, we describe a hard distribution.
Let $\mu$ be the distribution over valid inputs to
$\textsc{ApproxIndex}$ which first picks $a\in\B^k$
uniformly at random, then picks a bit $z\in\B$
uniformly at random, and fills the array
with $z$ in positions within $k/2-2\sqrt{k\log k}$
of $a$ and with $2$ in positions further from $a$.
That is, when the distribution picks the pair $(a,z)$,
it generates a valid input whose index part is $a$
and whose function value is $z$.

Suppose there was a randomized algorithm $R$
which solved $\textsc{ApproxIndex}$ to bounded error
using only $\sqrt{k\log k}$ queries.
Then $R$ also solves $\textsc{ApproxIndex}$
against inputs from $\mu$. By convexity,
there is some deterministic decision tree $D$
in the support of $R$ which still computes
$\textsc{ApproxIndex}$ correctly (to bounded error)
against $\mu$, with height at most $\sqrt{k\log k}$.

Consider the deterministic algorithm $D'$
which runs $D$, except whenever $D$ queries inside
the array part of the input, $D'$ does not
make that query and just pretends the answer was $2$.
(Whenever $D$ queries inside the index part of the input,
$D'$ does implement that query correctly.)
Then $D'$ uses at most as many queries as $D$ does,
and never queries inside the array part of the input.
Note that against distribution $\mu$, the success
probability of $D'$ must be exactly $1/2$,
regardless of how its leaves are labeled, because
$\mu$ generates its index (the only part $D'$ queries)
independently from the function value $z$.
So we know $D'$ fails to compute $\textsc{ApproxIndex}$
to bounded error against $\mu$.
Since $D$ succeeds in computing $\textsc{ApproxIndex}$
to bounded error against $D'$, this means that
$D$ and $D'$ output different answers
when run on $\mu$ with constant probability.

Since $D$ and $D'$ behave differently on $\mu$
with constant probability, it means that $D$
has constant probability of querying a non-$2$
position of the array (since in all other cases,
$D'$ behaves the same as $D$). This also means
that if we run $D'$ and look at the set $S$
of array queries it faked the answer to (returning
$2$ instead of making a true query to the array),
then the probability that $S$ contains a non-$2$
position of the array is at least a constant.

To rephrase: we now have an algorithm $D'$
that looks at at most $\sqrt{k\log k}$ positions
of a random string $a$ of length $k$, and returns a
set $S$ of at most $\sqrt{k\log k}$
strings of length $k$ that has a constant
probability of being within $k/2-2\sqrt{k\log k}$
of $a$. By picking a string from $S$ at random,
we can even get an algorithm that looks at
$\sqrt{k\log k}$ positions of $a$ and returns
a string $b$ that has probability at least $1/k$
of being within $k/2-2\sqrt{k\log k}$ of $a$.
This means that of the $k-\sqrt{k\log k}$
positions the algorithm did not look at,
it guessed at least $k/2+\sqrt{k\log k}$
of them correctly with probability at least $1/k$.
But since $a$ is a uniformly random string,
the chance of this happening can be bounded
by the Chernoff bound: it is at most $1/k^2$,
giving the desired contradiction.
\end{proof}

From here,
the proof of \thm{counterexample} is obtained by giving
an upper bound on the randomized query complexity of
the composed function
$\textsc{ApproxIndex} \circ \gapmaj_{\log n}$,
with the $\textsc{ApproxIndex}$ on $n$ bits
(i.e.\ $k=O(\log n)$). If a tight composition theorem
held, the randomized query complexity of this function
would be $\Omega(\log^{3/2}n\sqrt{\log \log n})$. However, there is
an $O(\log n\log\log n)$ randomized query algorithm for this
composed function: the randomized algorithm can first query
$O(\log\log n)$ bits from each of the first $k$ copies of
$\gapmaj_{\log n}$; since this gives it bias
$O(\sqrt{\log\log n}/\sqrt{\log n})$
(i.e.\ $O(\sqrt{\log k}/\sqrt{k})$)
towards the right answer for each bit of $a$
(from \lem{amplify}),
the string of $k$ such bits will
(with high probability) be such that
$|b-a|\le k/2-2\sqrt{k\log k}$. Then the randomized algorithm
can query $x_b$ by using $\log n$ queries to the
appropriate copy of $\gapmaj_{\log n}$, computing
it exactly. This is a total of only
$O(\log n\cdot \log\log n)$ queries instead of
$\Omega(\log^{3/2}n\sqrt{\log\log n})$.

\section{Simulating oracles}

The heart of the proof of \thm{composition} is the \emph{oracle simulation problem} that we describe below.

\paragraph{Oracle simulation problem.} Fix any two (publicly known) distributions $\mu_0$ and $\mu_1$ over $\B^n$.
There is a (true) oracle $\mathcal{O}$ that knows the value of some bit $b \in \B$, samples a string $x \leftarrow \mu_b$, and then provides (noiseless) query access to the bits in $x$. (I.e., on query $i \in [n]$, the oracle returns the value $x_i$.)
In the oracle simulation problem, we do not know $b$, but we wish
to simulate the behavior of $\mathcal{O}$. Our only resource
is a noisy oracle for $b$ as in \defn{noisy_oracle}.
Given access to such a noisy oracle for $b$, our goal is to
simulate $\mathcal{O}$, even in the setting where queries arrive in a stream and we don't know what future queries might be or even when they stop, while minimizing our query cost to the noisy oracle.

Note that we can always solve the oracle simulation problem
by querying $b$ with certainty; that is, we can feed in
$\gamma=1$ into the noisy oracle for $b$, extracting the correct
value of $b$ with probability $1$. Afterwards, we can clearly
use the value of $b$ to match the behavior of $\mathcal{O}$
by generating a sample $x\leftarrow\mu_b$ and using it to answer
queries. The cost of this trivial protocol is $1$ (since
we pay $\gamma^2$ when we go to the noisy oracle with parameter
$\gamma$). Our goal will be to improve this to a cost
that depends on the types of queries made and on the distributions
$\mu_0$ and $\mu_1$, but that in general can be much less than $1$.

\subsection{Simulating a single oracle query}

We first show in this section that the oracle simulation problem can be solved efficiently in the special case where we only have to simulate the true oracle $\mathcal{O}$ for a single query.

\begin{lemma}\label{lem:single_query_simulation}
For any pair $(\mu_0,\mu_1)$ of distributions over $\B^n$,
there is a protocol for the oracle simulation problem such that for 
any single query $i \in [n]$, 
the expected cost of the protocol simulating $i$ is at most
\[
2\Ess^2(\mu_0|_i,\mu_1|_i)
\]
(here $\mu_b|_i$ denotes the marginal distribution of $\mu_b$
onto the bit at index $i$),
and the output of the protocol has \emph{exactly} the same distribution as the output returned by the true oracle on the same query.
\end{lemma}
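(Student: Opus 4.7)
My plan is to reduce the lemma to a single-bit sampling problem. On a single query at position $i$, the true oracle returns $x_i$ with $x \sim \mu_b$, whose marginal $\mu_b|_i$ equals $\Bernoulli(p_b)$ for $p_b \coloneqq \Pr_{x \sim \mu_b}[x_i = 1]$, so the task reduces to exhibiting a noisy-oracle protocol that outputs a bit distributed as $\Bernoulli(p_b)$ while paying at most $2\Ess^2(\mu_0|_i,\mu_1|_i)$ in expectation. As setup, I will introduce $q \coloneqq (p_0+p_1)/2$ and $\delta \coloneqq (p_1-p_0)/2$ and compute from \defn{dist_measures} the closed form $\Ess^2(\mu_0|_i,\mu_1|_i) = \delta^2/(q(1-q))$, so that the target cost becomes $2\delta^2/(q(1-q))$.

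Next, assuming without loss of generality that $p_0 \le p_1$ (so $\delta \ge 0$) and $q \ge 1/2$ (the opposite case being symmetric under swapping $0 \leftrightarrow 1$), I propose the following two-branch protocol: with probability $c = 2q-1$, output $1$ immediately, making no oracle call; with probability $1-c = 2(1-q)$, make a single $\textsc{NoisyOracle}_b(\gamma)$ call with bias parameter $\gamma = \delta/(1-q)$ and return the answer verbatim. The validity check $\gamma \in [0,1]$ will reduce to $\delta \le 1-q$, equivalently $p_1 \le 1$, which holds trivially.

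The rest of the argument is two short computations. For correctness, I will condition on which branch is taken and use $\Pr[\textsc{NoisyOracle}_b(\gamma) = 1] = (1+\gamma(2b-1))/2$ to show that $\Pr[Y = 1 \mid b] = c + (1-c)(1+\gamma(2b-1))/2$ collapses to $q + (2b-1)\delta = p_b$, matching the true oracle distribution exactly. For the cost bound, since the $\gamma^2$ charge is paid only in the second branch, the expected cost is $2(1-q)\cdot\delta^2/(1-q)^2 = 2\delta^2/(1-q)$, which is at most $2\delta^2/(q(1-q)) = 2\Ess^2(\mu_0|_i,\mu_1|_i)$ because $q \le 1$.

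The main obstacle I anticipate is finding the right blend between two individually insufficient strategies. Pure rejection sampling (output the common mass for free, then learn $b$ exactly on the discrepancy $|p_0-p_1|$) costs $2|\delta|$, which is far larger than $\Ess^2$ in the balanced regime $q \approx 1/2$. A pure one-noisy-query scheme with deterministic postprocessing forces $\gamma \ge \delta/\min(q,1-q)$ and so costs $\delta^2/\min(q,1-q)^2$, which is far larger than $\Ess^2$ in the imbalanced regime $q \to 0$. The interpolation $c = |2q-1|$ is what makes the cost track $\Ess^2$ across all regimes; once this choice is identified, the remaining verification is mechanical.
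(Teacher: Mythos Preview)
Your proposal is correct and is essentially the same protocol as the paper's \textsc{SingleBitSim}: the paper picks $a$ to be the rarer outcome (your assumption $q\ge 1/2$), outputs $1-a$ for free with probability $1-(p_0+p_1)=2q-1$, and otherwise makes one noisy oracle call with parameter $(p_0-p_1)/(p_0+p_1)=\pm\delta/(1-q)$, yielding expected cost $(p_0-p_1)^2/(p_0+p_1)=2\delta^2/(1-q)\le 2\Ess^2$. Your $(q,\delta)$ reparametrization and the paper's choice of $a=\argmin$ are just two ways of naming the same algorithm and the same cost computation.
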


\begin{algorithm}[ht]
\caption{\textsc{SingleBitSim}($\mu_0,\mu_1,i$)}
\label{alg:single-bit}
  $a \gets \argmin_{c \in \{0,1\}}\{\mu_0|_i(c) + \mu_1|_i(c)\}$\;
  $p_0 \gets \mu_0|_i(a)$\;
  $p_1 \gets \mu_1|_i(a)$\;
  \If{$\Bernoulli(p_0+p_1) = 1$}{
    \If{$\textsc{NoisyOracle}_b\left(\frac{p_0-p_1}{p_0+p_1}\right)$}{
      \Return $a$\;
    }
  }
  \Return $1-a$\;
\end{algorithm}

\begin{proof}
%Consider the protocol described in \alg{single-bit}.
%Without loss of generality, consider the case where $\mu_0|_i(1) + \mu_1|_i(1) \le 1$. (The other case is identical by symmetry.) 
The \textsc{SingleBitSim} algorithm described in \alg{single-bit} returns $a$ if and only if the random variable drawn from the $\Bernoulli(p_0+p_1)$ distribution is $1$ and the \textsc{NoisyOracle} call to $b$ also returns $1$, so
\begin{align*}
\Pr[\textsc{SingleBitSim} \mbox{ returns } a ]
&= (p_0 + p_1) \left( \frac12 + (-1)^b \frac{p_0 - p_1}{2 (p_0+p_1)}\right) \\
&= \frac{p_0+p_1}{2} + (-1)^{b} \frac{p_0 - p_1}{2} 
= p_{b},
\end{align*}
which is also exactly the probability that the true oracle $\mathcal{O}$ returns $a$.

The cost of the algorithm is $0$ with probability $1-(p_0+p_1)$ and $( \frac{p_0-p_1}{p_0+p_1})^2$ otherwise so the expected cost is
\[
(p_0+p_1) \left( \frac{p_0-p_1}{p_0+p_1}\right)^2 
= \frac{(p_0-p_1)^2}{p_0+p_1}
= \frac{(\mu_0|_i(1)-\mu_1|_i(1))^2}{\mu_0|_i(1)+\mu_1|_i(1)}
\le 2\Ess^2(\mu_0|_i,\mu_1|_i). \qedhere 
\]
\end{proof}

\subsection{Simulating multiple queries to the oracle}

We build on the \textsc{SingleBitSim} algorithm to obtain a protocol that simulates any sequence of queries to the true oracle. Let us use $z \in \{0,1,*\}^n$ to denote a partial assignment to a variable $x \in \B^n$; with each coordinate $j \in [n]$ for which $z_j = *$ corresponding to the bits that have not yet been assigned. And for a partial assignment $z$ and a distribution $\mu$ on $\B^n$, we write $\mu^z$ to denote the conditional distribution of $\mu$ conditioned on $z$ being a partial assignment to the sample $x$ drawn from the distribution.

The general \textsc{OracleSim} protocol processes each received query using \textsc{SingleBitSim}, as described in \alg{oracle-sim}. The strategy is to keep calling
\textsc{SingleBitSim} to answer all queries until we see
that the expected total cost of the queries we received exceeds $1$;
at that point, we switch strategies to the trivial
protocol, extracting $b$ with certainty and using it to
answer all further queries.

\begin{algorithm}[ht]
\caption{\textsc{OracleSim}($\mu_0,\mu_1$)}
\label{alg:oracle-sim}
  $z \gets *^n$\;
  $c \gets 0$\;
  \For{each query $i \in [n]$ received}{
    $z_i \gets \textsc{SingleBitSim}(\mu_0^z, \mu_1^z, i)$\;
    Answer the query with $z_i$\;
    \smallskip
    $c \gets c + \h^2( \mu_0^z|_i, \mu_1^z|_i)$\;
    \If{$c > 1$}{
      {\bf break}\;
    }
  }

  \bigskip
  \tcc{If the expected cost of noisy queries exceeds $1$, query the value of $b$ directly to complete the simulation.}
  $b \gets \textsc{NoisyOracle}_b(1)$\;
  \For{each query $i \in [n]$ received}{
    $z_i \gets \mu_b^z|_i$\;
    Answer the query with $z_i$\;
  }
\end{algorithm}

\begin{lemma}
\label{lem:oraclesim-correct}
For any pair $(\mu_0,\mu_1)$ of distributions over $\B^n$ and any sequence of queries,
the distribution of the answers to the queries returned by the \textsc{OracleSim} protocol is identical to the distribution of answers returned by the true oracle on the same sequence of queries.
\end{lemma}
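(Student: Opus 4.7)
The plan is to prove this by induction on the number of queries processed, fixing a value of $b \in \B$ throughout. The first observation is that the true oracle $\mathcal{O}$ can be implemented \emph{sequentially}: after the first $k$ answers have been given, let $z \in \Ba^n$ record those answers (with $*$'s in the unqueried coordinates). Then by the chain rule for sampling from $\mu_b$, the true oracle's answer to query $i_{k+1}$ is distributed exactly as $\mu_b^z|_{i_{k+1}}$ conditioned on this history. So it suffices to show that OracleSim's $k$-th answer has the same conditional distribution $\mu_b^z|_{i_k}$ given the history.

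For the inductive step, I would split on the two phases of \alg{oracle-sim}. In the first phase (before the break), the $k$-th answer is produced by $\textsc{SingleBitSim}(\mu_0^z, \mu_1^z, i_k)$, which by \lem{single_query_simulation} returns a bit whose distribution, given $b$, is \emph{exactly} $\mu_b^z|_{i_k}$. In the second phase (after the break), the algorithm first calls $\textsc{NoisyOracle}_b(1)$, which recovers $b$ with probability $1$, and then samples directly from $\mu_b^z|_{i_k}$---again matching the required conditional distribution. Crucially, the decision of whether the break has occurred is determined entirely by the sequence of queries received and by the accumulated quantity $c$, which depends on the already-fixed partial assignment $z$ and on $\mu_0,\mu_1$ but \emph{not} on $b$; so the phase transition does not couple $b$ to any extra information.

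Finally, the adaptivity of the querier (the fact that $i_{k+1}$ may depend on earlier answers) causes no difficulty, because the querier's strategy is the same external process in both the true-oracle world and the simulated world: once the conditional distribution of each answer matches, the induced joint distribution over full transcripts $(i_1, z_{i_1}, i_2, z_{i_2}, \ldots)$ matches as well. I do not anticipate any real obstacle here; the entire content of the argument is concentrated in \lem{single_query_simulation}, and the present statement is essentially its iterated consequence.
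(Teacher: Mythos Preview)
Your proposal is correct and takes essentially the same approach as the paper, which simply states that the lemma ``immediately follows from the fact that \textsc{SingleBitSim} answers individual queries with the same distribution as the true oracle.'' Your write-up is more detailed---spelling out the induction, the two phases, the $b$-independence of the phase transition, and the handling of adaptive queriers---but the core content is identical and rests entirely on \lem{single_query_simulation}.
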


\begin{proof}
This immediately follows from the fact that \textsc{SingleBitSim} answers individual queries with the same distribution as the true oracle.
\end{proof}

In particular, \lem{oraclesim-correct} implies that the behaviour of randomized algorithms does not change when access to the true oracle is replaced with usage of the \textsc{OracleSim} protocol instead.

We now want to bound expected cost of the \textsc{OracleSim} protocol on randomized decision trees. To do so, we must first introduce a bit more notation and establish some preliminary results. For any transcript $\tau = \tran(D,x)$ of a deterministic decision tree $D$ on some input $x$ and any index $t \le |\tau|$, we let $\tau_{< t}$ denote the part $(i_1,x_{i_1}),\ldots,(i_{t-1},x_{t-1})$ of the transcript representing the first $t-1$ queries. That is,
$\tau_{<t}$ is a partial assignment of size $t-1$.

\begin{definition}[Distinguishing distributions]
For any bias $\eta \in (0,1)$, we say that a transcript $\tau$ \emph{$\eta$-distinguishes} two distributions $\mu_0$ and $\mu_1$ if there is an index $t \le |\tau|$ for which a random variable $X \sim \Bernoulli(\frac12)$ satisfies
\[
\left| \E[ X^{\tau_{< t}} ] - \tfrac12 \right| \ge \tfrac\eta2
\]
where $X^{\tau_{< t}}$ is the random variable $X$ conditioned on $\tran(D,\mu_X)_{< t} = \tau_{< t}$.
\end{definition}

In other words, we say a transcript $\tau$
distinguishes two distributions
if at \emph{any point} during the run of $\tau$,
the partial assignment seen up to that point is much more likely
under one of $\mu_0$ or $\mu_1$ than under the other.
We use the following bound on the probability of seeing
a distinguishing transcript $\tau$ when running
an algorithm on the mixture of $\mu_0$ and $\mu_1$.

\begin{lemma}
\label{lem:distinguish}
There exists a constant $\eta \in (0,1)$ such that for every deterministic decision tree $D$ and every pair of distributions $\mu_0$, $\mu_1$ on inputs, when $X \sim \Bernoulli(\frac12)$ then
\[
\Pr_{\tau \sim \tran(D,\mu_X)}[ \tau \mbox{ $\eta$-distinguishes } \mu_0,\mu_1 ] = O\left( \h^2( \tran(D,\mu_0), \tran(D,\mu_1)) \right).
\]
\end{lemma}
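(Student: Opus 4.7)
The plan is to prove this via the data-processing inequality for squared-Hellinger distance, applied to a carefully chosen coarsening of the transcript.

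First I would define a \emph{stopping set} $S$ of nodes in the decision tree $D$ as follows: a node $v$ (corresponding to a partial assignment $p_v$) lies in $S$ exactly when $p_v$ is the \emph{first} prefix along its root-to-leaf path at which the posterior $\Pr[X=1 \mid \tau_{<t} = p_v]$ lies outside $[\tfrac12 - \tfrac\eta2, \tfrac12 + \tfrac\eta2]$. The elements of $S$ are mutually incomparable prefixes, so under either $\mu_0$ or $\mu_1$ a random transcript passes through at most one node of $S$. Let $\pi \colon \tau \mapsto S \cup \{\bot\}$ be the function sending a full transcript $\tau$ to the (unique) node of $S$ it passes through, or $\bot$ if none exists. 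Since $\pi$ is deterministic, the data-processing inequality for Hellinger distance gives
\[
\h^2(\tran(D,\mu_0), \tran(D,\mu_1)) \;\ge\; \h^2(\pi(\tran(D,\mu_0)), \pi(\tran(D,\mu_1))).
\]

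Next, I would lower-bound the right-hand side by the probability we wish to bound. Writing $q_b(v) = \Pr_{\tau \sim \tran(D,\mu_b)}[\pi(\tau) = v]$ for $v \in S$, the key observation is that membership in $S$ means the posterior is biased by at least $\eta/2$, which (by unpacking $\Pr[X=1\mid p_v] = q_1(v)/(q_0(v)+q_1(v))$) is equivalent to $\max\{q_0(v)/q_1(v), q_1(v)/q_0(v)\} \ge (1+\eta)/(1-\eta)$. An elementary calculation then shows there is a constant $c_\eta > 0$ such that
\[
(\sqrt{q_0(v)} - \sqrt{q_1(v)})^2 \;\ge\; c_\eta \cdot (q_0(v) + q_1(v)) \qquad \text{for every } v \in S.
\]
Summing this inequality over $v \in S$ and discarding the $\bot$ term (which is non-negative) gives
\[
\h^2(\pi(\tran(D,\mu_0)), \pi(\tran(D,\mu_1))) \;\ge\; \frac{c_\eta}{2}\sum_{v \in S}(q_0(v) + q_1(v)) \;=\; c_\eta \cdot \Pr_{X,\tau}[\tau \text{ $\eta$-distinguishes } \mu_0, \mu_1].
\]
Combining the two displayed inequalities yields the claim with implicit constant $1/c_\eta$, for any fixed $\eta \in (0,1)$ (for instance $\eta = 1/2$).

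The steps I expect to be straightforward are the definition of the stopping set (since the $\eta$-distinguishing condition is already phrased in a prefix-stopping-time way) and the data-processing inequality step. The main technical obstacle, such as it is, is verifying the local inequality $(\sqrt{q_0(v)} - \sqrt{q_1(v)})^2 \ge c_\eta(q_0(v)+q_1(v))$ with the right dependence on $\eta$: this follows by writing $q_0 = \alpha t$, $q_1 = \beta t$ with $t = q_0+q_1$ and $\alpha + \beta = 1$, one of them at least $(1+\eta)/2$, and checking that $(\sqrt{\alpha}-\sqrt{\beta})^2$ is bounded below by an explicit positive constant depending only on $\eta$. No other subtlety should arise, and the argument makes no use of the inner structure of $D$ beyond the fact that transcripts form a prefix tree.
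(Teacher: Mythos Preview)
Your proof is correct and takes a genuinely different route from the paper's.

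The paper argues operationally: letting $\rho$ be the probability in question, it designs a sequential Bayesian tester that reads $O(1/\rho)$ independent transcripts from $\tran(D,\mu_b)$, tracks the log-odds, and outputs once the odds cross a threshold; it shows that any $\eta$-distinguishing transcript forces termination, so the tester succeeds with bounded error, and then invokes \clm{h_interpretation} (the sample-complexity characterization of Hellinger distance) to conclude $1/\rho = \Omega(1/\h^2)$.

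Your argument sidesteps the algorithm entirely. By treating the first biased prefix as a stopping time, you coarsen the transcript via $\pi$ to a single element of the antichain $S$ (or $\bot$), apply data-processing for $\h^2$, and then use the purely local fact that a likelihood ratio of at least $(1+\eta)/(1-\eta)$ forces $(\sqrt{q_0}-\sqrt{q_1})^2 \ge (1-\sqrt{1-\eta^2})(q_0+q_1)$. Summing over $S$ gives the bound directly.

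What each buys: the paper's proof is self-contained given \clm{h_interpretation} and makes the ``distinguishing'' semantics concrete, but it is longer and requires some care with the stopping and error analysis. Your proof is shorter, avoids the detour through sample complexity and tensorization, and in fact establishes the lemma for every $\eta\in(0,1)$ with an explicit constant $c_\eta = 1-\sqrt{1-\eta^2}$, rather than merely asserting the existence of some $\eta$. The only prerequisite beyond the paper's toolkit is the data-processing inequality for squared Hellinger distance, which is standard.
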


\begin{proof}
Let $\rho$ denote the probability that a transcript $\tau$ drawn from $\tran(D,\mu_X)$ $\eta$-distinguishes the distributions $\mu_0$ and $\mu_1$.
We show that $O(1/\rho)$ transcripts sampled independently from the distribution $\tran(D,\mu_b)$ suffice to determine the value $b$ with bounded error. 
The lemma then follows from \clm{h_interpretation}.

The algorithm for determining $b$ given these transcripts
will be Bayesian: it will start with an even prior
on $b=0$ and $b=1$, and then process each sample in turn
-- and within each sample, each query of the transcript in turn --
and update its belief using Bayes' rule. At each point in time,
we keep track of the log odds ratio of the current posterior
distribution. That is, if the belief of the algorithm
is probability $p$ that $b=1$ and probability $1-p$ that
$b=0$, the log odds ratio is defined as $\log(p/(1-p))$.
If at any point in the algorithm, the absolute value of
the log odds ratio exceeds $(1/2)\log((1+\eta)/(1-\eta))$,
the algorithm terminates and returns $1$ if its log odds ratio
is positive and $0$ if its log odds ratio is negative.
If the algorithm reaches the end of all samples without
terminating in this way, it outputs arbitrarily.
In other words, the algorithm reads all the queries of all the
transcripts sequentially, and if ever it reaches very high confidence
of the value of $b$, it outputs that value (and terminates),
but otherwise it guesses randomly when it reaches the end.

To analyze this algorithm, we observe that the log odds ratio
updates additively: if the prior probability that $b=1$ was $p$,
and an event $A$ was observed, the posterior probability
that $b=1$ is $\Pr[b=1|A]=\Pr[A|b=1]\cdot p/\Pr[A]$ and the
posterior probability that $b=0$ is $Pr[A|b=0]\cdot (1-p)/\Pr[A]$,
so their ratio is $p/(1-p)$ times $\Pr[A|b=1]/\Pr[A|b=0]$. It follows
that the posterior log odds ratio is equal to the prior log odds
ratio plus $\log(\Pr[A|b=1]/\Pr[A|b=0])$.

Now, if $\tau$ $\eta$-distinguishes $\mu_0$ and $\mu_1$ 
and if $t$ is such that $X^{\tau_{<t}}$ has bias at least $\eta$,
it means that for this $\tau$, if we were to see $t-1$ queries
starting from an even prior ($0$ log odds ratio), we would
arrive at bias at least $\eta$, meaning the
absolute value of the log odds ratio would
be at least $\log((1+\eta)/(1-\eta))$. Note that this is enough to exceed
the ratio and terminate the algorithm, unless the initial log odds
ratio (before starting reading this transcript $\tau$) was not $0$.
But the only way for the total log odds ratio not to exceed
$\frac12 \log((1+\eta)/(1-\eta))$ in absolute value would be for it
to start at at least $\frac12 \log((1+\eta)/(1-\eta))$ in absolute
value---in which case the algorithm would have terminated before reading
$\tau$! We conclude that reading a transcript that $\eta$-distinguishes $\mu_0$ and $\mu_1$ always causes a termination of this algorithm.

Since we sample $O(1/\rho)$ transcripts, the probability
that we do not see any $\tau$ that $\eta$-distinguishes $\mu_0$ and $\mu_1$ is
$(1-\rho)^{1/\rho}=e^{-\Omega(1)}$,
which we can make an arbitrarily small constant by picking the right
constant in the big-$O$. This means the algorithm always terminates
before reaching the end except with small probability (say, $0.01$),
so it rarely needs to guess.

It remains to argue that when the algorithm terminates, it is
usually correct in its output. Let's suppose $b=0$
(the $b=1$ case is analogous). When the algorithm terminates
and gives an incorrect output,
consider everything it saw up to that point -- this is
some sequence of transcripts plus some sequence of queries
that are part of the transcript causing the termination.
If this sequence is $s$, then the log odds ratio after
observing $s$ must be at least $\frac12 \log((1+\eta)/(1-\eta))$,
meaning the odds ratio must be at least $\sqrt{(1+\eta)/(1-\eta)}$.
In other words, if the probability of seeing such an $s$
when $b=0$ is $p_s$, then the probability of seeing this same $s$
when $b=1$ is at least $p_s\sqrt{(1+\eta)/(1-\eta)}$.
The probability that the algorithm terminates and gives
an incorrect output when $b=0$ is the sum of all such $p_s$;
but then the probability of observing one of those $s$ when
$b=1$ is that sum times $\sqrt{(1+\eta)/(1-\eta)}$. Since
this must be at most $1$, we conclude that
the probability the algorithm terminates and errs when
$b=0$ is at most $\sqrt{(1-\eta)/(1+\eta)}$.
By picking $\eta$ correctly, we can get the probability
of error to be at most $1/3$ (and the $b=1$ case is similar).

By \clm{h_interpretation}, we conclude that
$\rho=O(h^2(\tran(D,\mu_0),\tran(D,\mu_1))$, as desired.
\end{proof}

We are now ready to bound the expected cost of the oracle simulation protocol.

\begin{lemma}
\label{lem:oraclesim-cost}
For any pair $(\mu_0,\mu_1)$ of distributions over $\B^n$, and any randomized decision tree $R$, 
the expected cost of the \textsc{OracleSim} protocol is at most
\[
O\big( \h^2(\tran(R,\mu_0),\tran(R,\mu_1)) \big).
\]
\end{lemma}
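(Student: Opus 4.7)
The plan is to split the expected cost of \textsc{OracleSim} into a \emph{phase 1} contribution from the \textsc{SingleBitSim} calls and a \emph{phase 2} contribution of at most $1$, paid precisely when the running total $c$ exceeds $1$ and the protocol falls back to $\textsc{NoisyOracle}_b(1)$. By \lem{single_query_simulation} combined with the inequality $\Ess^2 \le 2\h^2$ from \clm{h_vs_JS_vs_S}, the $t$-th phase 1 call has expected cost at most $4\h^2(\mu_0^{z_{<t}}|_{i_t},\mu_1^{z_{<t}}|_{i_t})$, where $z_{<t}$ denotes the partial assignment after the first $t{-}1$ answered queries. By \lem{oraclesim-correct}, the marginal distribution of $z_{<t}$ during the simulation equals its distribution under the true oracle, namely the restriction of $\mu_B$ to the queried coordinates when $B\sim\Bernoulli(\tfrac12)$.

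For phase 2, I would use that $c$ is monotone non-decreasing, so the protocol breaks iff $c_{\mathrm{final}} := \sum_t \h^2(\mu_0^{z_{<t}}|_{i_t},\mu_1^{z_{<t}}|_{i_t}) > 1$. Markov's inequality then gives $\Pr[\text{break}] \le \E[c_{\mathrm{final}}]$, which bounds the expected phase 2 cost by the same sum that bounds phase 1. Consequently, by upper bounding the sum by the one ranging over all queries the decision tree would make (dropping the early stop), the full expected cost of \textsc{OracleSim} is
\[
O\!\left(\sum_t \E\bigl[\h^2(\mu_0^{z_{<t}}|_{i_t},\mu_1^{z_{<t}}|_{i_t})\bigr]\right).
\]

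The heart of the proof is to bound this sum by $O(\h^2(\tran(R,\mu_0),\tran(R,\mu_1)))$, which I would do via the chain rule for mutual information together with the equivalences $\h^2 \le \JS \le 2\h^2$ of \clm{h_vs_JS_vs_S} and the identity $\JS(\mu_0,\mu_1)=I(X;\mu_X)$ of \clm{JS_I}. Conditioning on the internal randomness of $R$ fixes a deterministic tree $D$; since $B$ is independent of $D$ and each $i_t$ is a deterministic function of $D$ and $z_{<t}$, applying the chain rule to the sequence of query/answer pairs gives
\[
I\bigl(B;\tran(D,\mu_B)\bigr) \;=\; \sum_t \E_{z_{<t}}\!\bigl[\JS(\mu_0^{z_{<t}}|_{i_t},\mu_1^{z_{<t}}|_{i_t})\bigr].
\]
Averaging over $D\sim R$ and applying $\h^2 \le \JS$ on the left and $\JS \le 2\h^2$ on the right yields $\sum_t \E[\h^2] \le 2\h^2(\tran(R,\mu_0),\tran(R,\mu_1))$, as desired.

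The main obstacle is careful bookkeeping around the randomized (rather than deterministic) decision tree: the transcript of $R$ records both the sampled $D$ and the subsequent query/answer sequence, so I must condition on $D$ before applying the per-query chain rule and only then re-average. This also underlies why the natural quantity on the right-hand side of the chain rule is $\JS$ (and hence $\h^2$, up to constants) rather than, say, $\Ess^2$ directly; the information-theoretic chain rule is what couples the per-step squared-Hellinger contributions to the global squared-Hellinger distance between full transcripts.
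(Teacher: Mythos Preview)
Your reduction of the total cost to $O\!\left(\sum_t \E\bigl[\h^2(\mu_0^{z_{<t}}|_{i_t},\mu_1^{z_{<t}}|_{i_t})\bigr]\right)$ (over the mixture $\mu=(\mu_0+\mu_1)/2$) is fine, and so is the Markov bound on the phase~2 cost. The gap is in the chain-rule step. The identity
\[
I\bigl(B;\tran(D,\mu_B)\bigr)=\sum_t \E_{z_{<t}}\bigl[\JS(\mu_0^{z_{<t}}|_{i_t},\mu_1^{z_{<t}}|_{i_t})\bigr]
\]
is false: the chain rule yields $\sum_t \E_{\tau_{<t}}\bigl[I(B^{\tau_{<t}};\tau_t^{\tau_{<t}})\bigr]$, and $I(B^{\tau_{<t}};\cdot)$ equals the $\JS$ term only when $B^{\tau_{<t}}$ is still $\Bernoulli(1/2)$. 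After conditioning on a partial transcript the posterior on $B$ can be arbitrarily biased, and then the conditional information is \emph{strictly smaller} than the corresponding $\JS$. For a concrete failure, take $\mu_b$ to be the product of $n$ copies of $\Bernoulli((1+(-1)^{1-b}\delta)/2)$ and let $D$ query all $n$ bits. Conditioning never changes the marginals, so your right-hand side equals $n\,\JS(\mu_0|_1,\mu_1|_1)$, which can be made arbitrarily large, while $I(B;\tran)\le H(B)=1$. Correspondingly, your intermediate quantity $\sum_t \E[\h^2]$ is $n\,\h^2(\mu_0|_1,\mu_1|_1)\gg 1$, whereas $\h^2(\tran(D,\mu_0),\tran(D,\mu_1))\le 1$; so ``dropping the early stop'' is fatal here.

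This is exactly the issue the paper's proof is built to handle. It invokes \clm{JS_bias}, which gives $I(B^{\tau_{<t}};\tau_t^{\tau_{<t}})\ge(1-|\gamma|)\JS(\cdot)$ where $\gamma$ is the posterior bias of $B$, and then splits the transcripts into those that never reach bias $\eta$ (set $S_1$) and those that do (set $S_2$). On $S_1$ the $(1-\eta)$ factor is harmless and your chain-rule computation goes through; on $S_2$ one cannot control the per-step terms, so instead the paper uses the crude bound $\cost_b(\tau)\le 10$ together with \lem{distinguish}, which bounds $\Pr[\tau\in S_2]$ by $O(\h^2(\tran(D,\mu_0),\tran(D,\mu_1)))$. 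Your argument is missing precisely this $S_1/S_2$ decomposition and the accompanying \lem{distinguish}; without them, the posterior drift of $B$ breaks the link between the per-step Hellinger sum and the global transcript Hellinger distance.
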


\begin{proof}
Note first that $\tran(R,\mu)$ is a disjoint mixture of distributions of the form $\tran(D,\mu)$ for deterministic decision trees $D$, since our definition of the transcript of randomized decision trees includes a copy of the sampled tree $D$ itself. By \clm{h_mix}, it therefore suffices to show that the expected cost of \textsc{OracleSim} on any deterministic decision tree $D$ is
\[
O\big( \h^2(\tran(D,\mu_0),\tran(D,\mu_1)) \big).
\]

We can represent the expected cost of \textsc{OracleSim} on $D$ as
\[\bE_{\tau\sim\tran(D,\mu_b)}[\cost_b(\tau)],\]
where $b$ is the true value of the unknown oracle bit
and where $\cost_b(\tau)$ is defined as the expected cost
of \textsc{OracleSim} conditioned on $\tau$
being the resulting transcript at the end.
This is the correct expression for the
expected cost because we know \textsc{OracleSim}
will generate transcripts $\tau$ from the same distribution
$\tran(D,\mu_b)$ that the true oracle generates them from.

We will use only two properties of $\cost_b(\tau)$.
The first property is that for all $\tau$ and $b\in\B$,
\begin{equation}\label{first_property}
  \cost_b(\tau)
    \le 5\sum_{t=1}^{|\tau|}\h^2\big(\mu_0^{\tau_{<t}}|_{\tau_t},
                        \mu_1^{\tau_{<t}}|_{\tau_t}\big).
\end{equation}
Here we use $\mu_0^{\tau_{<t}}|_{\tau_t}$ to denote the
conditional distribution of $\mu_0$ conditioned on the
partial assignment $\tau_{<t}$, marginalized
to the position queried in the $t$-th entry of $\tau$.
To see that this property holds,
recall that \lem{single_query_simulation}
(combined with \clm{h_vs_JS_vs_S})
provides an upper bound of
$4\h^2\big(\mu_0^{\tau_{<t}}|_{\tau_t},
                        \mu_1^{\tau_{<t}}|_{\tau_t}\big)$
on the cost of query $t$ of $D$
conditioned on $\tau_{<t}$ being seen previously,
unless query $t$ causes a cutoff which forces a cost of $1$.
By the definition of \textsc{OracleSim},
this cutoff only happens if the sum
$\sum_{i=1}^{t-1}\h^2\big(\mu_0^{\tau_{<i}}|_{\tau_i},
                        \mu_1^{\tau_{<i}}|_{\tau_i}\big)$
(which is stored in variable $c$) exceeds $1$;
in this case, the cutoff only causes the sum over $t$
of $4\h^2\big(\mu_0^{\tau_{<t}}|_{\tau_t},
                        \mu_1^{\tau_{<t}}|_{\tau_t}\big)$
to increase by at most a factor of $5/4$,
since before the cutoff it must already have been at least $4$.

The second property we will need is that for all $\tau$ and $b$,
\begin{equation}\label{second_property}
  \cost_b(\tau)\le 10.
\end{equation}
This follows from the first property by noticing
that if a cutoff is reached, no further queries are made,
and the variable $c$ can at most exceed its cutoff $1$
by $1$ (since $\h^2$ is always bounded above by $1$).

Our goal is to upper bound the expected cost of \textsc{OracleSim},
which we know can be written
$\bE_{\tau\sim\tran(D,\mu_b)}[\cost_b(\tau)]$,
by $O(\h^2(\tran(D,\mu_0),\tran(D,\mu_1)))$.
We start by noting that the latter expression can be lower bounded using \clm{h_vs_JS_vs_S} and \clm{JS_I}:
\[
\h^2(\tran(D,\mu_0),\tran(D,\mu_1))
\ge \tfrac12 \JS(\tran(D,\mu_0),\tran(D,\mu_1))
= \tfrac12 I(X;\tran(D,\mu_X)).
\]
Using the chain rule for mutual information and the definition of conditional information, we then obtain
\begin{align*}
\h^2(\tran(D,\mu_0),\tran(D,\mu_1))
&\ge \tfrac12 \sum_{t=1}^n I(X;\tran(D,\mu_X)_t \mid \tran(D,\mu_X)_{<t})\\
&=\tfrac12 \sum_{t=1}^n\bE_{\tau\sim\tran(D,\mu_X)}
[I(X^{\tau_{<t}};\tran(D,\mu_X)^{\tau_{<t}}_t)],
\end{align*}
where as usual we use $\tau_{<t}$ to denote the transcript $\tau$
cut off before query $t$ (meaning that the sequence
$(i_1,x_{i_1}),(i_2,x_{i_2}),\dots$ in the transcript
gets truncated after $(i_{t-1},x_{i_{t-1}})$),
and $\tau_t$ to denote query $t$ of the transcript
(meaning the single pair $(i_t,x_{i_t})$ in position
$t$ of the sequence).
We can exchange the sum and the expectation
and we can also replace $n$ by
$|\tau|$ as the information of the transcript is always $0$ after
the transcript ends. Doing so yields
\[
\h^2(\tran(D,\mu_0),\tran(D,\mu_1))
\ge 
\tfrac12 \bE_{\tau\sim\tran(D,\mu)}
\left[\sum_{t=1}^{|\tau|}
I(X^{\tau_{<t}};\tran(D,\mu_X)^{\tau_{<t}}_t)\right],
\]
where we are using $\mu$ to denote $\mu_X=(1/2)(\mu_0+\mu_1)$.

Let $S_1$ denote the set of transcripts $\tau$ that do not $\eta$-distinguish $\mu_0$ and $\mu_1$, and $S_2$ be the other transcripts (that do $\eta$-distinguish $\mu_0$ and $\mu_1$) for the value of $\eta$ guaranteed to exist by \lem{distinguish}.
We write $\tau\sim S_1$ to mean $\tau$ sampled from the conditional
distribution $\tran(D,\mu)$ conditioned on $\tau\in S_1$,
and similarly for $\tau\sim S_2$. Then
\begin{align*}
\h^2(\tran(D,\mu_0)\tran(D,\mu_1))
&\ge \tfrac12 \Pr[\tau\in S_1]\cdot
\bE_{\tau\sim S_1}\left[\sum_{t=1}^{|\tau|}
I(X^{\tau_{<t}};\tran(D,\mu_X)^{\tau_{<t}}_t)\right]\\
&\ge \frac{1-\eta}{2}\Pr[\tau\in S_1]\cdot\bE_{\tau\sim S_1}
\left[\sum_{t=1}^{|\tau|}
\h^2(\tran(D,\mu_0)^{\tau_{<t}}_t,\tran(D,\mu_1)^{\tau_{<t}}_t)\right],
\end{align*}
where the first line follows by removing the part of
the expectation over $S_2$ (which is non-negative), and the second
line follows from \clm{JS_bias}
(converting a biased coin into an unbiased coin with
$(1-\eta)$ loss) together with \clm{h_vs_JS_vs_S} (converting $\JS$
distance to $\h^2$). 

Now, observe that each term of the sum
is exactly
$\h^2(\mu_0^{\tau_{<t}}|_{\tau_t},\mu_1^{\tau_{<t}}|_{\tau_t})$.
Hence by \ref{first_property}, we have
\[
\h^2(\tran(D,\mu_0)\tran(D,\mu_1))
 = \Omega \left(\Pr[\tau\in S_1]\cdot\bE_{\tau\sim S_1}
 [\cost_b(\tau)]\right).
\]
We now write
\begin{align*}
\h^2(\tran(D,\mu_0)\tran(D,\mu_1)) 
&=\Omega\left(\bE_{\tau\sim\tran(D,\mu)}[\cost_b(\tau)]
-\Pr[\tau\in S_2]\cdot\bE_{\tau\sim S_2}[\cost_b(\tau)]\right)\\
&=\Omega\left(\bE_{\tau\sim\tran(D,\mu)}[\cost_b(\tau)]\right)
-O(\Pr[\tau\in S_2]),
\end{align*}
where we used \ref{second_property} in the last line.
Finally, since $\mu=(\mu_0+\mu_1)/2$, the expectation
of a nonegative random variable against $\mu_b$ is at most
twice the expectation of that variable against $\mu$.
We have thus obtained that the expected cost \textsc{OracleSim}
is bounded above by
\[O\left(\h^2(\tran(D,\mu_0),\tran(D,\mu_1))
    +\Pr[\tau\in S_2]\right),
\]
and the desired bound $O\big(\h^2(\tran(D,\mu_0),\tran(D,\mu_1))\big)$
follows from \lem{distinguish}.
\end{proof}

\section{The composition theorem}

\subsection{The proof}

Equipped with \lem{oraclesim-correct} and \lem{oraclesim-cost},
we are ready for the proof of \thm{composition}.
In fact, we prove a slightly stronger version of the theorem:
we show that the hard distribution for $f\circ g$
can be assumed to take the form of a distribution of $f$
composed with a distribution of $g$.
Start with the following definitions.

\begin{definition}
Let $\mu_0$ and $\mu_1$ be distributions over
$\B^m$,
and let $y\in\B^n$. Then
define $\mu_y\coloneqq\bigotimes_{i=1}^n\mu_{y_i}$,
which is a distribution over $\Sigma^{nm}$.
If $\nu$ is a distribution over $\B^n$, define $\nu\circ(\mu_0,\mu_1)$
to be the distribution which samples $y\leftarrow\nu$
and then returns a sample from $\mu_y$.
\end{definition}

\begin{definition}
Let $g$ be a (possibly partial) Boolean function from
a subset of
$\B^m$ to $\B$,
and let $f$ be a function or relation from a subset of $\B^n$
to $\Sigma_O$ (a finite alphabet). Then define
$\compR(f,g)$ to be the maximum, over distributions
$\mu_0$ and $\mu_1$ on $0$-inputs and $1$-inputs of $g$,
of the complexity
of solving $f$ on distributions of the form $\mu_y$ for $y\in\Dom(f)$.
In other words,
\[\compR(f,g)=\max_{\mu_0,\mu_1}\min_R\max_y\cost(R,\mu_y),\]
where $R$ is a randomized algorithm that is required
to compute $f(y)$ with bounded error against all input
distributions of the form $\mu_y$,
and where $\cost(R,\mu_y)$
is the expected number of queries $R$ makes against
distribution $\mu_y$.
We will further write $\compR_\epsilon(f,g)$
when we need to specify the error parameter.
\end{definition}

We note that $\compR(f,g)$ satisfies a minimax theorem
with respect to the minimization over $R$ and the maximization
over $y$. Hence, we can define it as the maximum
randomized query complexity
of a hard distribution for $f\circ g$
which has the form
$\nu\circ(\mu_0,\mu_1)$, with $\nu$ a distribution over $\Dom(f)$
and $\mu_b$ being distributions over $g^{-1}(b)$ for $b\in\B$.
It is also clear that
$\R_\epsilon(f\circ g)\ge \bar{\R}_\epsilon(f\circ g)
\ge\compR_\epsilon(f,g)$,
where $\bar{\R}(f)$ denotes the expected randomized query
complexity of $f$ (against worst-case inputs).

The following theorem implies \thm{composition}
when combined with \thm{minimax}.

\begin{theorem}\label{thm:composition2}
Let $f$ be a partial function or relation on $n$ bits, with Boolean input alphabet and finite output alphabet $\Sigma_O$. 
Let $g$ be a partial Boolean function on $m$ bits. 
Let $\epsilon\in(0,1/2)$. Then
\[
\bar{\R}_\epsilon(f\circ g)\ge\compR_\epsilon(f,g)
= \Omega\left(\noisyR_{\epsilon}(f) \cdot \sfR(g) \right).
\]
\end{theorem}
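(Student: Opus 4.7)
The plan is to first dispose of the easy inequality and then to convert any $\compR$-algorithm for $f\circ g$ into a noisy-oracle algorithm for $f$ using the \textsc{OracleSim} protocol. The inequality $\bar{\R}_\epsilon(f\circ g)\ge\compR_\epsilon(f,g)$ is immediate from the definitions: any algorithm computing $f\circ g$ to error $\epsilon$ with worst-case expected cost $\bar{\R}_\epsilon(f\circ g)$ works in particular against every distribution $\mu_y$ (which is supported on $\Dom(f\circ g)$ whenever $y\in\Dom(f)$), and its expected cost on $\mu_y$ is at most its worst-case expected cost.

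For the main bound, fix a pair $(\mu_0,\mu_1)$ of distributions on $g^{-1}(0)$ and $g^{-1}(1)$ attaining the Hellinger distinguishing cost $\sfR(g)$ of Definition~\ref{def:sfR}. Since $\min_R\max_y\cost(R,\mu_y)\le\compR_\epsilon(f,g)$ for this specific pair, there is a randomized algorithm $R$ that computes $f(y)$ to error $\epsilon$ on every distribution $\mu_y$ and whose cost against each $\mu_y$ is (essentially) at most $\compR_\epsilon(f,g)$. Build a noisy-oracle algorithm $R^{\ast}$ for $f$ by spawning $n$ independent instances of \textsc{OracleSim}$(\mu_0,\mu_1)$ (Algorithm~\ref{alg:oracle-sim}), the $i$-th instance using the noisy oracle for $y_i$; when the simulation of $R$ queries the $j$-th bit inside the $i$-th copy of $g$, route the query to the $i$-th \textsc{OracleSim} and hand its output back to $R$. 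By Lemma~\ref{lem:oraclesim-correct} applied to each instance separately together with the independence of the $n$ instances, the joint distribution of answers fed back to $R$ is identical to what $R$ would see from the true oracle sampling $x\sim\mu_y$, so $R^{\ast}$ outputs $f(y)$ with the same error $\epsilon$ in the worst case over $y$.

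To bound the expected cost of $R^{\ast}$, fix $y\in\Dom(f)$ and for each $i$ define a randomized decision tree $R^{(i,y_{-i})}$ on $m$ bits by treating the internal randomness of $R$ together with the samples $x^{(j)}\sim\mu_{y_j}$ for $j\ne i$ as external randomness; once these are fixed, the sequence of queries $R$ sends to copy $i$ is a deterministic decision tree on $m$ bits whose input is drawn from $\mu_{y_i}$. Lemma~\ref{lem:oraclesim-cost} then bounds the expected noisy-query cost of the $i$-th \textsc{OracleSim} instance by $O\bigl(\h^2(\tran(R^{(i,y_{-i})},\mu_0),\tran(R^{(i,y_{-i})},\mu_1))\bigr)$. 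The Hellinger distinguishing cost property of $(\mu_0,\mu_1)$ further bounds this by
\[
O\!\left(\frac{\min\{\cost(R^{(i,y_{-i})},\mu_0),\cost(R^{(i,y_{-i})},\mu_1)\}}{\sfR(g)}\right)
\le O\!\left(\frac{\cost(R^{(i,y_{-i})},\mu_{y_i})}{\sfR(g)}\right),
\]
using $\min\{a,b\}\le b$. Summing over $i$, the total is $O(\cost(R,\mu_y)/\sfR(g))=O(\compR_\epsilon(f,g)/\sfR(g))$, uniformly in $y$. Hence $\noisyR_\epsilon(f)=O(\compR_\epsilon(f,g)/\sfR(g))$, which rearranges to the claimed bound.

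The main subtlety lies in identifying $R^{(i,y_{-i})}$ as a bona fide randomized decision tree on $m$ bits to which Lemma~\ref{lem:oraclesim-cost} applies verbatim, since $R$'s queries to copy $i$ are interleaved with queries to other copies and with internal randomness; this works precisely because \textsc{OracleSim} is a streaming protocol that processes queries online, so the $n$ instances can be run concurrently on independent noisy oracles without interfering. A secondary worry is to obtain the error guarantee uniformly over $y$ (rather than just in expectation), but this is handled automatically because the correctness statement of \textsc{OracleSim} (Lemma~\ref{lem:oraclesim-correct}) is exact.
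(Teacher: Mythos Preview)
Your proof is correct and follows essentially the same approach as the paper: fix hard distributions $(\mu_0,\mu_1)$ for $\sfR(g)$, take an optimal $\compR$-algorithm, simulate it via $n$ independent \textsc{OracleSim} instances, invoke \lem{oraclesim-correct} for correctness and \lem{oraclesim-cost} for the per-copy cost bound, and sum using the Hellinger distinguishing cost inequality. Your marginal algorithm $R^{(i,y_{-i})}$ is exactly the paper's $A_y^i$, and your closing paragraph on the streaming nature of \textsc{OracleSim} correctly identifies the only point requiring care.
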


\begin{proof}
Only the second part needs proof
($\compR(f,g)$ is by definition at most
$\bar{\R}(f\circ g)$).
The idea of the proof is to convert an algorithm for $f\circ g$ into
an algorithm for $f$ that acts on a noisy oracle, thereby
upper bounding $\noisyR(f)$ in terms of $\compR(f\circ g)$.
To do so, we will use the \textsc{OracleSim} protocol $n$ times
to simulate an oracle for each $g$-input.
Recall that \textsc{OracleSim}
allows us to pretend to have a sample $x$ from distribution
$\mu_b$ without knowing $b$ (so long as we have access to a noisy
oracle for $b$). We will use this protocol to run the algorithm
for $f\circ g$ without actually having the $n$ input strings to
the copies of $g$; instead, we will only have noisy oracles for
the $n$ bits to which the copies of $g$ evaluate. This will
define a $\noisyR(f)$ algorithm.

Let $\mu_0$ and $\mu_1$ be hard distributions for $\sfR(g)$,
so that their support is over $g^{-1}(0)$ and $g^{-1}(1)$, respectively, and every randomized decision tree $R$ satisfies
\begin{equation}
\label{eq:cost-lb}
\min\{\cost(R,\mu_0),\cost(R,\mu_1)\}\ge
\sfR(g)\cdot\h^2(\tran(R,\mu_0),\tran(R,\mu_1)).
\end{equation}

Next, consider a randomized algorithm $A$ which solves $f\circ g$
to error $\epsilon$ against distributions $\mu_y$ for $y\in\Dom(f)$
using at most $\compR_\epsilon(f\circ g)$ expected queries.
We will use algorithm $A$ to define an algorithm $B$ which solves $f$
when it accesses the input to $f$ with a noisy oracle.
The algorithm $B$ works as follows. Given noisy-oracle query access
to an input string $y$ of length $n$, the algorithm $B$ creates $n$
instances of the \textsc{OracleSim} protocol.
It instantiates each of those protocols using the distributions $\mu_0$, and $\mu_1$. 
Call these protocol instances $\Pi_1,\Pi_2,\dots,\Pi_n$. 
The algorithm $B$ also hooks up each $\Pi_i$ with the noisy oracle
for $y_i$.
Finally, with these protocols all set up, the algorithm $B$ will simulate the algorithm $A$, and whenever $A$ makes an input to bit number $j$ inside the $i$th copy of $g$ the algorithm
$B$ will feed in query $j$ into $\Pi_i$ and then return to $A$
whatever alphabet symbol $\Pi_i$ returns. 
When $A$ terminates, the algorithm $B$ outputs the output of $A$.

We analyze the correctness of $B$ on an arbitrary input $y\in\Dom(f)$.
We know that $A$ correctly solves $f\circ g$ on $\mu_y$
to error $\epsilon$.
By \lem{oraclesim-correct}, the protocols $\Pi_i$
act the same as the true oracles. 
So the error of $B$ is also at most $\epsilon$.

Next, we wish to show that the expected cost of the queries $B$
makes on an arbitrary input $y$ is at most
$O(\compR_\epsilon(f,g)/\sfR(g))$.
To start, we note that the simulation of $A$ that $B$ runs
makes at most $\compR_\epsilon(f,g)$ queries in expectation.
Now, for each $i$, let $A_y^i$ be the algorithm $A$
restricted to make queries only in the $i$-th input to $g$,
with all other inputs generated artificially from their fake oracles;
that is, $A_y^i$ is a algorithm acting on only $m$ bits,
which sets up $n-1$ fake oracles and runs $A$ on the fake oracles
with the true input in place of the $i$-th oracle.
Then the expected number of queries $A$ makes against $\mu_y$
is $\sum_{i=1}^n\cost(A_y^i,\mu_{y_i})$, so this sum is
at most $\compR_\epsilon(f,g)$.

We wish to bound the cost of $B$, which is the expected number
of queries all the protocols $\Pi_i$ make to the noisy oracles.
By \lem{oraclesim-cost},
the expected cost of the noisy queries to the noisy oracle for $y_i$
made by the protocol $\Pi_i$ when implementing $A$
is at most $C\h^2(\tran(A_y^i,\mu_0),\tran(A_y^i,\mu_1))$
for some constant $C$, so
the total cost of $B$ on $\mu_y$ is at most
$C\sum_{i=1}^n \h^2(\tran(A_y^i,\mu_0),\tran(A_y^i,\mu_1))$.
Furthermore, by~\eqref{eq:cost-lb}, for every $i$ we have
\[
\cost(A_y^i,\mu_{y_i}) \ge \sfR(g) \h^2(\tran(A_y^i,\mu_0),\tran(A_y^i,\mu_1))
\]
and so the expected cost of $B$ is bounded above by
\[
C\sum_{i=1}^n \h^2(\tran(A_y^i,\mu_0),\tran(A_y^i,\mu_1))
\le C\frac1{\sfR(g)} \sum_{i=1}^n \cost(A_y^i,\mu_{y_i}) 
\le C\frac{\compR_\epsilon(f,g)}{\sfR(g)}.
\]
This shows that
\[
\noisyR_\epsilon(f) \le
C \frac{\compR_\epsilon(f,g)}{\sfR(g)},
\]
as desired.
\end{proof}

\subsection{Further discussion}

Our phrasing of the composition theorem
in terms of $\compR(f,g)$ highlights the fact
that our composition theorem is \emph{distributional}:
it constructs a hard distribution for $f\circ g$
using a hard distribution for $f$ and a hard
distribution for $g$. This is not unique
to our work; most composition theorems in the literature
seem to be distributional in this way,
though this is not usually emphasized.

One interesting thing about distributional
composition theorems is that they are not
obvious even when the outer function is trivial.
For example, consider the function
$\triv_n$, a promise problem on $n$ bits
whose domain is $\{0^n,1^n\}$ and which
maps $0^n\to 0$ and $1^n\to 1$.
We have $\R(\triv_n)=1$.
It is also immediately clear that
$\R(\triv_n\circ g)=\Omega(\R(g))$,
because if we give each copy of $g$ the
same input $x$, computing $\triv_n\circ g$
is equivalent to computing $g$ on $x$.
However, this lower bound on $\R(\triv_n\circ g)$
is not distributional! That is to say,
the hard distribution implicit in this argument
for $\triv_n\circ g$ does not have the form
of a hard distribution for $\triv_n$ composed
with a hard distribution for $g$.

Indeed, the question of proving a
\emph{distributional} composition
theorem for $\triv_n\circ g$
(that is, the problem of lower bounding
$\compR(\triv_n,g)$)
is what is called the correlated
copies problem in
the concurrent work of \cite{BDG+20}.
They prove $\compR(\triv_n,g)=\Omega(\R(g))$.
This is also matched by our
independent composition theorem above,
since we show
$\compR(\triv_n,g)=\Omega(\noisyR(\triv_n)\R(g))$
and since $\noisyR(\triv_n)=\Omega(1)$
(see \lem{noisyR_one}).

\section{Characterizing \texorpdfstring{$\noisyR(f)$}{noisyR(f)}}

In this section we characterize $\noisyR(f)$
as $\R(f\circ\gapmaj_n)/n$. We also show
that in the non-adaptive setting, $\noisyR(f)$
and $\R(f)$ are equal up to constant factors.

\subsection{Warm-up lemmas}

To start, we show that a $\noisyR(f)$ algorithm
can always be assumed to use only two bias parameter
settings: either bias $1$ or an extremely small bias.

\begin{lemma}\label{lem:two_biases}
For any (possibly partial) Boolean function
$f$, there is a randomized
algorithm for $f$ on noisy oracles which has
worst-case expected cost $O(\noisyR(f))$,
but which only queries its noisy
oracles with parameter either $\gamma=1$
or $\gamma=\hat{\gamma}$ (for a single
value of $\hat{\gamma}>0$ that may depend on $f$).
\end{lemma}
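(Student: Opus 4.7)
The plan is to start with a near-optimal noisy-oracle algorithm $A$ computing $f$ (with worst-case expected cost at most $2\noisyR(f)$, which exists since $\noisyR(f)$ is an infimum attained arbitrarily well) and replace each noisy oracle call in $A$ individually with a small gadget that uses only biases in $\{1,\hat\gamma\}$ and outputs a bit distributionally identical to the original $\gamma_q$-biased call. Because each simulated call matches the original in distribution, the resulting algorithm $A'$ has the same output distribution as $A$ on every input and hence the same error, so the remaining task is purely to bound the cost blowup of each gadget by a constant factor.

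Because $A$ is a finite-support distribution over finite decision trees, it uses only finitely many bias parameters; after WLOG discarding useless $\gamma=0$ calls and reducing to non-negative biases (a call of bias $-\gamma$ is equivalent to one of bias $\gamma$ with the returned bit flipped), let $\Gamma \subseteq (0,1]$ be the finite set of distinct biases appearing in $A$. Fix a constant threshold $\tau = 1/10$ and set
\[
\hat\gamma \;=\; \min\bigl(\{\tau\} \cup (\Gamma \cap (0,\tau])\bigr),
\]
so that $\hat\gamma \le \tau$ and $\hat\gamma \le \gamma_q$ for every $\gamma_q \in \Gamma$ with $\gamma_q \le \tau$. (If $\Gamma \cap (0,\tau]$ is empty, then case (ii) below never triggers.)

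For each call $\text{NoisyOracle}_b(\gamma_q)$ made by $A$, the replacement gadget is: \emph{(i)} if $\gamma_q > \tau$ (in particular if $\gamma_q = 1$), make one $\text{NoisyOracle}_b(1)$ call to learn $b$ exactly, and then output $b$ with probability $(1+\gamma_q)/2$ and $1-b$ with probability $(1-\gamma_q)/2$ (a perfect $\gamma_q$-biased sample of $b$, at cost $1 \le \gamma_q^2/\tau^2 = O(\gamma_q^2)$); \emph{(ii)} if $\gamma_q \le \tau$, let $k_q$ be the smallest odd integer with $k_q \ge 9\gamma_q^2/\hat\gamma^2$, make $k_q$ independent $\text{NoisyOracle}_b(\hat\gamma)$ calls, let $M$ be their majority (with deterministic bias $\gamma_M$ toward $b$), and then flip $M$ independently with probability $(\gamma_M - \gamma_q)/(2\gamma_M)$ and output the result. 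By \lem{amplify}, whose hypotheses $\hat\gamma \le 1/3$ and $k_q \le 1/\hat\gamma^2$ are both satisfied for this choice of $\tau$, we have $\gamma_M \in [(1/3)\sqrt{k_q}\hat\gamma,\ 3\sqrt{k_q}\hat\gamma]$, so $\gamma_M \ge \gamma_q$ and the flip probability lies in $[0,1/2]$; a direct calculation then shows the output has bias $\gamma_M(1 - 2(\gamma_M-\gamma_q)/(2\gamma_M)) = \gamma_q$ toward $b$, exactly matching the original call. The cost is $k_q \hat\gamma^2 \le 9\gamma_q^2 + 2\hat\gamma^2 \le 11\gamma_q^2$, using $\hat\gamma \le \gamma_q$ in this regime.

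Summing the per-call cost bounds, the worst-case expected cost of $A'$ is $O(1)$ times that of $A$, hence $O(\noisyR(f))$; and $A'$ queries its oracles only with biases $1$ and $\hat\gamma$, as required. The main delicate step is compatibility with \lem{amplify}'s constraint $k \le 1/\gamma^2$, which caps the achievable amplification from $\hat\gamma$ at roughly constant bias and forces us to fall back to $\gamma=1$ for any $\gamma_q$ above the threshold $\tau$; this fallback is affordable only because in that regime $\gamma_q^2 = \Omega(1)$, so paying cost $1$ per simulated call is only a constant-factor overhead.
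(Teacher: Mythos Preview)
Your proof is correct and follows essentially the same approach as the paper: take a near-optimal algorithm with finitely many biases, simulate calls above a constant threshold by querying with $\gamma=1$ and adding noise, and simulate calls below the threshold by amplifying $O(\gamma_q^2/\hat\gamma^2)$ copies of bias $\hat\gamma$ via \lem{amplify} and then adding noise. The paper uses threshold $1/3$ and simply takes $\hat\gamma$ to be the smallest nonzero bias in $A$; your choice of $\tau=1/10$ and more careful verification of the hypothesis $k_q\le 1/\hat\gamma^2$ are cosmetic differences.
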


\begin{proof}
Let $A$ be a noisy oracle algorithm for $f$
with cost at most $2\noisyR(f)$.
Recall that noisy oracle algorithms are finite
probability distributions over finite decision
trees, so there are finitely possible
queries to a noisy oracle that $A$ can ever make.
Out of those finitely many possible queries,
let $\hat{\gamma}$ be the smallest nonzero
bias parameter that $A$ ever uses.
We now construct a noisy oracle algorithm $B$
that only makes queries with bias parameter
$\gamma=\hat{\gamma}$ or $\gamma=1$.

The algorithm $B$ works by simulating $A$.
If $A$ makes a query to a noisy oracle
with parameter $\gamma\in[1/3,1]$,
the algorithm $B$ simulates this query
by using parameter $\gamma=1$ instead,
and then artificially adding exactly the right
amount of noise to match the behavior of $A$.
The cost $B$ incurs in making such a query is $1$,
but the cost that $A$ incurred was at least
$1/9$, so this is only a factor of $9$ larger.
This covers all queries $A$ makes with parameter
$\gamma\ge 1/3$.

If $A$ makes a query with parameter
$\gamma\in[\hat{\gamma},1/3)$,
the algorithm $B$ will make
$O(\gamma^2/\hat{\gamma}^2)$ queries
with parameter $\hat{\gamma}$ and take their
majority vote. By \lem{amplify},
this will provide $B$ with a bit $\tilde{b}$ that
has bias greater than $\gamma$ towards
the true value of the input bit. The
algorithm $B$ will then add additional noise
to $\tilde{b}$ in order to decrease its
bias to precisely $\gamma$, matching
the behavior of $A$.
The cost incurred by $B$ in this simulation
is $O(\gamma^2/\hat{\gamma}^2)\cdot \hat{\gamma}^2$,
which is $O(\gamma^2)$, matching the cost
incurred by $A$ up to a constant factor.
\end{proof}

We note that the above lemma also works
when $f$ is a relation, and also works
for $\noisyR_\epsilon(f)$ for any
error parameter $\epsilon$.

Our next lemma shows that $\noisyR(f)$ is always
at least $\Omega(1)$ when $f$ is non-trivial;
in particular, if the input has a sensitive block,
a $\noisyR(f)$ algorithm must make
queries of cost $\Omega(1)$ within that block.

\begin{lemma}\label{lem:noisyR_one}
Let $x,y\in\B^n$ be strings which differ on the
block $B\subseteq[n]$. Then any noisy oracle
algorithm $A$ which distinguishes $x$ from $y$
with bounded error must,
when run on either $x$ or $y$, make queries inside $B$
of total expected cost $\Omega(1)$.

In particular, if $f$ is a
(possibly partial) Boolean function
that is not constant, then $\noisyR(f)=\Omega(1)$.
This also applies to relations $f$ that
have two inputs $x,y$ with disjoint
allowed output sets.
\end{lemma}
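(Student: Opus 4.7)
Let $A$ be a noisy-oracle algorithm that distinguishes $x$ from $y$ with bounded error, and write $T_x, T_y$ for the distributions of the full transcript of $A$ (its internal randomness, every query pair $(i_t,\gamma_t)$, and every noisy answer bit $a_t$) when run on input $x$ and on input $y$ respectively. Since $A$'s output is a deterministic function of the transcript, data processing gives $\Delta(T_x,T_y)=\Omega(1)$, and Pinsker's inequality then yields $\operatorname{KL}(T_x\|T_y)=\Omega(1)$. The strategy is to upper bound this same KL by (a constant times) the expected cost of queries inside $B$ under input $x$, and match these two bounds.

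The key step is the standard chain-rule decomposition. Once the internal randomness and prior history $\tau_{<t}$ are fixed, the next query $(i_t,\gamma_t)$ is determined by the algorithm, so the only random quantity at step $t$ is the answer bit $a_t$, distributed as $\Bernoulli\bigl((1+(-1)^{1-x_{i_t}}\gamma_t)/2\bigr)$ under $x$ (and similarly under $y$). If $i_t\notin B$ then $x_{i_t}=y_{i_t}$ and the step contributes zero KL; if $i_t\in B$ then a direct computation gives
\[
\operatorname{KL}\!\bigl(\Bernoulli(\tfrac{1+\gamma}{2})\,\big\|\,\Bernoulli(\tfrac{1-\gamma}{2})\bigr)=\gamma\log\tfrac{1+\gamma}{1-\gamma},
\]
which is $O(\gamma^2)$ for $|\gamma|\le 1/2$. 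The chain rule then expresses $\operatorname{KL}(T_x\|T_y)$ as the $T_x$-expectation of $\sum_{t:\,i_t\in B}\gamma_t\log\tfrac{1+\gamma_t}{1-\gamma_t}$.

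To conclude, I would split the queries inside $B$ according to whether $|\gamma_t|\le 1/2$: any occurrence of a query with $|\gamma_t|>1/2$ already costs $\ge 1/4$, so if such queries appear with constant probability under $T_x$ the $\Omega(1)$ cost inside $B$ is immediate; otherwise the per-query KL contribution is $O(\gamma_t^2)$, matching the cost, and the $\Omega(1)$ lower bound on $\operatorname{KL}$ forces the expected cost inside $B$ under $x$ to be $\Omega(1)$ as well. A symmetric argument with $\operatorname{KL}(T_y\|T_x)$ handles input $y$. The two corollaries follow immediately: a non-constant $f$ admits some $x\in \Dom(f)\cap f^{-1}(0)$ and $y\in \Dom(f)\cap f^{-1}(1)$, and any bounded-error $\noisyR$-algorithm must distinguish them; for a relation with two inputs whose allowed output sets are disjoint, the same reduction applies with $B$ the disagreement set of $x$ and $y$.

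The main obstacle is the delicate behavior of queries with bias close to $1$, where the per-query KL blows up while the cost stays bounded by $1$; the case split above resolves this by observing that such queries are already individually expensive enough to give the bound by themselves. A cleaner alternative would be to work directly with the squared Hellinger distance, using the uniform bound $\h^2\bigl(\Bernoulli(\tfrac{1+\gamma}{2}),\Bernoulli(\tfrac{1-\gamma}{2})\bigr)=1-\sqrt{1-\gamma^2}\le\gamma^2$ together with a chain-rule-style inequality for the Hellinger distance of adaptive transcripts, in the spirit of \lem{oraclesim-cost}; this avoids the large-$\gamma$ blow-up altogether at the cost of slightly more bookkeeping.
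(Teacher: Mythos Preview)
Your approach is genuinely different from the paper's. The paper does not use an information-theoretic argument at all: it first collapses all queries inside $B$ to a single bit (up to negation), invokes \lem{two_biases} to reduce to two bias values $\hat\gamma$ and $1$, replaces the $\gamma=1$ calls by $O(1/\hat\gamma^2)$ calls of bias $\hat\gamma$ (paying a small additive error), truncates by Markov at $10T_0$ queries, passes to a non-adaptive algorithm, and finally appeals to \lem{amplify} to conclude $T_0\hat\gamma^2=\Omega(1)$. Your KL/chain-rule route is more direct and avoids all of these reductions, which is attractive.

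That said, your case split does not close the main obstacle as stated. In the ``otherwise'' branch you assume that queries with $|\gamma_t|>1/2$ inside $B$ occur with small probability under $T_x$, and then assert that ``the per-query KL contribution is $O(\gamma_t^2)$.'' But on the rare runs where a large-$\gamma$ query does occur, its contribution $\gamma_t\log\frac{1+\gamma_t}{1-\gamma_t}$ is \emph{not} $O(\gamma_t^2)$; it is unbounded (indeed infinite when $\gamma_t=1$). So you cannot conclude $\operatorname{KL}(T_x\|T_y)\le C\cdot\mathrm{cost}_x$, and Pinsker no longer transfers the $\Omega(1)$ lower bound from $\operatorname{KL}$ to $\mathrm{cost}_x$. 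There is one missing idea: after truncating $A$ at the first large-$\gamma$ query inside $B$, you must argue that the truncated algorithm \emph{still} distinguishes $x$ from $y$ with constant advantage, which requires bounding $\Pr_y[\text{truncate}]$, not just $\Pr_x[\text{truncate}]$. This follows because the pre-truncation transcript uses only small biases inside $B$, so its KL between $x$ and $y$ is at most $C\cdot\mathrm{cost}_x$, whence by Pinsker $|\Pr_x[\text{truncate}]-\Pr_y[\text{truncate}]|$ is small; now the truncated algorithm has only small-$\gamma$ queries inside $B$ and your intended bound applies cleanly.

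Your Hellinger alternative does avoid the blow-up uniformly (since $\h^2$ of the two Bernoullis equals $1-\sqrt{1-\gamma^2}\le\gamma^2$), but be aware that the natural chain-rule bound there, via $\JS=I(b;\tau)$ and \clm{h_vs_JS_vs_S}, yields $\JS(T_x,T_y)\le\frac12(\mathrm{cost}_x+\mathrm{cost}_y)$ with the expectation over the \emph{mixture} $(T_x+T_y)/2$. This lower-bounds the sum, not $\mathrm{cost}_x$ individually, whereas the lemma (as used in \lem{noisyR_fbs}) needs $\mathrm{cost}_x=\Omega(1)$ for the specific input $x$. So that route also needs the extra step above, or a separate argument, to get the per-input bound.
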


\begin{proof}
If $f$ is not constant, there exist some
$x,y\in\Dom(f)$ with $f(x)\ne f(y)$.
Any algorithm which computes $f$ can therefore
be used to distinguish $x$ from $y$
with bounded error. Let $B\subseteq[n]$
be the set of indices $i$ for which $x_i\ne y_i$.
Then note that any noisy oracle calls
to noisy oracles for bits outside of $B$
do not help in distinguishing $x$ from $y$.
This reduces the second part of the lemma to the
first part.

Suppose we had a noisy oracle algorithm
distinguishing $x$ and $y$ to bounded error.
Now, up to possible negation, a noisy oracle
call to the $i$-th bit is equivalent
to a noisy oracle call to the $j$-th bit,
since either $x_i=x_j$ and $y_i=y_j$,
or else $x_i=1-x_j$ and $y_i=1-y_j$.
This means that all noisy oracle calls may
as well be made to a single bit $i\in B$.

By \lem{two_biases}, we may assume that a
noisy oracle algorithm distinguishing $x$ from $y$
makes only noisy oracle queries with parameter
$\hat{\gamma}$ or $1$. Let $A$ be such an algorithm,
and we assume that $A$ only ever queries
a single bit of the input. If $A$ ever
uses noisy oracle query with parameter $1$,
it has distinguished $x$ from $y$ with certainty,
so we can halt it there without any loss in our
success probability. Next, we can use
\lem{amplify} to replace the noisy oracle
calls with parameter $1$ with $O(1/\hat{\gamma}^2)$
noisy oracle calls of parameter $\hat{\gamma}$;
doing so decreases the success probability
of $A$ by at most a small additive constant,
and changes the cost of $A$ by at most a constant
factor.

We've reduced to the case where $A$ only
makes noisy oracle queries to a single bit
of the input, all with the same parameter
$\hat{\gamma}$. Let $T_0$ be the expected number
of such calls $A$ makes when run on $x$
and let $T_1$ be the expected number of such calls
it makes when run on $y$, so that its
expected cost is $T_0\hat{\gamma}^2$
and $T_1\hat{\gamma}^2$ respectively.
Assume without loss of generality that
$T_1\ge T_0$. We can cut off the algorithm
$A$ if it ever makes more than $10T_0$ noisy
oracle queries, and have $A$ declare
that the input was $y$; this does not decrease
the success probability of $A$ on input $y$.
Also, on input $x$, a cutoff happens with probability
at most $1/10$ (by Markov's inequality), so
this modification
it changes the success probability of $A$
by at most $1/10$ on input $x$. Hence
this modified algorithm still distinguishes
$x$ from $y$ to bounded error.

Finally, we can replace $A$ with a non-adaptive
algorithm $A'$ which makes $10T_0$ queries
to the oracle with bias $\hat{\gamma}$
all in one batch, and then uses those query
answers to simulate a run of $A$ (feeding
them to $A$ as $A$ requests them). At the end,
$A'$ outputs what $A$ outputs. Then since $A$
distinguishes $x$ from $y$ with constant
probability, so does $A'$, which means
that $A'$ can be used to take $10T_0$
bits of bias $\hat{\gamma}$ and amplify
them to a bit of constant bias.
However, it should be clear that the
best way to take $10T_0$ bits of bias
$\hat{\gamma}$ and output a single bit
with maximal bias is to output the majority
of those bits (this is because if we start
with prior $1/2$ on whether the bits are biased
towards $0$ or $1$, the posterior after
seeing the $10T_0$ bits will lean towards
the majority of the bits).
So the existence of $A'$ ensures we can
take $10T_0$ bits with bias $\hat{\gamma}$,
and their majority will have constant bias.

Finally, by \lem{amplify}, this means that
$10T_0=\Omega(1/\hat{\gamma}^2)$,
which means that the cost of $A$ is $\Omega(1)$,
as desired.
\end{proof}

Finally, we prove the following simple lower bound
on $\noisyR(f)$.

\begin{lemma}\label{lem:noisyR_fbs}
Let $f$ be a (possibly partial) Boolean function.
Then $\noisyR(f)=\Omega(\fbs(f))$.
\end{lemma}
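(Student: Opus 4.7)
The plan is to mirror the standard proof that $\R(f) = \Omega(\fbs(f))$, but using \lem{noisyR_one} as the per-block lower bound in place of the usual ``an algorithm must query some bit in each sensitive block with constant probability.'' The key observation is that the cost of noisy oracle queries adds up over bits just like query probabilities do, so the same LP-dual trick still works.

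Concretely, I would first fix an input $x \in \Dom(f)$ achieving the maximum $\fbs_x(f) = \fbs(f)$, together with a feasible weighting $\{w_B\}_{B \in \mathcal{B}}$ on the sensitive blocks of $f$ at $x$ realizing this value; feasibility means $\sum_{B \ni i} w_B \le 1$ for every $i \in [n]$, and $\sum_B w_B = \fbs(f)$. Then I would let $A$ be a noisy oracle algorithm computing $f$ to bounded error with worst-case expected cost at most, say, $2\noisyR(f)$, and define $c_i$ to be the expected total cost of the noisy oracle queries $A$ makes to bit $i$ when run on input $x$. The total expected cost of $A$ on $x$ is then $\sum_i c_i$, which is at most $2\noisyR(f)$.

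Next, for each sensitive block $B$ of $x$, I observe that $f(x) \ne f(x^B)$, so $A$ in particular distinguishes $x$ from $x^B$ to bounded error. By \lem{noisyR_one}, the total expected cost of $A$'s noisy oracle calls inside $B$ (when run on $x$) must be $\Omega(1)$; that is, $\sum_{i \in B} c_i \ge c_0$ for some universal constant $c_0 > 0$, uniformly over all sensitive blocks $B$.

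Finally, I combine these two facts with the LP-dual chain
\[
c_0 \cdot \fbs(f) \;=\; c_0 \sum_B w_B \;\le\; \sum_B w_B \sum_{i \in B} c_i \;=\; \sum_i c_i \sum_{B \ni i} w_B \;\le\; \sum_i c_i \;\le\; 2\noisyR(f),
\]
where the last inequality on the inner sum uses feasibility $\sum_{B \ni i} w_B \le 1$. This gives $\noisyR(f) = \Omega(\fbs(f))$ as desired. There is no real obstacle here beyond correctly invoking \lem{noisyR_one}: the subtle point is simply that \lem{noisyR_one} gives an $\Omega(1)$ lower bound on the \emph{expected cost} of queries inside the block (not a probability of querying), and this is exactly the right quantity to plug into the fractional packing inequality, since costs---unlike $0/1$ ``queried'' indicators---are additive over queries without any union-bound slack.
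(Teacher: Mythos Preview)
Your proposal is correct and follows essentially the same approach as the paper's own proof: both invoke \lem{noisyR_one} to get $\sum_{i\in B} c_i = \Omega(1)$ for each sensitive block $B$, and then run the identical LP-dual chain against a feasible weighting achieving $\fbs(f)$. The only differences are cosmetic (your $c_i$ is the paper's $p_i$, and you make the constant $c_0$ explicit).
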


\begin{proof}
Fix input $x\in\Dom(f)$ and sensitive block
$B\subseteq[n]$ for $f$ at $x$. Note that
by \lem{noisyR_one}, any noisy oracle algorithm $A$
computing $f$ must, on input $x$,
make queries inside $B$ of total expected cost
at least $\Omega(1)$. For each bit $i$ of $x$,
let $p_i$ be the total expected cost $A$ makes
to the oracle for $x_i$ when run on $x$. Then
we have $\sum_{i\in B} p_i=\Omega(1)$ for every
sensitive block $B$ for $x$.

Now suppose
$A$ achieves worst-case expected cost $O(\noisyR(f))$,
let $x$ be such that $\fbs_x(f)=\fbs(f)$,
and let $\{w_B\}$ be a feasible weighting scheme
over sensitive blocks $B$ such that
$\sum_B w_B=\fbs_x(f)$.
Then for some constant $C$,
\[C\cdot \noisyR(f)\ge\sum_{i=1}^n p_i\ge
\sum_{i=1}^n p_i\sum_{B:i\in B} w_B
=\sum_B w_B\sum_{i\in B} p_i
=\Omega\left(\sum_B w_B\right)
=\Omega(\fbs(f)).\qedhere\]
\end{proof}

\subsection{Characterization in terms of composition
with gap majority}

We now tackle the task of proving
$\noisyR(f)=\Theta(\R(f\circ\gapmaj_n)/n)$.
The core of the proof will be the following theorem,
which states that $\hat{\gamma}$ in \lem{two_biases}
can be taken to be $1/\sqrt{n}$ without loss of
generality.

\begin{theorem}\label{thm:sqrt_bias}
Let $f$ be a (possibly partial) Boolean function
on $n$ bits. Then there is a noisy oracle algorithm
for $A$ of worst-case expected cost $O(\noisyR(f))$
which uses only noisy oracle queries with parameter
$\gamma=1/\sqrt{n}$ or $\gamma=1$.

This also holds when $f$ is a relation, so long
as there are two inputs $x,y$ that have dijoint
allowed output sets.
\end{theorem}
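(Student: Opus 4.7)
The plan is to start from Lemma~\ref{lem:two_biases}, which furnishes a noisy oracle algorithm $A$ of cost $O(\noisyR(f))$ using only bias $\gamma=1$ and a single small bias $\gamma=\hat{\gamma}$; the task is to simulate $A$'s bias-$\hat{\gamma}$ queries using only bias-$1/\sqrt{n}$ queries (bias-$1$ queries need no simulation). I will dispatch the easy regime $\hat{\gamma}\ge 1/\sqrt{n}$ directly by amplification: taking the majority of $\Theta(n\hat{\gamma}^2)$ queries of bias $1/\sqrt{n}$ produces (by Lemma~\ref{lem:amplify}) a bit of bias $\Theta(\hat{\gamma})$ at cost $O(\hat{\gamma}^2)$, after which independent noise normalizes the bias to exactly $\hat{\gamma}$. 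The substantive regime is therefore $\hat{\gamma}<1/\sqrt{n}$.

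In that regime the key tool is a coupling of biased random walks. Set $k \coloneqq \lceil 1/(\sqrt{n}\,\hat{\gamma}) \rceil$ and consider the walk on $\mathbb{Z}$ starting at $0$ with right-step probability $(1+\hat{\gamma})/2$ when the hidden bit is $b=1$ and $(1-\hat{\gamma})/2$ when $b=0$, run until it hits $\pm k$. Any path from $0$ to $+k$ that avoids $-k$ has $n_+ - n_- = k$ (where $n_\pm$ counts right/left steps), so its likelihood ratio between the two laws equals the \emph{constant} $\big((1+\hat{\gamma})/(1-\hat{\gamma})\big)^k$. Thus conditioning on which endpoint absorbs the walk kills all dependence on $b$. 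This yields the simulation: maintain a running walk for each input bit; whenever $A$ requests a bias-$\hat{\gamma}$ query, output the next step of the walk. To produce a fresh segment (from the current multiple of $k$ to the next one), use one bias-$1/\sqrt{n}$ query, post-processed so that its output probability matches the true $+k$ vs.\ $-k$ absorption probability at bias $\hat{\gamma}$, to pick the absorbing endpoint; then sample the intermediate steps from the $b$-independent conditional law using private randomness.

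The cost analysis is then mechanical. One bias-$1/\sqrt{n}$ query costs $1/n$ and drives a segment of expected length $\Theta(k^2)=\Theta(1/(n\hat{\gamma}^2))$ (since $k\hat{\gamma}\le 1/\sqrt{n}$ is small, the walk is essentially unbiased), so the amortized cost per simulated $\hat{\gamma}$-step is $O(\hat{\gamma}^2)$, matching what $A$ paid per query. The only slack is the final partially-consumed segment for each of the $n$ input oracles, contributing at most $1/n$ each, i.e.\ $O(1)$ overall. This additive $O(1)$ is absorbed into a constant factor using Lemma~\ref{lem:noisyR_one}, which yields $\noisyR(f)=\Omega(1)$ under the non-triviality hypothesis (and under the stated relational caveat that two inputs have disjoint allowed outputs).

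The step I expect to take the most care is the probability calibration: choosing $k$ (or mixing over a few nearby values) together with the post-processing so that the bias-$1/\sqrt{n}$ query induces \emph{exactly} the true $+k$-absorption probability at bias $\hat{\gamma}$, and verifying that the resulting simulation is distributionally identical to $A$ rather than merely close, so that $A$'s success probability transfers verbatim rather than degrading. The two-regime split, the amortized cost bookkeeping, and the appeal to Lemma~\ref{lem:noisyR_one} are all routine by comparison.
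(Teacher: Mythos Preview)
Your proposal is correct and follows essentially the same approach as the paper: the two-regime split, the random-walk coupling via the constant likelihood-ratio observation, the amortized cost accounting with an additive $O(1)$ overhead from partially consumed segments, and the appeal to Lemma~\ref{lem:noisyR_one} all match. The calibration you flag as the delicate point is exactly where the paper also works hardest, verifying that the true absorption bias $\delta'=(R-1)/(R+1)$ with $R=\big((1+\hat\gamma)/(1-\hat\gamma)\big)^k$ satisfies $\delta'<1/\sqrt{n}$ (so pure noise-addition suffices, and no mixing over $k$ is needed); the paper then uses Wald's equation to make the amortization rigorous.
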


\begin{proof}
By \lem{two_biases}, there is a noisy oracle
algorithm $A$ for $f$ of worst-case expected
cost at most $O(\noisyR(f))$ which
uses only noisy oracle queries with bias
$1$ or $\hat{\gamma}$.
We will simulate $A$ with a noisy oracle algorithm
$B$ which uses only parameters $1$ or $1/\sqrt{n}$.

Clearly, we can simulate the bias $1$
calls of $A$ with bias $1$ calls in $B$,
so we only need to worry about simulating
the parameter $\hat{\gamma}$ calls. If
$\hat{\gamma}\ge 1/\sqrt{n}$, we can use
multiple noisy oracle calls with parameter
$1/\sqrt{n}$ to simulate one
call with parameter $\hat{\gamma}$
using \lem{amplify}, just like we did
in the proof of \lem{two_biases}.
So the only remaining case is where
$\hat{\gamma}<1/\sqrt{n}$. We can also
assume $f$ is not constant, as the theorem
is easy when $f$ is constant.
For convenience, we will write $\gamma$
in place of $\hat{\gamma}$ from now on,
and we will let $\delta=1/\sqrt{n}>\gamma$.

The idea is to use a single call of bias $\delta$ to generate
a large number of independent bits of bias $\gamma$ each.
The number of bits generated by one call will itself be random,
but we would like its expectation to be
$\Omega(\delta^2/\gamma^2)$.

To achieve this, we note that the sequence of independent
bits that a bias-$\gamma$ oracle should return can
be viewed as a random walk on a line, where each $1$
bit walks forward and each $0$ bit walks backwards.
Let $t=\lfloor \delta/5\gamma\rfloor$, and imagine
placing a mark on the line every $t$ steps in both
directions; that is, positions $0,t,-t,2t,-2t,3t,-3t,\dots$
will all be marked. Note that if the random walk is currently
at one marked point $at$ for some integer $a$,
then with probability $1$, it will eventually reach either
$(a-1)t$ or $(a+1)t$. We generate sequences of steps
in batches: starting from position $at$, we generate
bits until either position $(a+1)t$ or $(a-1)t$ is reached.

To generate such a batch of bits, we first generate
a single bit of from the noisy oracle of bias $\delta$,
and add a small amount of noise to it to decrease its bias
to $\delta'$ (to be chosen later).
If this bit comes out $0$,
we generate a sequence of bits of bias
$\gamma$ conditioned on this sequence reaching $(a-1)t$
before it reaches $(a+1)t$; alternatively, if the bit
is $1$, we generate a sequence of bits of bias
$\gamma$ conditioned on this sequence reaching $(a+1)t$
before it reaches $(a-1)t$.

The first crucial observation is that the distributions
of these sequences are the same whether the bias $\gamma$
is in the $0$ direction or the $1$ direction; that is,
conditioned on reaching $(a+1)t$ before reaching
$(a-1)t$, the probability of each sequence of steps
is identical in the case where the bias is $\gamma$
and in the case where the bias is $-\gamma$.
To see this, pick any such sequence of steps;
say there are $w$ steps forward and $z$ steps back,
with $w-z=t$. The probability of exactly this sequence
occurring is exactly
\[\left(\frac{1+\gamma}{2}\right)^w
\left(\frac{1-\gamma}{2}\right)^z
=\left(\frac{1-\gamma^2}{4}\right)^z
\left(\frac{1+\gamma}{2}\right)^t\]
if the bias is $\gamma$, and exactly
\[\left(\frac{1-\gamma^2}{4}\right)^z
\left(\frac{1-\gamma}{2}\right)^t\]
if the bias is $-\gamma$. Hence the ratio between
the probability under bias $\gamma$ and under bias $-\gamma$
is always
$R\coloneqq\left(\frac{1+\gamma}{1-\gamma}\right)^t$,
which is independent of the sequence of steps.
In other words, for every sequence of steps that ends up
at $(a+1)t$, that sequence is exactly $R$ times more likely
when the bias is $\gamma$ compared to when it is $-\gamma$.
This means that when we \emph{condition} on some
subset of sequences that all reach $(a+1)t$,
the conditional probability will be the same regardless
of whether the bias is $\gamma$ or $-\gamma$.

Now, what is the probability of reaching $(a+1)t$ before
reaching $(a-1)t$? If this probability is $p$
when the bias is $-\gamma$, then it is $R\cdot p$
when the bias is $\gamma$. By symmetry, the probability
of reaching $(a-1)t$ before $(a+1)t$ will be $R\cdot p$
when the bias is $-\gamma$ and $p$ when the bias
is $\gamma$. Since the probability of never reaching either
of $(a-1)t$ or $(a+1)t$ is $0$, we must therefore have
$p+Rp=1$, or $p=1/(R+1)$. That is, the probability
of reaching the threshold in the direction the bias points
towards is $R/(R+1)$, and the probability of reaching
the threshold in the other direction is $1/(R+1)$,
where $R=\left(\frac{1+\gamma}{1-\gamma}\right)^t$.

We pick $\delta'$ so that the probability of a single
bit of bias $\delta'$ being correct is exactly $R/(R+1)$,
and the probability the bit is wrong is $1/(R+1)$.
To do so, we set $(1-\delta')/2=1/(1+R)$, or
$\delta'=(R-1)/(R+1)$. It next will be useful to place some
bounds on $R$.

It is not hard to check using elementary calculus that
$(1+2\gamma/(1-\gamma))^t \ge1+2\gamma t$
holds whenever $t\ge 1$ and $\gamma\in(0,1)$.
We therefore have $R\ge 1+2\gamma t$.
Note that $t=\lfloor \delta/5\gamma\rfloor>\delta/5\gamma-1$
and that $\delta/\gamma>10$; this means
$t>\delta/10\gamma$, so $R\ge 1+\delta/5$.

In the other direction, note that
\[\ln R=t(\ln(1+\gamma)-\ln(1-\gamma))
=2t(\gamma+\gamma^3/3+\gamma^5/5+\dots)
\le 2t\gamma/(1-\gamma^2)).\]
Using $\gamma<1/10$, we have
$\ln R< (5/2)t\gamma$, or $R\le e^{(5/2)t\gamma}$.
Note that for all $x\in[0,1/2]$, we have
\[e^x\le 1+2x.\]
Since $t\le \delta/5\gamma$, we have
$(5/2)t\gamma< \delta/2\le 1/2$, so we have
$R\le e^{(5/2)t\gamma}\le e^{\delta/2}\le 1+\delta$.
Hence $(R-1)/(R+1)=1-2/(R+1)$ is at least
$1-2/(2+\delta/5)=1-1/(1+\delta/10)\ge\delta/5$
and at most $1-2/(2+\delta)=1-1/(1+\delta/2)\le \delta/2$.
Thus our choice of $\delta'$ is smaller than $\delta$
but within a constant factor of $\delta$, so we can easily
convert from a bit of bias $\delta$ to a bit of bias
$\delta'$ by adding noise.

In summary, we can generate a random walk of bias $\gamma$
by first generating the sequence of marked spots
(i.e.\ multiples of $t$) that this sequence visits
as a random walk of bias $\delta'$, and then generating
the sequence of steps that get from a given multiple of $t$
to the subsequent one from the conditional distribution
(which turns out to be the same distribution
regardless of whether the bias is $\gamma$ or $-\gamma$).
This reproduces the correct distribution
over random walks except for probability mass of $0$
(in the cases where the random walk ``gets stuck'' between
$at$ and $(a+1)t$ forever), and probability mass $0$
does not matter to us as our algorithm is finite.

The above is a valid way of simulating noisy oracle
calls to bias $\gamma$ using noisy oracle calls to bias
$\delta>\gamma$. What remains is to analyze the cost
of this procedure.
Note that the expected number
of steps of bias $\gamma$
taken from $at$ until either $(a-1)t$ or
$(a+1)t$ is reached is (by \cite{Fel57}, section XIV.3,
page 317) exactly
\[\frac{t}{\gamma}
\left(1-2(1-\gamma)^t\frac{(1+\gamma)^t-(1-\gamma)^t}
    {(1+\gamma)^{2t}-(1-\gamma)^{2t}}\right).\]
We now lower bound this. Note that
$(1+\gamma)^{2t}-(1-\gamma)^{2t}\ge 4\gamma t$,
and that
\[(1+\gamma)^t-(1-\gamma)^t
=2(\binom{t}{1}\gamma+\binom{t}{3}\gamma^3+\dots)
\le 2(\gamma t+\gamma^3 t^3+\dots)
\le 2\gamma t/(1-\gamma^2 t^2).\]
Also, $(1-\gamma)^t\le 1-\gamma t$. Hence the expectation
is at least
\[\frac{t}{\gamma}
\left(1-\frac{1-\gamma t}{1-\gamma^2 t^2}\right)
=\frac{ t^2}{1+\gamma t}
\ge \frac{t^2}{1+\delta/5}\ge \frac{\delta^2}{120\gamma^2}.\]

In other words, for each call to the oracle of bias $\delta$
(which costs us $\delta^2$), we expect to generate
at least $\delta^2/120\gamma^2$ random bits of bias $\gamma$
(which cost the old algorithm $\gamma^2$ each).
This is exactly what we need, except for two issues:
first, we only generate this many bits on expectation;
sometimes we generate less. We have to do the analysis
carefully to account for this. Second, to generate
a single bit of bias $\gamma$ still requires us to query
the noisy oracle with bias $\delta$ and pay the full
$\delta^2$; in other words, we do not necessarily
have the ability to amortize this cost. This can
happen once per bit.

To analyze the total expected cost, we start by
generating one bit of bias $\delta$ for each of the $n$
input positions, and using those bits to initiate
random walks that reach $t$ or $-t$. The cost
of this initiation phase is $n\delta^2=1$ (since
$\delta=1/\sqrt{n}$). Thereafter, we only
query the noisy oracle of bias $\delta$ when necessary,
that is, when we run out of the artificially-generated
$\gamma$-biased bits. The total expected cost of this
procedure is the sum of the expected cost for
each of the $n$ input positions, so we analyze the cost
of a single input position.

For such a position, what happens is that a walk
of bias $\gamma$ is generated, and then cut off in a way
that can depend on the walk so far as well as on independent
randomness. We know the expected number of steps
before cutting off is $T_i$
(where $\sum_i T_i=O(\noisyR(f)/\gamma^2)$),
and we wish to bound the expected number of bits of bias
$\delta$ we must generate to simulate this sequence --
which means we must bound the expected number of times the walk
crossed a point which is a multiple of $t$ (not counting
the same multiple of $t$ if it occurs twice in a row).
But each time we reach a multiple of $t$, it is effectively
as if we start back at $0$.

In other words, let $X$ be the
random variable for the number of steps it takes to reach
$t$ or $-t$ starting at $0$ (with bias $\gamma$).
We know that $\bE[X]=\mu$, where $\mu\ge\delta^2/120\gamma^2$.
We play the following game: we add up independent copies of $X$,
which we label $X_1,X_2,\dots$, and we stop adding them
by some stopping rule $L$ where $L$ is a random variable
that can depend on $X_1,X_2,\dots, X_{L-1}$
(but not on $X_t$ for $t\ge L$). We know that
$\bE[\sum_{\ell=1}^L X_{\ell}]\le T_i$, and we wish to upper
bound $\bE[L]$ by $T_i/\mu$. This is
what's known as Wald's equation, which can be shown as follows
(using $I_t$ to denote the indicator random variable
with $I_t=0$ if $t>L$ and $I_t=1$ otherwise):
\[\bE\left[\sum_{t=1}^L X_t\right]
\!=\bE\left[\sum_{t=1}^\infty X_tI_t\right]
\!=\!\sum_{t=1}^\infty\bE[X_tI_t]
=\!\sum_{t=1}^\infty\Pr[I_t=1]\bE[X_t|I_t=1]
=\!\sum_{t=1}^\infty\Pr[L\ge t]\bE[X_t]
=\mu\bE[L].\]
This line crucially uses the fact that $\bE[X_t|L\ge t]=\bE[X_t]$,
which holds because $L$ depends only on $X_1,X_2,\dots,X_{L-1}$
but not on $X_L$.
Thus we have $T_i\ge\bE[L]\mu$, or $\bE[L]\le T_i/\mu$.
Hence the expected number of queries to the $\delta$-biased
oracle is $T_i/\mu$, and summing over all $i$, it is
at most $O(\noisyR(f)/\gamma^2\mu)=O(\noisyR(f)/\delta^2)$.

The final cost of the algorithm is therefore $O(\noisyR(f))+1$. Since $\noisyR(f)=\Omega(1)$,
this is $O(\noisyR(f))$, as desired.
\end{proof}

We now prove \thm{noisyR_gapmaj},
showing that $\noisyR(f)=\Theta(\R(f\circ\gapmaj_n)/n)$
for every (possibly partial) Boolean functions $f$,
where $n$ is the input size of $f$.
We note that this theorem also holds for relations:
we have
\[\noisyR_\epsilon(f)
=\Theta(\R_\epsilon(f\circ\gapmaj_n)/n)\]
for any constant $\epsilon$\footnote{recall
that relations cannot be amplified, so $\epsilon$
matters.} and
for any relation $f$ that has two inputs $x,y$
with disjoint allowed output sets.

In one direction, this follows via
\thm{composition}: we have
\[\R(f\circ\gapmaj_n)=\Omega(\noisyR(f)\R(\gapmaj_n)),\]
and $\R(\gapmaj_n)=\Omega(n)$ by \lem{gapmaj}.
Hence $\noisyR(f)=O(\R(f\circ\gapmaj_n)/n)$,
even for relations $f$ (since \thm{composition}
holds for relations).

In the other direction, fix a function or relation
$f$. By \thm{sqrt_bias}, there is some noisy
oracle algorithm $A$ with worst-case expected
cost $O(\noisyR(f))$ that computes $f$
using only noisy oracle calls with parameter
$1$ or $1/\sqrt{n}$. We can easily
turn this into an algorithm for $f\circ\gapmaj_n$
whose cost is $O(n)$ times larger:
a noisy oracle call with parameter $1$ to a bit $x_i$
of the input to $f$ will
be implemented by querying the entire $\gapmaj_n$
gadget at that position, incurring a cost
of $n$ instead of $1$. On the other hand,
a noisy oracle call with parameter $1/\sqrt{n}$
to bit $x_i$
will be implemented by querying a single, random
bit of the corresponding $\gapmaj_n$ input.
This will incur cost $1$ instead of cost $1/n$.
Note that the bias of a single query to the
$\gapmaj_n$ input might be slightly different
than $1/\sqrt{n}$ due to rounding. If it's slightly
larger, we can simply add noise to get bias exactly
$1/\sqrt{n}$. If it's slightly smaller, we can
query several bits independently at random in order
to amplify the bias slightly, reducing to the case
where the bias is slightly larger than $1/\sqrt{n}$.
This costs only a constant factor overhead.
We conclude that $A$ can be converted to an algorithm
solving $\R(f\circ \gapmaj_n)$ which makes
$O(n\cdot\noisyR(f))$ queries, as desired.

\subsection{The non-adaptive case}

We now turn to the non-adaptive setting,
in order to show that in that
setting, $\noisyR(f)$ becomes equal to $\R(f)$.

\subsubsection*{Definitions}

First, we properly define the non-adaptive complexity measures $\noisyR^\na(f)$ and $\R^\na(f)$.
To start, a \emph{deterministic} non-adaptive
algorithm is a subset $S\subseteq [n]$
together with a map $\alpha\colon\B^S\to\B$; when we apply
such an algorithm $(S,\alpha)$ to an input $x\in\B^n$,
the output will be $\alpha(x_S)$, where $x_S$ denotes
the string $x$ restricted to the positions in $S\subseteq[n]$.
The cost of $(S,\alpha)$ will be $|S|$.

A \emph{randomized} non-adaptive algorithm will then
simply be a probability distribution over deterministic
non-adaptive algorithms, and for such a randomized algorithm $R$
we will let $\cost(R)$ be the expectation of $|S|$ and
we will let $\height(R)$ be the maximum value of $|S|$
for $(S,\alpha)$ in the support of $R$.
We let $R(x)$ denote the random variable which
takes value $\alpha(x_S)$ when $(S,\alpha)$ is sampled from $R$,
and we will say $R$ computes $f$ to worst-case error $\epsilon$
if $\Pr[R(x)\ne f(x)]\le \epsilon$ for all $x\in\Dom(f)$.
Then $\R^\na_\epsilon(f)$ will be the minimum height of
a non-adaptive randomized algorithm computing $f$, and
$\bar{\R}^\na_\epsilon(f)$
will be the minimum worst-case cost of such an algorithm.
As usual, we omit $\epsilon$ when it equals $1/3$,
and we note that $\bar{\R}^\na(f)=\Theta(\R^\na(f))$
due to Markov's inequality.

A $\noisyR^\na(f)$
algorithm will also be a probability distribution
over pairs $(S,\alpha)$, but this time $S$ will contain
a multiset of noisy queries instead of a set of queries.
The multiset $S$ will contain pairs $(i,\gamma)$ where
$i\in[n]$ and $\gamma\in(0,1]$; each such pair represents
a query to the noisy oracle for $x_i$ with parameter $\gamma$.
We will require the multiset $S$ to be finite, and we will
also require the probability distribution over
pairs $(S,\alpha)$ to have finite support.
A single query $(i,\gamma)$ will have cost $\gamma^2$,
the cost of $S$ will be the sum of the costs of its elements,
and the cost of a noisy algorithm will be the expected
cost of $S$ for over pairs $(S,\alpha)$ sampled from the algorithm.
The output of such an algorithm $R$ on input $x$,
denoted $R(x)$, will be the random variable corresponding
to sampling $(S,\alpha)$ from $R$, making noisy queries
to $x$ as specified by $S$, and applying the Boolean function
$\alpha$ to the result of those queries.
We say $R$ computes $f$ to error $\epsilon$
if $\Pr[R(x)\ne f(x)]\le\epsilon$ for all $x\in\Dom(f)$.
We define $\noisyR^\na_\epsilon(f)$ to be the infimum of $\cost(R)$
over noisy non-adaptive algorithms $R$ which compute $f$
to error $\epsilon$; when $\epsilon=1/3$, we omit it.

\subsubsection*{Some simplifications}

We observe that we can simplify $\noisyR^\na(f)$
substantially. First, we can remove pairs $(S,\alpha)$
from the domain of a noisy non-adaptive algorithm $R$
if the cost of $S$ is larger than $10$ times the cost of $R$;
using Markov's inequality, this only changes
the error of $R$ by an additive $1/10$, and we can amplify
this back to error $1/3$. This means the worst-case and
expected versions of $\noisyR^\na(f)$ are equivalent
up to constant factors. 
Second, we can add an additional query to each bit with
bias $1/\sqrt{n}$; this only increases the cost of the algorithm
by an additive $n\cdot (1/\sqrt{n})^2=1$. We observe
that by \lem{noisyR_one} we have $\noisyR^\na(f)=\Omega(1)$, so
this increase by an additive $1$ is only a constant factor
increase.
Next, inside each multiset $S$ of queries, we can
combine all the queries to bit $i$ into a single noisy
query with a larger bias parameter, so that the algorithm
only makes at most one noisy query to each bit $i$;
furthermore, for each $i$, the bias parameter will be at least
$1/\sqrt{n}$.

Finally, using arguments from \lem{two_biases}, we can
assume the noisy non-adaptive algorithm queries each bit
with bias parameter either $1/\sqrt{n}$ or $1$. If
the total cost of this noisy non-adaptive algorithm is $T$,
then it makes at most $T$ exact queries (with bias $1$)
and at most $nT$ noisy queries with parameter $1/\sqrt{n}$.
Hence we can split $S$ into $A$ and $B$,
where $A\subseteq[n]$ is a set of size $T$
representing exact queries
and $B$ is a multiset of elements from $[n]$ of size $nT$
representing noisy queries (with bias $1/\sqrt{n}$).
In other words, $\noisyR^\na(f)$ is (up to constant
factors) the minimum positive integer $T$ such that there is
a probability distribution $R$ over $(A,B,\alpha)$
with $|A|=T$ and $|B|=nT$ which computes $f$
to error $1/3$, where the output $R(x)$ is generated
by sampling $(A,B,\alpha)$, querying the $T$ bits in $A$,
making noisy queries of bias $1/\sqrt{n}$ to the $nT$
bits in $B$, feeding the results to $\alpha$, and returning
the bit $\alpha$ returns.

\subsubsection*{Switching to the distributional
setting}

In order to prove \thm{nonadaptive}, which states that
$\noisyR^\na(f)=\R^\na(f)$ for partial Boolean functions, we start with the following minimax lemma for
non-adaptive algorithms.

\begin{lemma}\label{lem:nonadaptive_minimax}
Let $f$ be a (possibly partial) Boolean function.
Then there is a distribution $\mu$ over $\Dom(f)$
such that any randomized non-adaptive algorithm
$R$ with $\height(R)<\R^\na_\epsilon(f)$ must
make average error greater than $\epsilon$ when run on inputs
from $\mu$.
\end{lemma}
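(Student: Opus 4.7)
The plan is to prove this via a standard Yao-style minimax argument, specialized to the non-adaptive setting on partial functions, where the height restriction replaces the cost restriction.

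Let $k = \R^\na_\epsilon(f)-1$, so that $k < \R^\na_\epsilon(f)$; since we are dealing with integer heights, it suffices to exhibit $\mu$ such that every randomized non-adaptive $R$ with $\height(R) \le k$ has average error strictly greater than $\epsilon$ on $\mu$. I would consider the finite two-player zero-sum game whose row strategies are the deterministic non-adaptive algorithms $(S,\alpha)$ with $|S| \le k$ (a finite set, since $S$ ranges over subsets of $[n]$ of size at most $k$ and $\alpha$ ranges over the finitely many maps $\B^S \to \B$), whose column strategies are inputs $x \in \Dom(f)$ (also finite), and whose payoff is the indicator of error $\mathbf{1}[D(x) \ne f(x)]$.

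Randomized non-adaptive algorithms of height $\le k$ are exactly mixed strategies over the row player, and distributions on $\Dom(f)$ are mixed strategies over the column player. Because this is a finite game, von Neumann's minimax theorem applies, giving
\[
\min_{R}\; \max_{x \in \Dom(f)} \Pr[R(x) \ne f(x)]
\;=\;
\max_{\mu}\; \min_{D} \Pr_{x \sim \mu}[D(x) \ne f(x)],
\]
where $R$ ranges over randomized non-adaptive algorithms of height $\le k$ and $D$ over deterministic ones of the same height bound. By the definition of $\R^\na_\epsilon(f)$, no randomized non-adaptive algorithm of height at most $k$ computes $f$ to worst-case error $\epsilon$, so the left-hand side is strictly greater than $\epsilon$. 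Hence the right-hand side is strictly greater than $\epsilon$ as well, and the maximizing $\mu$ on the right is supported on $\Dom(f)$.

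Finally, for any randomized $R$ of height $\le k$, we have
\[
\Pr_{x \sim \mu}[R(x) \ne f(x)]
= \E_{D \sim R}\bigl[\Pr_{x \sim \mu}[D(x) \ne f(x)]\bigr]
\ge \min_{D} \Pr_{x \sim \mu}[D(x) \ne f(x)] > \epsilon,
\]
which is the desired conclusion. There is no real obstacle here beyond verifying that the strategy spaces are finite so the minimax theorem applies without any additional compactness or measurability hypothesis; once this is noted, the lemma follows in one line from the standard Yao principle, with the only bookkeeping being that we restrict to algorithms of a fixed height bound (not cost) and that $\mu$ lives on $\Dom(f)$ because partial-function inputs are exactly the column strategies of the game.
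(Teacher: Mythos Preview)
Your proof is correct and follows essentially the same approach as the paper: both apply a minimax/Yao argument to the bilinear error function, observe that the left-hand side exceeds $\epsilon$ by definition of $\R^\na_\epsilon(f)$, and extract the hard distribution $\mu$ from the right-hand side. The only cosmetic difference is that you explicitly set up the finite game with deterministic algorithms versus inputs and invoke von Neumann, whereas the paper applies minimax directly to randomized algorithms versus distributions and cites bilinearity; your extra care in verifying finiteness is if anything a slight improvement in rigor.
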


\begin{proof}
This follows from a standard minimax argument dualizing
across the error. That is, let $\mathcal{R}$ be the set
of all randomized non-adaptive algorithms with height
less than $\R^\na_\epsilon(f)$, and let $\Delta$ be the
set of all probability distributions over $\Dom(f)$.
Then a standard minimax theorem gives
\[\adjustlimits\min_{R\in\mathcal{R}}\max_{\mu\in\Delta}
\Pr_{x\sim\mu}[R(x)\ne f(x)]
=\adjustlimits\max_{\mu\in\Delta}\min_{R\in\mathcal{R}}
\Pr_{x\sim\mu}[R(x)\ne f(x)],\]
since $\Pr_{x\sim\mu}[R(x)\ne f(x)]$ is bilinear as a function
of $\mu$ and of $R$. The left hand side is the worst-case error
of randomized non-adaptive algorithms of height less than
$\R^\na_\epsilon(f)$, which must be strictly greater than
$\epsilon$. The right hand side then provides a distribution
$\mu$ which is hard for all randomized algorithms of small height.
\end{proof}

We use this minimax lemma to switch to the distributional
setting. That is, we consider a noisy non-adaptive algorithm
$R$ that succeeds in the worst case, and show that for any
fixed distribution $\mu$, we can convert $R$ into
a non-noisy non-adaptive algorithm $R'$ which has similar
cost to $R$ and computes $f$ to bounded error against $\mu$.
Taking $\mu$ to be the hard distribution from
\lem{nonadaptive_minimax} will then give
$\R^\na(f)=O(\noisyR(f))$.

\subsubsection*{Defining a clean algorithm from a noisy one}

Let $R$ be a noisy non-adaptive algorithm which makes $T$
exact queries
and $nT$ noisy queries, with $T=O(\noisyR^\na(f))$,
and computes $f$ to error $1/1000$. Since $R$ computes $f$
to error $1/1000$ in the worst case, it also computes $f$
to error $1/1000$ against inputs from $\mu$; this means
there is some deterministic $(A,B,\alpha)$ in the support of $R$
which also computes $f$ to error at most $1/1000$ against $\mu$.

We will now define a non-adaptive randomized algorithm $R'$
as follows: $R'$ queries all the bits in $A$, and in addition
queries each bit in the multiset $B$ with probability $1/n$
(sampled independently). Then $R'$ computes the posterior
distribution $\mu'$ defined by starting with prior $\mu$
and doing a Bayesian update on the bits $R'$ has seen;
if $\mu'$ has more probability mass on $1$-inputs than $0$-inputs
of $f$, $R'$ then outputs $1$, otherwise $R'$ outputs $0$.

It's easy to see that $R'$ is a non-adaptive randomized algorithm
with expected number of queries equal to
$|A|+|B|/n=O(\noisyR^\na(f))$. It remains to show that $R'$
computes $f$ against $\mu$ to small error, say $1/10$;
then we can remove from the support of $R'$ the query sets
that are larger than $10$ times the expectation, and get
a randomized algorithm $R''$ which uses $O(\noisyR^\na(f))$
worst-case queries and still makes error at most $1/10+1/10<1/3$
against $\mu$. By the definition of $\mu$, this implies
that $\noisyR^\na(f)=\Omega(\R^\na(f))$, as desired.

\subsubsection*{Rephrasing the error analysis in terms of
noisy channels}

To analyze the error that $R'$ makes against $\mu$,
we first make the following modification to the strings
under consideration. For each $x\in\Dom(f)$, we define
the string $\hat{x}$ as the string of length $mT+nT$
whose first $mT$ bits are the bits of $x$ from $A$ copied $m$ times
each, and whose next $nT$ bits are the bits of $x$ from the multiset
$B$ (this will require duplicating bits of $x$ and
rearranging them). Since $A\cup B$ contains all bits of $x$
at least once, the resulting string $\hat{x}$ uniquely determines
the original string $x$ (but has many of its bits duplicated multiple
times). We can therefore modify the function
$f$ to get $\hat{f}$ such that $\hat{f}(\hat{x})=f(x)$,
and modify $\mu$ to get $\hat{\mu}$ over the modified strings.
Note that each non-adaptive randomized algorithm on the original
strings $x$ can be modified to work on the strings $\hat{x}$,
and vice versa. This also works for noisy randomized algorithms.
The parameter $m$ will be chosen to be much larger than $n$.

We wish to argue that $R'$ has high success probability, using
the fact that $R$ has good success probability. We note
that since $R$ succeeds with good probability, we can compute
$\hat{f}$ against $\hat{\mu}$ simply by making one noisy
query to each bit of the input $\hat{x}$, with parameter
$1/\sqrt{n}$ each. In other words, if $X$ is the random variable
with probability distribution $\hat{\mu}$, let $N_\gamma(\cdot)$
denote the \emph{noisy channel} where each bit of the string
gets flipped with independent probability $(1-\gamma)/2$.
Then we can compute $\hat{f}(X)$ by observing
only $Y=N_{1/\sqrt{n}}(X)$. To do so, we first compute the
bits in $A$ using $Y$: for each bit in $A$, we receive
$m$ noisy versions of it sampled independently with bias $1/\sqrt{n}$
each. Taking a majority vote of these noisy versions, and
assuming $m$ is much larger than $n$, we get an estimate for
each bit in $A$ which has error probability as small as we'd like.
Afterwards, we use these bits in $A$, combined with the noisy
bits in $B$, and apply $\alpha$ to get an estimate of
$f(x)=\hat{f}(X)$. The original error probability of $\alpha$
(against $\mu$) was $1/1000$; by picking $m$ large enough,
we can get this new protocol to have error probability at most
$1/999$. In other words, we have a function $\beta$
such that $\Pr[\beta(Y)\ne\hat{f}(X)]\le 1/999$,
where $Y=N_{1/\sqrt{n}}(X)$.

\subsubsection*{Switching from error probability to
relative entropy}

This gave us a noisy channel way to express the success probability
of $R$. We now express the success probability of $R'$
in terms of an \emph{erasure} channel. That is,
let $E_\gamma(X)$ denote the channel that replaces each bit
of $X$ with $*$ except with independent probability $\gamma$.
Consider the string $E_{1/n}(X)$. This string
erases each bit of $X$ with probability $1-1/n$, and keeps it
with probability $1/n$. For each bit that was originally in $A$,
there are $m$ copies of this bit in $X$, so the probability that
all the copies get erased can be made arbitrarily small (by
picking $m$ large enough). On the other hand, each bit that
was originally in the multiset $B$ is only kept with probability
$1/n$. Hence the string $Z=E_{1/n}(X)$ has distribution
arbitrarily close to the distribution of queries made by $R'$
against $\mu'$. Therefore, it suffices to prove that
$\Pr[\beta'(Z)\ne \hat{f}(X)]<1/11$ where $\beta'$ is the function
that selects the best Bayesian guess for $\hat{f}(X)$ given
observation $Z=E_{1/n}(X)$.

We now use the following lemma to rephrase our goal in
information-theoretic terms.

\begin{lemma}\label{lem:error_entropy}
For any random variables $X$ and $Y$ on supports
$\mathcal{X}$ and $\mathcal{Y}$ respectively, and for
any functions $f \colon\mathcal{X} \to \{0,1\}$
and $\beta \colon \mathcal{Y} \to \{0,1\}$, we have
\[H( f(X) \mid Y) \le h\big( \Pr[ f(X) \neq \beta(Y)] \big).\]
Moreover, for all such $X$, $Y$, and $f$, there exists a function
$\beta$ such that
\[2\Pr[f(X)\ne \beta(Y)]\le H(f(X)\mid Y),\]
where $h$ is the binary entropy function. In particular,
$\beta$ can be chosen to be the Bayesian posterior function
for guessing $f(X)$ using $Y$.
\end{lemma}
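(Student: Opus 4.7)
The plan is to prove the two bounds separately. The first inequality is an instance of Fano's inequality specialized to a binary-valued target, while the second follows by taking $\beta$ to be the Bayes-optimal (MAP) decoder and reducing to a pointwise inequality between conditional error and conditional entropy.

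For the first inequality, I would introduce the error indicator $E = \mathbf{1}[f(X) \ne \beta(Y)]$ and write, using monotonicity of conditional entropy under coarsening and the chain rule,
\[
H(f(X) \mid Y) \le H(f(X) \mid \beta(Y)) \le H(f(X), E \mid \beta(Y)) = H(E \mid \beta(Y)) + H(f(X) \mid E, \beta(Y)).
\]
The first term is at most $H(E) = h(\Pr[f(X)\ne\beta(Y)])$. The second term vanishes because $f(X) \in \{0,1\}$: once we know $\beta(Y)$ and whether $E=0$ or $E=1$, the value $f(X) = \beta(Y) \oplus E$ is determined.

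For the second inequality I would pick $\beta$ to be the Bayesian posterior decoder, $\beta(y) = \argmax_{b\in\{0,1\}} \Pr[f(X) = b \mid Y = y]$, so that for each fixed $y$ the conditional error equals $\min(p_y, 1-p_y)$ where $p_y := \Pr[f(X)=1 \mid Y=y]$, and the conditional entropy equals $h(p_y)$. Averaging over $Y$ reduces the claim to the scalar inequality $2\min(p,1-p) \le h(p)$ for all $p\in[0,1]$, after which linearity of expectation finishes the argument.

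The only real obstacle is that scalar inequality. By the symmetry $p \leftrightarrow 1-p$ it is enough to treat $p \in [0, 1/2]$, in which case we must show $g(p) := h(p) - 2p \ge 0$. We have $g(0) = 0$ and $g(1/2) = 1 - 1 = 0$, and $g''(p) = h''(p) = -\frac{1}{p(1-p)\ln 2} < 0$ on $(0,1/2)$, so $g$ is strictly concave on that interval and therefore non-negative throughout, with equality only at the endpoints. This completes the proof.
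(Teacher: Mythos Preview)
Your proof is correct. The second inequality is handled essentially identically to the paper: both choose $\beta$ to be the Bayes (MAP) decoder, reduce to the pointwise bound $2\min(p,1-p)\le h(p)$, and average over $Y$; your concavity argument for that scalar bound is a clean way to justify the paper's assertion that $2x\le h(x)$ on $[0,1/2]$.

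For the first inequality the two proofs diverge. You run the textbook Fano argument via the error indicator $E$ and the chain rule, using that $f(X)=\beta(Y)\oplus E$ forces $H(f(X)\mid E,\beta(Y))=0$. The paper instead observes that for every fixed $y$ one has $H(f(X)\mid Y=y)=h\big(\Pr[f(X)\ne\beta(y)\mid Y=y]\big)$ (since $h$ is symmetric about $1/2$), and then applies Jensen's inequality to the concave function $h$ to pull the expectation inside. The paper's route is a bit shorter and exploits the binary range of $f$ right away, while yours is the more general Fano template (and would extend with the usual $\log(|\Sigma|-1)$ correction term if $f$ were not Boolean). Both are standard and equally valid here.
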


The upper bound is a special case of Fano's inequality, and the lower bound was established by Hellman and Raviv~\cite{HR70}. We include the (easy) proof of the lemma for completeness.

\begin{proof}
Using Jensen's inequality,
\begin{align*}
H( f(X) \mid Y)
&= \E_y\big[ h(\Pr[ f(X) \neq \beta(Y) \mid Y = y])\big] \\
&\le h\big( \E_y[\Pr[ f(X) \neq \beta(Y) \mid Y = y]] \big) \\
&= h\big( \Pr[ f(X) \neq \beta(Y) ]\big).
\end{align*}
In the other direction,
using the fact that $2x \le h(x)$ for each $0 \le x \le \frac12$,
we have
\begin{align*}
2\Pr[ f(X) \ne \beta(Y)]
&= \E_{y}\big[ 2\Pr[ f(X) \neq \beta(Y) \mid Y = y] \big] \\
&\le \E_y\big[ h(\Pr[ f(X) \neq \beta(Y) \mid Y = y])\big] \\
&= \E_y\big[ h(\Pr[ f(X) = 1 \mid Y = y])\big]
= H( f(X) \mid Y).
\end{align*}
Note that in the second line, we used
$\Pr[f(X)\ne\beta(Y)\mid Y=y]\le1/2$,
which follows from our choice of $\beta$.
\end{proof}

Using this lemma, we get that
$H(\hat{f}(X)\mid Y)\le h(1/999)\le 1/87$
where $Y=N_{1/\sqrt{n}}(X)$, and we wish to show that
$H(\hat{f}(X)\mid Z)\le 1/22$, where $Z=E_{1/n}(X)$.
In particular, it suffices to show that
$H(\hat{f}(X)\mid Z)\le H(\hat{f}(X)\mid Y)$.

\subsubsection*{Appealing to a theorem of Samorodnitsky}
\label{sec:Sam}

To finish the proof, all we need is a special case of an inequality of
Samorodnitsky~\cite{Sam16} established by Polyanskiy and Wu~\cite{PW17}.
(See \app{SamProof} for more details.)
%This uses the notation $N_\rho(X)$ for the noisy channel
%with parameter $\rho$ applied to a string $X$, and
%$E_\rho(X)$ for the erasure channel with parameter $\rho$
%applied to $X$. The erasure channel replaces bits with $*$
%except with probability $\rho$, and the noisy channel
%replaces each bit with a coin flip that has bias $\rho$
%in the direction of the original bit.

\begin{restatable}[Samorodnitsky~\cite{Sam16,PW17}]{theorem}{NoisyEntropy}
\label{thm:noisyEntropy}
For any function $f : \{0,1\}^n \to \{0,1,*\}$, any distribution $\mu$ on $f^{-1}(0) \cup f^{-1}(1)$, and any $0 \le \rho \le 1$, variables
$X \sim \mu$, $Y \sim N_\rho(X)$, and $Z \sim E_{\rho^2}(X)$ satisfy
\[
H( f(X) \mid Y ) \ge H( f(X) \mid Z ).
\]
\end{restatable}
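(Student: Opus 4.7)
The plan is to derive the theorem from the uniform-input version of the Samorodnitsky--Polyanskiy--Wu entropy inequality: for $U$ uniform on $\{0,1\}^N$ and any function $g : \{0,1\}^N \to \{0,1,*\}$, we have $H(g(U) \mid N_\rho(U)) \ge H(g(U) \mid E_{\rho^2}(U))$. I would cite this uniform version as a black box; its proof relies on Fourier analysis and hypercontractivity, which I would not reproduce. My task is therefore the reduction from a general distribution $\mu$ supported on $f^{-1}(\{0,1\})$ to the uniform-input case.

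The reduction would proceed as follows. By continuity of the conditional entropies in $\mu$, I would first reduce to distributions $\mu$ with rational weights $c_x / 2^k$ for a common denominator $2^k$. I would then work on $\{0,1\}^{n+k}$, with the first $n$ coordinates serving as a ``payload'' block and the last $k$ coordinates as ``tag'' bits. Taking $U$ uniform on $\{0,1\}^{n+k}$, I would couple $U$ to $X \sim \mu$ by letting the tag bits index into a $2^k$-element multiset in which each $x \in \supp(\mu)$ appears $c_x$ times, and by requiring the payload block to equal the selected $x$. Define $g : \{0,1\}^{n+k} \to \{0,1,*\}$ to output $f$ of the payload whenever the payload agrees with the element selected by the tag, and $*$ otherwise. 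On the event of consistency (which $g$ itself flags), the joint distribution of $(g(U), N_\rho(U), E_{\rho^2}(U))$ projects onto the joint distribution of $(f(X), Y, Z)$ after marginalizing over the independent noise applied to the tag bits.

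The main obstacle I expect is showing that this marginalization over tag-bit noise preserves the inequality. The key observation is that once we restrict to the consistency event, the tag-bit noise is independent of $g(U)$, so it contributes the same additive term to $H(g(U) \mid N_\rho(U))$ and to $H(g(U) \mid E_{\rho^2}(U))$; hence the difference between the two conditional entropies is unchanged by this marginalization. I would also need to check that the $*$-valued event (inconsistent payload and tag) does not distort the inequality, which should hold because conditioning on $g(U) = *$ versus $g(U) \in \{0,1\}$ only affects constants that cancel on both sides. Once this bookkeeping is done, the desired inequality for $f, X, Y, Z$ follows directly by applying the black-box Samorodnitsky--Polyanskiy--Wu inequality to $g, U$ on $\{0,1\}^{n+k}$.
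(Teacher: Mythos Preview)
Your reduction to the uniform-input case has a genuine gap, and the paper sidesteps the issue entirely by citing a stronger black box.

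The paper does not use the uniform-$U$ version of Samorodnitsky's inequality. It invokes Theorem~20 of Polyanskiy--Wu, which already applies to an \emph{arbitrary} Markov chain $U\to X^n\to Y^n$ through a memoryless channel, and says $I(U;Y^n)\le I(U;X_S\mid S)$ where each $i\in[n]$ lands in $S$ independently with probability equal to the KL-contraction coefficient $\eta_i$ of the $i$th coordinate channel. Taking $U=f(X)$, $X\sim\mu$, and the channel to be $\mathrm{BSC}_\rho$ on every coordinate (so $\eta_i=\rho^2$), one gets $I(f(X);Y)\le I(f(X);Z)$ and hence $H(f(X)\mid Y)\ge H(f(X)\mid Z)$ in three lines. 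No reduction to uniform input is needed.

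Your proposed reduction breaks at the ``marginalization over tag-bit noise'' step. Conditioned on the consistency event $P=x_T$, the tag $T$ \emph{determines} $g(U)=f(x_T)$, so the channel output on the tag block, $N_\rho(T)$ or $E_{\rho^2}(T)$, carries information about $g(U)$; it is the noise \emph{pattern} $N_\rho(T)\oplus T$ that is independent of $g(U)$, not $N_\rho(T)$ itself. Consequently the tag block does not contribute an identical additive term on the two sides, and how much it helps depends on the (arbitrary) map $T\mapsto x_T$ and on whether the channel is noisy or erasing. The handling of the $*$-event has the same problem: decomposing $H(g(U)\mid\cdot)$ as $H(\mathbf{1}[g(U)\ne *]\mid\cdot)+\Pr[g(U)\ne *]\cdot H(g(U)\mid\cdot,\,g(U)\ne *)$ introduces the term $H(\mathbf{1}[g(U)\ne *]\mid\cdot)$, which is itself larger on the noise side than on the erasure side (by Samorodnitsky applied to the indicator), so subtracting it off does not preserve the inequality. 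There is no evident way to repair this without effectively re-proving the arbitrary-input statement; the clean route is to quote the Polyanskiy--Wu theorem directly, as the paper does.
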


This theorem says that adding noise (leaving bias $\rho$)
to $X$ preserves more information than erasing (leaving
the bit untouched with probability $\rho^2$).
It is exactly what we need to complete the proof, showing
that $\R^\na(f)=O(\noisyR^\na(f))$ for all partial functions
$f$. (The other direction, $\noisyR^\na(f)\le\R^\na(f)$,
follows directly from the definitions.)

As previously noted, this result $\R^\na(f)=O(\noisyR^\na(f))$
is false when $f$ is a relation. The step that fails is
the step where we switched from error probability
to relative entropy, in \lem{error_entropy};
this step has no clear analogue for relations.

\section*{Acknowledgements}

S.~B.~thanks Aditya Jayaprakash for collaboration on related research questions during the early stages of this project.
We thank Andrew Drucker, Mika G{\"o}{\"o}s, and Li-Yang Tan for correspondence about their ongoing work~\cite{BDG+20}.

\newpage
\appendix

\section{Amplifying small biases}\label{app:amplify}

In this appendix, we prove \lem{amplify}, which we restate
below.

\amplify*

To prove this lemma, we will require bounds on the mean
absolute deviation of the binomial distribution with
parameter $p=1/2$. Recall that the mean absolute deviation
is the expectation of $|X-\bE[X]|$, where $X$ is a random
variable (which for us will have a binomial distribution).

\begin{lemma}
The mean absolute deviation $M_k$ of the binomial distribution
with parameters $k$ and $1/2$ (where $k$ is an odd integer)
satisfies
\[\sqrt{\frac{k}{2\pi}}\le M_k
\le \sqrt{\frac{k}{2\pi}}\left(1+\frac{1}{k}\right).\]
\end{lemma}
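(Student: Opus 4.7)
The plan is to first collapse $M_k$ into a single central binomial coefficient and then apply Stirling-type estimates. Writing
\[
M_k = 2^{-k}\sum_{j=0}^{k}\bigl|j-\tfrac{k}{2}\bigr|\binom{k}{j},
\]
the symmetry $\binom{k}{j}=\binom{k}{k-j}$ lets me fold the sum around the mean; since $k$ is odd there is no middle term, so $M_k = 2^{1-k}\sum_{j>k/2}(j-k/2)\binom{k}{j}$. Applying $j\binom{k}{j}=k\binom{k-1}{j-1}$ to the ``$j$'' part and $\sum_{j>k/2}\binom{k}{j}=2^{k-1}$ to the ``$k/2$'' part, the problem reduces to evaluating $\sum_{i\ge (k-1)/2}\binom{k-1}{i}$. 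Since $k-1$ is even, symmetry of $\binom{k-1}{i}$ around its middle gives this sum as $2^{k-2}+\tfrac{1}{2}\binom{k-1}{(k-1)/2}$. The $2^{k-2}$ contributions then cancel, and I expect to be left with the clean identity
\[
M_k=\frac{k}{2^k}\binom{k-1}{(k-1)/2}.
\]
Quick sanity checks at $k=1$ and $k=3$ confirm the formula.

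Setting $n=(k-1)/2$, the task becomes bounding the central binomial coefficient $\binom{2n}{n}$. Using the two-sided Stirling bound $\sqrt{2\pi m}(m/e)^m e^{1/(12m+1)}<m!<\sqrt{2\pi m}(m/e)^m e^{1/(12m)}$ and simplifying the resulting exponential factors, I would obtain
\[
\frac{4^n}{\sqrt{\pi n}}\,e^{-1/(6n)} \;\le\; \binom{2n}{n} \;\le\; \frac{4^n}{\sqrt{\pi n}}.
\]
Plugging into the closed form for $M_k$, the $2^k$ and $4^n$ factors combine to give
\[
\sqrt{\tfrac{k}{2\pi}}\cdot\sqrt{\tfrac{k}{k-1}}\cdot e^{-1/(3(k-1))} \;\le\; M_k \;\le\; \sqrt{\tfrac{k}{2\pi}}\cdot\sqrt{\tfrac{k}{k-1}}.
\]

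To finish, I would verify the two residual algebraic inequalities. For the upper bound, squaring $\sqrt{k/(k-1)}\le 1+1/k$ reduces to $k^2-k-1\ge 0$, which holds for all $k\ge 2$. For the lower bound, I would apply $e^x\le 1+x+x^2$ with $x=2/(3(k-1))\le 1/3$ to check that $1+1/(k-1)\ge e^{2/(3(k-1))}$; this collapses to $3(k-1)\ge 4$, i.e.\ $k\ge 3$. The edge case $k=1$ must be handled directly: $M_1=1/2$ lies in the interval $[\sqrt{1/(2\pi)},\,2\sqrt{1/(2\pi)}]\approx[0.399,\,0.798]$. The main obstacle is simply bookkeeping, since the target bound is asymptotically tight and the Stirling error terms leave almost no slack; I would therefore use the sharpest form of Stirling that still admits a short argument and avoid losing constants unnecessarily in the subsequent algebra.
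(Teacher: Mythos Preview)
Your proof is correct and follows essentially the same strategy as the paper: reduce $M_k$ to a closed form involving a (near-)central binomial coefficient, then squeeze that coefficient with Stirling-type bounds. The differences are only in execution. You derive the identity $M_k=\frac{k}{2^k}\binom{k-1}{(k-1)/2}$ from scratch, whereas the paper quotes the equivalent form $M_k=2^{-k}\frac{k+1}{2}\binom{k}{(k-1)/2}$ from the literature; and for the Stirling step you use the standard Robbins bounds on $\binom{2n}{n}$, whereas the paper invokes a sharper but less well-known estimate of Stanic\u{a} on the slightly off-center coefficient $\binom{k}{(k-1)/2}$. Your version is thus a bit more self-contained and arguably cleaner (working with the genuinely central coefficient $\binom{2n}{n}$ avoids one layer of error terms), at no real cost. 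The endgame inequalities you check and the handling of $k=1$ are essentially the same in both proofs.
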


\begin{proof}
A closed form expression for the mean absolute deviation
of the binomial distribution with parameters $1/2$ and $k$
(where $k$ is odd) is known (see, for example, \cite{DZ91}):
\[M_k=2^{-k}\left(\frac{k+1}{2}\right)\binom{k}{(k-1)/2}.\]
To prove the result, we only need to bound the binomial
coefficient above sufficiently accurately.
We know that
\[\binom{k}{(k-1)/2}=r_k\sqrt{\frac{2}{\pi k}}2^k,\]
where $r_k$ is an error term close to $1$.
To prove the desired bounds, we need only show that
$r_k\ge k/(k+1)$ and $r_k\le 1$.

From \cite{Sta01} (Corollary 2.4, setting $n=1$, $m=k$, $p=(k-1)/2$),
we get
\[r_k=\alpha_k\left(1+\frac{1}{k^2-1}\right)^{k/2}\left(1-\frac{1}{k+1}\right),\]
where $\alpha_k$ satisfies
\[e^{1/12k-1/(6k-6)-1/(6k+6)}<\alpha_k<e^{1/12k-1/(6k-5)-1/(6k+7)}.\]
Note that using $k\ge 3$, we get $\alpha_k>e^{-7/24k}>e^{-1/3k}$,
and for all $k\ge 7$ (as well as checking $k=3,5$ by hand)
we get $\alpha_k<e^{-1/4k}$. Using $e^{x/(1+x)}<1+x$, we get
the lower bound
\[r_k>e^{-7/24k}e^{1/2k}e^{-1/k}=e^{-19/24k}>1-19/24k>1-5/6k
=1-1/(k+k/5)\ge1-1/(k+1)\]
assuming $k\ge 5$. For $k=3$, we can calculate $r_3$ and check
it is larger than $3/4$, so $r_k>k/(k+1)$ for all $k\ge 3$.

For the upper bound, we use $k\ge 3$ to get
\[r_k<e^{-1/4}e^{9/16k}e^{-3/4k}=e^{-7/16k}<1.\]
Finally, the case $k=1$ can be verified directly, as $M_k=1/2$
in that case.
\end{proof}

Next, we note that it is clear $\gamma'$ and $\gamma$
have the same sign, and that the cases $\gamma>0$ and
$\gamma<0$ are symmetric. For this reason, we can
restrict to the $\gamma>0$ case without loss of generality.
We note that $\gamma'$ is the probability of $X=1$
minus the probability of $X=0$, so we have
\begin{align*}
\gamma'&=\sum_{i=(k+1)/2}^{k}\binom{k}{i}\left(\frac{1+\gamma}{2}\right)^i\left(\frac{1-\gamma}{2}\right)^{k-i}-\sum_{i=0}^{(k-1)/2}\binom{k}{i}\left(\frac{1+\gamma}{2}\right)^i\left(\frac{1-\gamma}{2}\right)^{k-i}\\
&=\sum_{i=0}^{(k-1)/2}\binom{k}{i}\left[\left(\frac{1+\gamma}{2}\right)^{k-i}\left(\frac{1-\gamma}{2}\right)^i-\left(\frac{1+\gamma}{2}\right)^i\left(\frac{1-\gamma}{2}\right)^{k-i}\right]\\
&=2^{-k}\sum_{i=0}^{(k-1)/2}\binom{k}{i}(1-\gamma^2)^i
[(1+\gamma)^{k-2i}-(1-\gamma)^{k-2i}].  
\end{align*}

\subsection{The lower bound}

Note that $(1+\gamma)^x-(1-\gamma)^x\ge 2\gamma x$ for all
$\gamma\in[0,1/3]$ and all positive integer $x$.
To see this, observe that
they are equal when $\gamma=0$, and the derivative of the
left hand side (with respect to $\gamma$) is
$x(1+\gamma)^{x-1}+x(1-\gamma)^{x-1}$, which we just need to show
is larger than $2x$ for positive integer $x$. This clearly holds
for $x=1$ and $x=2$, so suppose $x\ge 3$. It suffices to show
$(1+\gamma)^{x-1}-1\ge 1-(1-\gamma)^{x-1}$. The two sides
are equal at $\gamma=0$, and when $\gamma> 0$, the derivative
of the left is larger than that of the right. Hence the inequality
holds.

Together with $(1-\gamma^2)^i\ge (1-\gamma^2)^{k/2}\ge 1-k\gamma^2/2$,
this gives us
\[\gamma'\ge 2^{1-k}(1-k\gamma^2/2)\gamma
\sum_{i=0}^{(k-1)/2}\binom{k}{i}(k-2i)=2\gamma M_k(1-k\gamma^2/2).\]
Using $M_k\ge\sqrt{k/2\pi}$, we get
\[\gamma'\ge \sqrt{\frac{2}{\pi}}\sqrt{k}\gamma(1-\gamma^2k/2).\]
Finally, since $k\le1/\gamma^2$, we get
\[\gamma'\ge\frac{1}{\sqrt{2\pi}}\sqrt{k}\gamma \ge \frac13\sqrt{k}\gamma.\]

\subsection{The upper bound}

We have for any real number $a$ between $0$ and $(k-1)/2$,
\[\gamma'\le 2^{-k}\sum_{i=0}^{a}\binom{k}{i}(1-\gamma^2)^i
[(1+\gamma)^{k-2i}-(1-\gamma)^{k-2i}]
+2^{-k}\sum_{i=a}^{(k-1)/2}\binom{k}{i}(1-\gamma^2)^i
[(1+\gamma)^{k-2i}-(1-\gamma)^{k-2i}],\]
where if $a$ is not an integer the former sum ends at its floor
and the latter starts at its ceiling. We upper bound these two
sums separately (and choose $a$ later).
Denote the first sum by $S_1$ and the second by $S_2$.

For $S_1$ we omit the $(1-\gamma)^{k-2i}$
term and simplify, writing
\[S_1\le \sum_{i=0}^a\binom{k}{i}
\left(\frac{1-\gamma}{2}\right)^i\left(\frac{1+\gamma}{2}\right)^{k-i}.\]
This is the probability that a Binomial random variable with parameters
$(1-\gamma)/2$ and $k$ is at most $a$. Using the Chernoff bound,
we get
\[S_1\le e^{-((1-\gamma)k-2a)^2/2}.\]

To upper bound $S_2$, we bound $(1-\gamma^2)^i$ by $1$,
and we write
\[(1+\gamma)^{k-2i}-(1-\gamma)^{k-2i}=\sum_{\ell=0}^{(k-1)/2-i}
\binom{k-2i}{2\ell+1}2\gamma^{2\ell+1}
\le 2\gamma(k-2i)\sum_{\ell=0}^{(k-1)/2-i} \frac{(\gamma(k-2i))^{2\ell}}{(2\ell+1)!}\]
\[\le 2\gamma(k-2i)\sum_{\ell=0}^{(k-1)/2-i} \frac{(\gamma(k-2i))^{2\ell}}{\ell!\;6^\ell}
\le 2\gamma(k-2i)e^{\gamma^2(k-2i)^2/6}.\]
Hence we have
\[S_2\le2\gamma e^{\gamma^2(k-2a)^2/6} 2^{1-k}
\sum_{i=a}^{(k-1)/2}\binom{k}{i}\left(\frac{k}{2}-i\right).\]
Note that
\[2^{1-k}\sum_{i=a}^{(k-1)/2}\binom{k}{i}\left(\frac{k}{2}-i\right)
\le 2^{1-k}\sum_{i=0}^{(k-1)/2}\binom{k}{i}\left(\frac{k}{2}-i\right)
=M_k\le\sqrt{\frac{k}{2\pi}}\left(1+\frac{1}{k}\right)
\le(3/5)\sqrt{k}\]
for $k\ge 3$. Thus, for $k\ge 3$, we have
\[\gamma'=S_1+S_2\le e^{-((1-\gamma)k-2a)^2/2}+(6/5)\gamma\sqrt{k} e^{\gamma^2(k-2a)^2/6}.\]
Recall that $a$ was arbitrary. Picking
$a= (1-\gamma)k/2-\sqrt{(1/2)\ln(1/\gamma)}$
will cause the first term above to be equal to $\gamma$. The
second term to become $(6/5)\sqrt{k}\gamma$ times
$e^{\gamma^2(\gamma k+\sqrt{2\ln(1/\gamma)})^2/6}$.
Using $(y+z)^2\le 2y^2+2z^2$, this last part is at most
$e^{(\gamma^4 k^2+2\gamma^2\ln(1/\gamma))/3}$.
Using $\gamma^4 k^2\le 1$ and
$2\gamma^2\ln(1/\gamma))/3\le (2\ln 3)/27$,
this expression evaluates to at most $1.6$,
and we get
\[\gamma'\le \gamma+2\sqrt{k}\gamma\le 3\sqrt{k}\gamma.\]

\section{On Samorodnitsky's theorem}
\label{app:SamProof}

\thm{noisyEntropy} as stated in \sec{Sam} is not found explicitly in~\cite{Sam16} but it follows directly from the following variant of the theorem as established by Polyanskiy and Wu~\cite{PW17}.

\begin{theorem}[Theorem 20 in~\cite{PW17}]
\label{thm:PWS}
Consider the Bayesian network 
\[
U \to X^n \to Y^n,
\]
where $P_{Y^n|X^n} = \prod_{i=1}^n P_{Y_i|X_i}$ is a memoryless channel with $\eta_i := \eta_{\mathrm{KL}}(P_{Y_i|X_i})$. Then we have
\[
I(U; Y^n) \le I(U; X_S \mid S) = I(U; X_S, S),
\]
where $S \perp\!\!\!\!\perp (U,X^n,Y^n)$ is a random subset of $[n]$ generated by independently sampling each element $i$ with probability $\eta_i$.
\end{theorem}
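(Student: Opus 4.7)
The plan is to prove \thm{PWS} by (i) establishing single-letter dominance of each channel $P_{Y_i|X_i}$ by the erasure channel $E_{\eta_i}$ that retains $X_i$ with probability $\eta_i = \eta_{\mathrm{KL}}(P_{Y_i|X_i})$ and outputs a dummy symbol $*$ otherwise; (ii) tensorizing this dominance across the $n$ coordinates; and (iii) identifying the product erasure output with the random-subset sampling $(X_S,S)$ in the theorem statement.

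For step (i), I would use the identity $I(U; E_\eta(X)) = \eta\, I(U;X)$. This follows by writing $E_\eta(X) = (J, JX)$ with $J\sim\Bernoulli(\eta)$ independent of $(U,X)$, so that by the chain rule $I(U; J, JX) = I(U;J) + I(U; JX \mid J) = 0 + \eta \cdot I(U;X) + (1-\eta)\cdot 0$. Combining this identity with the defining strong data-processing inequality (SDPI) of the KL contraction coefficient, $I(U;Y_i)\le \eta_i\, I(U;X_i)$ valid for every Markov chain $U\to X_i\to Y_i$, yields the single-letter dominance $I(U; Y_i) \le I(U; E_{\eta_i}(X_i))$; i.e.\ $P_{Y_i|X_i}$ is \emph{less noisy} than $E_{\eta_i}$.

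For step (ii), I would prove by induction on $n$ that less-noisy dominance tensorizes for independent product channels. Let $Y^n$ and $Z^n$ be the outputs of $\bigotimes_i P_{Y_i|X_i}$ and $\bigotimes_i E_{\eta_i}$, both applied coordinatewise to $X^n$. A hybrid argument swaps the outputs one coordinate at a time and shows
\[
I(U; Z_{<i},\, Y_{\ge i}) \;\ge\; I(U; Z_{\le i},\, Y_{>i})
\]
for each $i$, by applying the single-letter less-noisy property of the $i$-th coordinate with the enlarged auxiliary $(U, Z_{<i}, Y_{>i})$. The memoryless (product) structure of the channel guarantees the Markov chain $(U, Z_{<i}, Y_{>i}) \to X_i \to (Y_i, Z_i)$, which is what is needed to invoke the single-letter dominance conditionally. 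Telescoping the $n$ hybrid inequalities gives $I(U; Y^n) \le I(U; Z^n)$. For step (iii), the product erasure output $Z^n$ is informationally equivalent to the pair $(X_S, S)$, where $S \subseteq [n]$ is the (random) set of non-erased coordinates and each $i \in S$ independently with probability $\eta_i$; and since $S$ is independent of $U$, $I(U; X_S, S) = I(U;S) + I(U; X_S \mid S) = I(U; X_S \mid S)$, delivering the final equality.

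The main obstacle is step (ii), the tensorization of the less-noisy preorder. The hybrid argument requires, at each swap, verification that the current hybrid auxiliary $(U, Z_{<i}, Y_{>i})$ together with $X_i$ forms a legitimate source-channel pair for the single-letter less-noisy inequality. The key technical point is that one must condition on a \emph{mixture} of already-swapped erasure outputs and not-yet-swapped noisy outputs without breaking the required Markov property; this is where the memorylessness assumption on the product channel is essential, since it ensures that given $X_i$, all other coordinates' outputs ($Y_j$ for $j\ne i$ and $Z_j$ for $j\ne i$) are conditionally independent of $(Y_i, Z_i)$. Making this hybrid replacement rigorous, and confirming that the conditional SDPI corresponding to the less-noisy relation survives the averaging over the enlarged auxiliary, is the substantive step in the argument.
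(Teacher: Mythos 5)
The paper does not prove this statement: \thm{PWS} is Theorem~20 of Polyanskiy and Wu~\cite{PW17}, cited as a black box (the paper's \app{SamProof} only derives \thm{noisyEntropy} \emph{from} it). So there is no in-paper proof to compare against; what you have written is a reconstruction of the argument in~\cite{PW17}, and it is essentially the approach taken there: single-letter dominance of $P_{Y_i|X_i}$ by $E_{\eta_i}$ via the SDPI, then tensorization of the less-noisy preorder, then identification of product erasure with the random-subset observation $(X_S,S)$.

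Your reconstruction is correct. The one substantive point worth checking carefully is the Markov condition in the hybrid step, and you have it right: for the joint $P(U,X^n)\prod_j W_j(Y_j|X_j)\prod_j W_j'(Z_j|X_j)$, marginalizing out the unobserved coordinates factors the joint as
\[
P(U,X_i,Y_i,Z_i,V) \;=\; W_i(Y_i\mid X_i)\,W_i'(Z_i\mid X_i)\cdot Q(U,X_i,V),
\qquad V=(Z_{<i},Y_{>i}),
\]
so conditioning on $V=v$ preserves both the Markov structure $(U\to X_i\to(Y_i,Z_i))$ and the per-coordinate channel laws, which is exactly what is needed to invoke the single-letter less-noisy inequality conditionally on each $v$ and then average. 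One typo: the displayed hybrid inequality should read
\[
I(U;\,Z_{<i},\,Y_{\ge i}) \;\le\; I(U;\,Z_{\le i},\,Y_{>i}),
\]
i.e.\ replacing $Y_i$ by the less-noisy $Z_i$ can only \emph{increase} the information, not decrease it; you wrote the inequality reversed, though your surrounding text and the conclusion (telescoping to $I(U;Y^n)\le I(U;Z^n)$) make clear you intend the correct direction.
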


For completeness, we show how \thm{PWS} implies \thm{noisyEntropy}, restated below.

\NoisyEntropy*

\begin{proof}
Fix any partial function $f : \{0,1\}^n \to \{0,1,*\}$, any distributions $\mu_0$ on $f^{-1}(0)$ and $\mu_1$ on $f^{-1}(1)$, and parameter $p \in [0,1]$. Let $\mu = p \mu_0 + (1-p)\mu_1$.

Define $U$ to be the random variable on $\{0,1\}$ for which $\Pr[ U = 0 ] = p$. Define $X$ to be a random variable drawn from $\mu_U$. And define $Y = N_\rho(X)$ to be the random variable obtained by applying the noise operator independently to each coordinate of $X \in \{0,1\}^n$. Then $U, X, Y$ satisfy the conditions of Theorem~\ref{thm:PWS} and the identity $U = f(X)$, so
\[
I( f(X); Y ) \le I( f(X); X_S \mid S) = I( f(X) ; Z)
\]
when $Z$ is obtained from $X$ by erasing each coordinate of $X$ independently with probability $1-\rho^2$. Thus,
\[
H( f(X) \mid Y) = H(f(X)) - I( f(X); Y) \ge H(f(X)) - I( f(X); Z) = H(f(X) \mid Z). \qedhere
\]
\end{proof}

\newpage

\end{fulltext}

\iffocs
  \nocite{Aar08,BdW02,DZ91,Fel57,HR70,KT16,MCAL17,Sha03,Sta01,Top00}
\fi

\iffocs
\else
\phantomsection\addcontentsline{toc}{section}{References} %Adds References to TOC
\renewcommand{\UrlFont}{\ttfamily\small}
\let\oldpath\path
\renewcommand{\path}[1]{\small\oldpath{#1}}
\fi
%\bibliography{Comp}
\emergencystretch=1em % Helps remove overfull hboxes in bibliography
\printbibliography

\end{document}